\newtheorem{theorem}{Theorem}[section]
\newtheorem{proposition}[theorem]{Proposition}
\newtheorem{corollary}[theorem]{Corollary}
\theoremstyle{definition}
\newtheorem{definition}[theorem]{Definition}
\newtheorem{remark}[theorem]{Remark}
\newcommand{\Tr}{\operatorname{Tr}}
\newcommand{\id}{\operatorname{id}}
\begin{document}
\date{2018-3-13}
\title{Balance between quantum Markov semigroups}
\author{Rocco Duvenhage and Machiel Snyman}
\address{Department of Physics\\
University of Pretoria\\
Pretoria 0002\\
South Africa}
\email{rocco.duvenhage@up.ac.za}
\keywords{balance; quantum Markov semigroups; completely positive maps; quantum detailed
balance; joinings; correspondences; entangled states; non-equilibrium.}

\begin{abstract}
The concept of balance between two state preserving quantum Markov semigroups
on von Neumann algebras is introduced and studied as an extension of
conditions appearing in the theory of quantum detailed balance. This is partly
motivated by the theory of joinings. Balance is defined in terms of certain
correlated states (couplings), with entangled states as a specific case. Basic
properties of balance are derived and the connection to correspondences in the
sense of Connes is discussed. Some applications and possible applications,
including to non-equilibrium statistical mechanics, are briefly explored.

\end{abstract}
\maketitle
\tableofcontents

\section{Introduction}

Motivated by quantum detailed balance, we define and study the notion of
balance between pairs of quantum Markov semigroups on von Neumann algebras,
where each semigroup preserves a faithful normal state. Ideas related to
quantum detailed balance continue to play an important role in studying
certain aspects of non-equilibrium statistical mechanics, in particular
non-equilibrium steady states. See for example \cite{AFQ}, \cite{AFQ2} and
\cite{AI}. A theory of balance as introduced here, is therefore potentially
applicable to non-equilibrium statistical mechanics. In this paper, however,
we just lay the foundations by developing the basics of a theory of balance.
Non-equilibrium is only touched on.

The papers on quantum detailed balance that most directly lead to the work
presented in this paper are \cite{FR10}, \cite{FR}, \cite{FU} and \cite{DS}.
Of particular relevance are ideas connected to standard quantum detailed
balance and maximally entangled bipartite states. Standard quantum detailed
balance conditions were mentioned in \cite{DF}, but discussed and developed in
\cite{FU} and \cite{FR}. Connections to maximally entangled states were
discussed in \cite{FR10}, \cite{FR} and \cite{DS}. However, a number of other
papers develop ideas related to standard quantum detailed balance and
dualities, of which \cite{BQ}, \cite{BQ2} and \cite{MS} contributed to our
line of investigation.

The theory of balance can be viewed as being parallel to the theory joinings
for W*-dynamical systems. The latter was developed in \cite{D, D2, D3}, and
studied further in \cite{BCM}, for the case where the dynamics are given by
$\ast$-automorphism groups. Some aspects of noncommutative joinings also
appeared in \cite{SaT} and \cite{KLP} related to entropy, and in \cite{Fid}
related to certain ergodic theorems. In \cite{KS} results closely related to
joinings were presented regarding a coupling method for quantum Markov chains
and mixing times.

The theory of joinings is already a powerful tool in classical ergodic theory
(see the book \cite{G} for an exposition), which is what motivated its study
in the noncommutative case. Analogously, we expect a theory of balance between
quantum Markov semigroups to be of use in the study of such semigroups.

The definition of balance is given in Section \ref{afdBalDef}, along with
relevant mathematical background, in particular regarding the definition of a
dual of certain positive maps. Couplings of states on two von Neumann algebras
are also defined there, essentially being states on compound systems reducing
to the states of the individual systems.

In Section \ref{afdE} we show how couplings lead to unital completely positive
(u.c.p.) maps from one von Neumann algebra to another. Of central importance
in this regard, is the diagonal coupling of two copies of the same state. In
certain standard special cases of states on the algebra $B(\mathfrak{H})$,
with $\mathfrak{H}$ a finite dimensional or separable Hilbert space, the
diagonal coupling is the maximally entangled bipartite state compatible with
the single system states (see Subsection \ref{VbDin}), indicating a close
connection between these u.c.p. maps and entanglement. These u.c.p. maps and
diagonal couplings play a key role in developing the theory of balance. This
is related to \cite[Section 4]{BCM}, although in the latter, certain
assumptions involving modular groups are built into the framework, while
analogous assumptions are not made in the definition of balance leading to the
theory developed in this paper.

Section \ref{afdKar} gives a characterization of balance in terms of
intertwinement with the u.c.p. maps defined in Section \ref{afdE}. The role of
KMS-duals and the special case of KMS-symmetry are also briefly discussed in
the context of symmetry of balance. Two simple applications are then given to
illustrate the use of balance. One is to characterize an ergodicity condition
in a way analogous to the theory of joinings (Proposition \ref{ErgKar}). The
other is on the convergence of states to steady states in open quantum systems
and non-equilibrium statistical mechanics (Proposition \ref{neig}).

The development of the theory of balance continues in Section \ref{afdTrans},
where balance is shown to be transitive, using the composition of couplings.
The definition and properties of such compositions are treated in some detail.
The connection to correspondences in the sense of Connes is also discussed.
The connection of correspondences to joinings was already pointed out in
\cite{BCM} and \cite[Section 5]{KLP}.

Next, in Section \ref{fb}, we discuss a quantum detailed balance condition
(namely standard quantum detailed balance with respect to a reversing
operation, from \cite{FU} and \cite{FR}) in terms of balance. Based on this,
we briefly speculate on non-equilibrium steady states in the context of balance.

We turn to a simple example to illustrate a number of the ideas from this
paper in Section \ref{Vb}.

In the final section, possible further directions of study are mentioned.

\section{The definition of balance}\label{afdBalDef}

This section gives the definition of balance, but for convenience and
completeness also collects some related known results that we need in the
formulation of this definition as well as later on in the paper. Some of the
notation used in the rest paper is also introduced.

In this paper we consider systems defined as follows:

\begin{definition}
\label{stelsel}A \emph{system} $\mathbf{A}=(A,\alpha,\mu)$ consists of a
faithful normal state $\mu$ on a (necessarily $\sigma$-finite) von Neumann
algebra $A$, and a unital completely positive (u.c.p.) map 
$\alpha:A\rightarrow A$, such that $\mu\circ\alpha=\mu$.
\end{definition}

\begin{remark}
Note that we only consider a single u.c.p. map, since throughout the paper we
can develop the theory at a single point in time. This can then be applied to
a semigroup of u.c.p. maps by applying the definitions and results to each
element of the semigroup separately (also see Remarks \ref{kontTyd},
\ref{kontTyd2}, \ref{ktydKMSd} and \ref{ktydTh-KMSd}, Proposition \ref{neig},
and Section \ref{Vb}).
\end{remark}

In the rest of the paper the symbols $\mathbf{A}$, $\mathbf{B}$ and
$\mathbf{C}$ will denote systems $(A,\alpha,\mu)$, 
$\left(B,\beta,\nu\right)$ and $\left(C,\gamma,\xi\right)$ respectively. The unit of a
von Neumann algebra will be denoted by $1$. When we want to emphasize it is
the unit of, say, $A$, the notation $1_{A}$ will be used.

Without loss of generality, in this paper we always assume that these von
Neumann algebras are in the cyclic representations associated with the given
states, i.e. the cyclic representation of $(A,\mu)$ is of the form 
$(G_{\mu},\id_{A},\Lambda_{\mu})$, where $G_{\mu}$ is the Hilbert space,
$\id_{A}$ denotes the identity map of $A$ into $B(G_{\mu})$, and
$\Lambda_{\mu}$ is the cyclic and separating vector such that 
$\mu(a)=\left\langle \Lambda_{\mu},a\Lambda_{\mu}\right\rangle$.

The dynamics $\alpha$ of a system $\mathbf{A}$ is necessarily a contraction,
since it is positive and unital (see for example \cite[Proposition
II.6.9.4]{Bl}). Furthermore, $\alpha$ is automatically normal. This is due to
the following result:

\begin{theorem}
\label{normaal}Let $M$ and $N$ be von Neumann algebras on the Hilbert spaces
$H$ and $K$ respectively, and consider states on them respectively given by
$\mu(a)=\left\langle \Omega,a\Omega\right\rangle $ and 
$\nu(b)=\left\langle\Lambda,b\Lambda\right\rangle $, with $\Omega\in H$ and 
$\Lambda\in K$ cyclic vectors, i.e. $\overline{M\Omega}=H$ and $\overline{N\Lambda}=K$. 
Assume that $\nu$ is faithful and consider a positive linear (but not necessarily unital)
$\eta:M\rightarrow N$ such that
\[
\nu(\eta(a)^{\ast}\eta(a))\leq\mu(a^{\ast}a)
\]
for all $a\in M$. Then it follows that $\eta$ is normal, i.e. $\sigma$-weakly 
continuous.
\end{theorem}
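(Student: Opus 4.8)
The plan is to reduce the claim to two standard facts about von Neumann algebras: a positive linear map is normal exactly when it preserves suprema of bounded increasing nets of positive elements, and such nets converge $\sigma$-strongly to their suprema.

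First I would pass to the GNS Hilbert spaces. The hypothesis $\nu(\eta(a)^{\ast}\eta(a))\leq\mu(a^{\ast}a)$ says precisely that $\|\eta(a)\Lambda\|\leq\|a\Omega\|$ for every $a\in M$. Hence $a\Omega\mapsto\eta(a)\Lambda$ is a well-defined contraction on the dense subspace $M\Omega\subseteq H$: if $a\Omega=a'\Omega$, then applying the inequality to $a-a'$ and using linearity of $\eta$ gives $\eta(a)\Lambda=\eta(a')\Lambda$. It extends to a contraction $V\colon H\to K$ with $V(a\Omega)=\eta(a)\Lambda$ for all $a\in M$. Note that only linearity and the Schwarz-type inequality, not positivity, enter at this stage.

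Next I would run the monotone-net argument. Let $(a_{i})$ be a bounded increasing net in $M^{+}$ with supremum $a=\sup_{i}a_{i}\in M$. Since $\eta$ is positive, $(\eta(a_{i}))$ is increasing and dominated by $\eta(a)$, hence has a supremum $b=\sup_{i}\eta(a_{i})\leq\eta(a)$ in $N$; it suffices to show $b=\eta(a)$. Because bounded increasing nets of self-adjoint operators converge $\sigma$-strongly to their suprema, $a_{i}\Omega\to a\Omega$ in $H$ and $\eta(a_{i})\Lambda\to b\Lambda$ in $K$. Applying $V$ to the first convergence yields $\eta(a_{i})\Lambda=V(a_{i}\Omega)\to V(a\Omega)=\eta(a)\Lambda$, so $b\Lambda=\eta(a)\Lambda$. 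As $\nu$ is faithful, $\Lambda$ is separating for $N$, forcing $b=\eta(a)$. Thus $\eta$ preserves suprema of bounded increasing nets in $M^{+}$, which is exactly the normality ($\sigma$-weak continuity) of the positive map $\eta$.

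I do not expect a genuine obstacle; the work is the reduction above together with routine bookkeeping. The two points needing a little care are the well-definedness of $V$ (the one place the Schwarz-type hypothesis, as opposed to mere positivity, is used) and the correct appeal to the standard equivalence, for positive maps, between normality and preservation of bounded monotone suprema. If one prefers to avoid $V$ altogether, the same conclusion follows from the direct estimate $\|\eta(a)\Lambda-\eta(a_{i})\Lambda\|^{2}=\nu\big(\eta(a-a_{i})^{2}\big)\leq\mu\big((a-a_{i})^{2}\big)\leq\|a\|\,\big(\mu(a)-\mu(a_{i})\big)\to 0$, using normality of $\mu$, after which one argues exactly as above.
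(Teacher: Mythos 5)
The paper in fact omits its own proof of Theorem \ref{normaal} (it is stated with the remark that results of this type are well known), so there is nothing to compare against; judged on its own, your argument is correct and is exactly the kind of standard argument being alluded to. The Schwarz-type hypothesis gives the contraction $V\colon H\to K$, $a\Omega\mapsto\eta(a)\Lambda$ (well-definedness argued correctly), and for a bounded increasing net $(a_i)$ in $M^+$ with supremum $a$, strong convergence of bounded monotone nets to their suprema plus continuity of $V$ gives $\eta(a_i)\Lambda\to\eta(a)\Lambda$ and $\eta(a_i)\Lambda\to b\Lambda$ with $b=\sup_i\eta(a_i)$; faithfulness of $\nu$ makes $\Lambda$ separating for $N$, so $b=\eta(a)$, i.e.\ $\eta$ is monotone continuous. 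Your alternative direct estimate $\nu(\eta(a-a_i)^2)\le\mu((a-a_i)^2)\le\|a\|(\mu(a)-\mu(a_i))$ is equally valid (and even avoids using cyclicity of $\Omega$). The one step that is genuinely nonelementary is the final appeal to the equivalence, for positive maps between von Neumann algebras, of monotone continuity with $\sigma$-weak continuity; this reduces, by composing with normal positive functionals $\varphi\in N_*^+$ and using boundedness of the positive map $\eta$ on the unital algebra $M$, to the classical theorem that monotone-continuous positive functionals are $\sigma$-weakly continuous. Since you invoke that equivalence explicitly as a known result, the proof is complete as written and fully in keeping with the level of detail the paper itself deems appropriate.
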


Results of this type appear to be well known, so we omit the proof. This
result applies to a system $\mathbf{A}$, since from the Stinespring dilation
theorem \cite{Sti} one obtains Kadison's inequality 
$\alpha(a)^{\ast}\alpha(a)\leq\alpha(a^{\ast}a)$ for all $a\in A$, i.e. $\alpha$ 
is a Schwarz mapping; see for example \cite[Proposition II.6.9.14]{Bl}.

A central notion in our work is the dual of a system, defined as follows:

\begin{definition}
\label{duaalDef}The \emph{dual} of the system $\mathbf{A}$, is the system
$\mathbf{A}^{\prime}=(A',\alpha',\mu')$ where $A'$ is the commutant of $A$ 
(in $B(G_{\mu})$), $\mu'$ is the state on $A'$ given by 
$\mu'(a')=\left\langle\Lambda_{\mu},a'\Lambda_{\mu}\right\rangle$ 
for all $a^{\prime}\in A'$, and $\alpha^{\prime}:A^{\prime}\rightarrow A^{\prime}$ 
is the unique map such that
\[
\left\langle \Lambda_{\mu},a\alpha'(a')\Lambda_{\mu}\right\rangle 
=\left\langle \Lambda_{\mu},\alpha(a)a'\Lambda_{\mu}\right\rangle
\]
for all $a\in A$ and all $a'\in A'$.
\end{definition}

Note that in this definition we have
\[
\mu^{\prime}=\mu\circ j_{\mu}
\]
where
\begin{equation}
j_{\mu}:=J_{\mu}(\cdot)^{\ast}J_{\mu} \label{j}
\end{equation}
with $J_{\mu}$ the modular conjugation associated to $\mu$.

The dual of a system is well-defined because of the following known result:

\begin{theorem}
\label{duaalBestaan}Let $H$ and $K$ be Hilbert spaces, $M$ a (not necessarily
unital) $\ast$-subalgebra of $B(H)$, and $N$ a (not necessarily unital)
C*-subalgebra of $B(K)$. Let $\Omega\in H$ with $\left\|  \Omega\right\|  =1$
be cyclic for $M$, i.e. $M\Omega$ is dense in $H$, and let $\Lambda\in K$ be
any unit vector. Set
\[
\mu:M\rightarrow\mathbb{C}:a\mapsto\left\langle \Omega,a\Omega\right\rangle
\]
and
\[
\nu:N\rightarrow\mathbb{C}:b\mapsto\left\langle \Lambda,b\Lambda\right\rangle
.
\]
Consider any positive linear $\eta:M\rightarrow N$, i.e. for a positive
operator $a\in M$, we have that $\eta(a)$ is a positive operator. Assume
furthermore that
\[
\nu\circ\eta=\mu.
\]
Then there exists a unique map, called the \emph{dual} of $\eta$,
\[
\eta^{\prime}:N^{\prime}\rightarrow M^{\prime}
\]
such that
\[
\left\langle \Omega,a\eta^{\prime}(b^{\prime})\Omega\right\rangle
=\left\langle \Lambda,\eta(a)b^{\prime}\Lambda\right\rangle
\]
for all $a\in M$ and $b^{\prime}\in N^{\prime}$. The map $\eta^{\prime}$ is
necessarily linear, positive and unital, i.e. $\eta^{\prime}(1)=1$, and
$\left\|  \eta^{\prime}\right\|  =1$. Furthermore the following two results
hold under two different sets of additional assumptions:

(a) If $\eta$ is $n$-positive, then $\eta^{\prime}$ is $n$-positive as well.
In particular, if $\eta$ is completely positive, then $\eta^{\prime}$ is as well.

(b) If $M$ and $N$ contain the identity operators on $H$ and $K$ respectively,
and $\eta$ is unital (i.e. $\eta(1)=1$), then it follows that
\[
\mu^{\prime}\circ\eta^{\prime}=\nu^{\prime},
\]
where $\mu^{\prime}(a^{\prime}):=\left\langle \Omega,a^{\prime}\Omega
\right\rangle $ and $\nu^{\prime}(b^{\prime}):=\left\langle \Lambda,b^{\prime
}\Lambda\right\rangle $ for all $a^{\prime}\in M^{\prime}$ and $b^{\prime}\in
N^{\prime}$. If in addition $\Lambda$ is separating for $N^{\prime}$, then
$\eta^{\prime}$ is faithful in the sense that when 
$\eta^{\prime}(b^{\prime\ast}b^{\prime})=0$, it follows that $b^{\prime}=0$.
\end{theorem}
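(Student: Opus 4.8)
The plan is to construct $\eta'$ by hand, first on the positive cone of $N'$ via a sesquilinear form and a representation (Riesz) argument, then by linear extension, and afterwards to read off all the listed properties. The hypothesis $\nu\circ\eta=\mu$ will be used at a single point, to force boundedness of the relevant forms, and that is also the only place where ordinary positivity of $\eta$ (as opposed to $2$-positivity) must carry the argument.

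First I would fix $b'\in N'$ with $b'\ge 0$. Since $N'$, being a commutant, is a von Neumann algebra, the square root $(b')^{1/2}$ lies in $N'$. On the dense subspace $M\Omega\subseteq H$ define
\[
q_{b'}(a\Omega,c\Omega):=\langle \Lambda,\eta(a^{*}c)b'\Lambda\rangle,\qquad a,c\in M.
\]
Because $\eta(a^{*}a)\ge 0$ and $(b')^{1/2}$ commutes with $\eta(a^{*}a)\in N$, the diagonal is $q_{b'}(a\Omega,a\Omega)=\langle (b')^{1/2}\Lambda,\eta(a^{*}a)(b')^{1/2}\Lambda\rangle\ge 0$; rewriting this as $\|\eta(a^{*}a)^{1/2}(b')^{1/2}\Lambda\|^{2}$ and commuting once more, it is $\le\|b'\|\,\langle\Lambda,\eta(a^{*}a)\Lambda\rangle=\|b'\|\,\mu(a^{*}a)=\|b'\|\,\|a\Omega\|^{2}$, which is where $\nu\circ\eta=\mu$ enters. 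Cauchy--Schwarz for the positive form $q_{b'}$ then shows that $q_{b'}$ depends only on the vectors $a\Omega,c\Omega$ (not on the operators $a,c$) and that $|q_{b'}(a\Omega,c\Omega)|\le\|b'\|\,\|a\Omega\|\,\|c\Omega\|$, so $q_{b'}$ extends to a bounded positive form on $H$ and is represented by a unique $T_{b'}\in B(H)$ with $0\le T_{b'}\le\|b'\|\,1$. A direct check --- both $\langle c\Omega,T_{b'}ma\Omega\rangle$ and $\langle c\Omega,mT_{b'}a\Omega\rangle$ equal $\langle\Lambda,\eta(c^{*}ma)b'\Lambda\rangle$ --- shows $T_{b'}\in M'$. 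I would then set $\eta'(b'):=T_{b'}$ for $b'\ge 0$ and extend to $N'$ by linearity; independence of the decomposition into positive parts, the identity $\langle c\Omega,\eta'(b')a\Omega\rangle=\langle\Lambda,\eta(c^{*}a)b'\Lambda\rangle$ for $a,c\in M$ (which specializes, on taking $c=1$, to the relation in the statement), and uniqueness of $\eta'$ all follow at once, the last because this identity fixes every matrix element of $\eta'(b')$ on the dense set $M\Omega$.

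The elementary assertions are then quick. Linearity and positivity are built in; $\langle c\Omega,\eta'(1)a\Omega\rangle=\nu(\eta(c^{*}a))=\mu(c^{*}a)=\langle c\Omega,a\Omega\rangle$ forces $\eta'(1)=1$, and a positive unital map is a contraction, so $\|\eta'\|=1$. For part (a), suppose $[b'_{ij}]\ge 0$ in $M_n(N')$ and factor it as $D^{*}D$ with $D=[d'_{ki}]\in M_n(N')$ --- a square root, which exists because $M_n(N')$ is a von Neumann algebra --- so that $b'_{ij}=\sum_k (d'_{ki})^{*}d'_{kj}$. Then for any $a_1,\dots,a_n\in M$, using the identity above and commuting each $d'_{ki}\in N'$ past $\eta(a_i^{*}a_j)\in N$,
\[
\sum_{i,j}\langle a_i\Omega,\eta'(b'_{ij})a_j\Omega\rangle=\sum_k\sum_{i,j}\langle d'_{ki}\Lambda,\eta(a_i^{*}a_j)d'_{kj}\Lambda\rangle\ge 0,
\]
because for each fixed $k$ the inner sum equals $\langle\psi^{(k)},[\eta(a_i^{*}a_j)]\,\psi^{(k)}\rangle$ with $\psi^{(k)}=(d'_{k1}\Lambda,\dots,d'_{kn}\Lambda)\in K^{n}$, while $[\eta(a_i^{*}a_j)]\ge 0$ by $n$-positivity of $\eta$ (note $[a_i^{*}a_j]\ge 0$). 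Since $M\Omega$ is dense in $H$, this yields $[\eta'(b'_{ij})]\ge 0$ in $M_n(M')$, i.e. $\eta'$ is $n$-positive; the completely positive case is ``all $n$''.

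Finally, part (b): with $1_H\in M$, $1_K\in N$ and $\eta$ unital, putting $a=c=1$ in $\langle c\Omega,\eta'(b')a\Omega\rangle=\langle\Lambda,\eta(c^{*}a)b'\Lambda\rangle$ gives $\mu'(\eta'(b'))=\langle\Omega,\eta'(b')\Omega\rangle=\langle\Lambda,\eta(1)b'\Lambda\rangle=\langle\Lambda,b'\Lambda\rangle=\nu'(b')$, that is, $\mu'\circ\eta'=\nu'$; and if moreover $\Lambda$ is separating for $N'$ and $\eta'((b')^{*}b')=0$, then $\|b'\Lambda\|^{2}=\nu'((b')^{*}b')=\mu'(\eta'((b')^{*}b'))=0$, so $b'\Lambda=0$ and hence $b'=0$ as $b'\in N'$. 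I expect the main obstacle to be the construction step itself --- extracting a genuine bounded operator on $H$, which is exactly where $\nu\circ\eta=\mu$ is indispensable and where only plain positivity is available; part (a) is the most computational of the remaining points, but it becomes routine once one factors the positive matrix over the von Neumann algebra $N'$ and reads the outcome as a positivity statement on $K^{n}$.
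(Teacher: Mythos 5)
Your construction is the standard one; it is essentially the argument behind the references the paper cites for this theorem (Dixmier's lemma, as used in Accardi--Cecchini and Albeverio--H{\o}egh-Krohn), the paper itself giving no details beyond those citations. The boundedness and well-definedness of the form $q_{b'}$ (where $\nu\circ\eta=\mu$ enters), the verification that $T_{b'}\in M'$ with $0\le T_{b'}\le\|b'\|1$, the linear extension, the computation $\eta'(1)=1$ and $\|\eta'\|=1$, the matrix-factorization argument for part (a), and the computations in part (b) (where $1\in M$, $1\in N$ are assumed, so putting $a=c=1$ is legitimate) are all correct.

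There is, however, one genuine gap, precisely at the step you dispose of by ``taking $c=1$''. The theorem allows $M$ to be nonunital, so $c=1$ is not available in general, and what you have proved is only $\langle c\Omega,\eta'(b')a\Omega\rangle=\langle\Lambda,\eta(c^{*}a)b'\Lambda\rangle$ for $a,c\in M$, which encodes $\eta$ only on products $c^{*}a\in M^{2}$; the relation demanded by the theorem, $\langle\Omega,a\eta'(b')\Omega\rangle=\langle\Lambda,\eta(a)b'\Lambda\rangle$ for every $a\in M$, involves $\eta(a)$ itself. Trying to bridge this by approximating $\Omega$ with vectors $c_{i}\Omega$ requires $\langle\Lambda,\eta(ac_{i})b'\Lambda\rangle\to\langle\Lambda,\eta(a)b'\Lambda\rangle$, and no continuity of $\eta$ is assumed (it is merely positive, on a $\ast$-algebra that need not be norm closed or possess an approximate identity); positivity together with the domination $f\le\|b'\|\mu$ on $M^{+}$ constrains $f:=\langle\Lambda,\eta(\cdot)b'\Lambda\rangle$ only on $M^{2}$ and on the positive cone, which need not determine $f$ on all of $M$ without a further argument (for instance, if every positive element of $M$ lies in the span of $M^{2}\neq M$, these constraints say nothing about the missing directions). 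So in the stated generality the existence of $\eta'$ satisfying the displayed relation has not been established; in the unital case --- the only case the paper actually uses later, and the setting of the cited results --- your proof is complete. Note that your uniqueness claim does survive the nonunital setting, since the stated relation applied to $c^{*}a$ reproduces your two-variable identity, which fixes the operator on the dense subspace $M\Omega$; it is only the existence half that needs either an extra argument recovering $f(a)$ from its values on products, or an explicit reduction to the unital situation.
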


\begin{proof}
This is proven using \cite[Lemma 1 on p. 53]{Di}. See \cite[Proposition
3.1]{AC} and \cite[Theorem 2.1]{AH}.
\end{proof}

Strictly speaking one should say that $\eta^{\prime}$ is the dual of $\eta$
with respect to $\mu$ and $\nu$, but the states will always be implicitly clear.

In particular, with $M=N=A$ and $\Omega=\Lambda=\Lambda_{\mu}$, we see from
this theorem that the dual of the system $\mathbf{A}$ is well-defined.

\begin{remark}
\label{kontTyd}If instead of the single map $\alpha$ we have a semigroup of
u.c.p. maps $(\alpha_{t})_{t\geq0}$ leaving $\mu$ invariant, then $\alpha
_{t}^{\prime}\equiv(\alpha_{t})^{\prime}$ also gives a semigroup of u.c.p.
maps leaving $\mu^{\prime}$ invariant. The continuity or measurability
properties of this dual semigroup (as function of $t$) will depend on those of
$\alpha_{t}$. Consider for example the standard assumption made for
(continuous time) quantum Markov semigroups, namely that $t\mapsto\alpha
_{t}(a)$ is $\sigma$-weakly continuous for every $a\in A$. Then it can be
shown that $t\mapsto\varphi(\alpha_{t}^{\prime}(a^{\prime}))$ is continuous
for every $a^{\prime}\in A^{\prime}$ and every normal state $\varphi$ on
$A^{\prime}$, so $t\mapsto\alpha_{t}^{\prime}(a^{\prime})$ is $\sigma$-weakly
continuous for every $a^{\prime}\in A^{\prime}$. I.e. $(\alpha_{t}^{\prime
})_{t\geq0}$ is also a quantum Markov semigroup (with the same type of
continuity property). If we were to include these assumptions in our
definition of a system, then the dual of such a system would therefore still
be a system. Our example in Section \ref{Vb} will indeed be for semigroups
indexed by $t\geq0$, with even stronger continuity properties. Also, see for
example the dynamical flows considered in \cite{AH}, where weaker assumptions
are made.
\end{remark}

It is helpful to keep the following fact about duals in mind:

\begin{corollary}
\label{2deDuaal}If in addition to the assumptions in Theorem
\ref{duaalBestaan} (prior to parts (a) and (b)), we have that $M$ and $N$ are
von Neumann algebras, $\eta$ is unital and $\Lambda$ is cyclic for $N^{\prime}$, then we have
\[
\eta''=\eta.
\]
\end{corollary}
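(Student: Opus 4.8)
The plan is to invoke Theorem~\ref{duaalBestaan} a second time, this time applied to the map $\eta':N'\to M'$ in place of $\eta$, and then to identify the resulting double dual $\eta''$ with $\eta$ by comparing defining identities.

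First I would set up this second application. Interchanging the roles of $(M,\Omega)$ and $(N,\Lambda)$, the analogue of ``$\Omega$ cyclic'' becomes ``$\Lambda$ cyclic for $N'$'', which is precisely the extra hypothesis of the corollary, while the analogue of ``$\nu\circ\eta=\mu$'' becomes $\mu'\circ\eta'=\nu'$ (with $\mu'$, $\nu'$ the vector states of $\Omega$ on $M'$ and of $\Lambda$ on $N'$), which is exactly conclusion~(b) of Theorem~\ref{duaalBestaan} --- this is where unitality of $\eta$ and the fact that the von Neumann algebras contain the relevant identity operators are used. Theorem~\ref{duaalBestaan} then produces the dual $(\eta')'=:\eta''$ of $\eta'$, a positive unital map from $(M')'$ to $(N')'$, i.e.\ from $M$ to $N$ by the bicommutant theorem, characterized by
\[
\langle\Lambda,\,b'\,\eta''(a)\,\Lambda\rangle=\langle\Omega,\,\eta'(b')\,a\,\Omega\rangle
\qquad\text{for all }a\in M,\ b'\in N'.
\]

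Next I would massage the right-hand side. Since $\eta'(b')\in M'$ commutes with $a\in M$, it equals $\langle\Omega,a\,\eta'(b')\Omega\rangle$, which by the defining identity of $\eta'$ equals $\langle\Lambda,\eta(a)\,b'\Lambda\rangle$. Commuting $b'\in N'$ past both $\eta(a)\in N$ and $\eta''(a)\in N$ then gives $\langle b'^{\ast}\Lambda,\,(\eta''(a)-\eta(a))\Lambda\rangle=0$ for every $b'\in N'$. Because $\Lambda$ is cyclic for $N'$, the set $N'\Lambda$ is dense, so $(\eta''(a)-\eta(a))\Lambda=0$; and cyclicity of $\Lambda$ for $N'$ also makes $\Lambda$ separating for $N=(N')'$, whence $\eta''(a)=\eta(a)$. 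Since $a\in M$ is arbitrary, $\eta''=\eta$.

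There is no serious obstacle here; the one point demanding care is the bookkeeping in the second application of Theorem~\ref{duaalBestaan} --- keeping straight which vector now plays the role of the cyclic vector and which the arbitrary unit vector, and verifying that the hypothesis ``$\nu\circ\eta=\mu$'' of that theorem translates precisely into the already-available identity $\mu'\circ\eta'=\nu'$ rather than something stronger.
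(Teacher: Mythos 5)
Your proposal is correct and follows essentially the same route as the paper: apply Theorem \ref{duaalBestaan} a second time to $\eta'$ (using $\Lambda$ cyclic for $N'$ and $\mu'\circ\eta'=\nu'$ from part (b)), and then observe that $\eta$ satisfies the defining identity of $\eta''$. The only difference is cosmetic: the paper concludes directly from the uniqueness clause of the theorem, whereas you re-derive that uniqueness by hand via the density of $N'\Lambda$ and the separating property of $\Lambda$ for $N$.
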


\begin{proof}
This follows directly from the theorem itself, since $\eta'':M\rightarrow N$ 
is then the unique map such that 
$\left\langle\Lambda,b^{\prime}\eta^{\prime\prime}(a)\Lambda\right\rangle =
\left\langle\Omega,\eta^{\prime}(b^{\prime})a\Omega\right\rangle $ for all 
$a\in M$ and $b^{\prime}\in N^{\prime}$, while we know (again from the theorem) that
$\left\langle \Lambda,b^{\prime}\eta(a)\Lambda\right\rangle =
\left\langle\Omega,\eta^{\prime}(b^{\prime})a\Omega\right\rangle $ for all 
$a\in M$ and $b^{\prime}\in N^{\prime}$.
\end{proof}

We also record the following simple result:

\begin{proposition}
\label{joetaoj}If in Theorem \ref{duaalBestaan} we assume in addition that
$\mu$ and $\nu$ are faithful normal states on von Neumann algebras $M$ and $N$
(so $\Omega$ and $\Lambda$ are the corresponding cyclic and separating
vectors), then
\[
(j_{\nu}\circ\eta\circ j_{\mu})^{\prime}=j_{\mu}\circ\eta^{\prime}\circ
j_{\nu}
\]
for the map $j_{\nu}\circ\eta\circ j_{\mu}:M^{\prime}\rightarrow N^{\prime}$
obtained in terms of Eq. (\ref{j}).
\end{proposition}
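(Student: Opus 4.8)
The plan is to identify the left‑hand side by means of the uniqueness statement in Theorem \ref{duaalBestaan}, exactly as in the proof of Corollary \ref{2deDuaal}. Write $\zeta:=j_{\nu}\circ\eta\circ j_{\mu}$. Each of $j_{\mu}$ and $j_{\nu}$ is linear, positive and involutive, and restricts to a $\ast$-anti-isomorphism of $M'$ onto $M$ (resp.\ of $N'$ onto $N$) and of $M$ onto $M'$ (resp.\ of $N$ onto $N'$); hence $\zeta:M'\rightarrow N'$ is a positive linear map. Using $\mu'=\mu\circ j_{\mu}$, $\nu'=\nu\circ j_{\nu}$ (as recorded after Definition \ref{duaalDef}) and $j_{\nu}\circ j_{\nu}=\id$,
\[
\nu'\circ\zeta=\nu\circ j_{\nu}\circ j_{\nu}\circ\eta\circ j_{\mu}=\nu\circ\eta\circ j_{\mu}=\mu\circ j_{\mu}=\mu' .
\]
Since $M'$ and $N'$ are von Neumann algebras, $\Omega$ is a unit vector cyclic for $M'$ (being separating for $M$), and $\Lambda$ is a unit vector, Theorem \ref{duaalBestaan} applies with $M'$, $N'$, $\zeta$ in the roles of $M$, $N$, $\eta$, and produces the dual $\zeta':(N')'\rightarrow(M')'$, i.e.\ $\zeta':N\rightarrow M$ by the bicommutant theorem, characterized as the unique map satisfying
\[
\langle\Omega,a'\zeta'(b)\Omega\rangle=\langle\Lambda,\zeta(a')b\Lambda\rangle\qquad\text{for all }a'\in M',\ b\in N .
\]
Since $j_{\mu}\circ\eta'\circ j_{\nu}$ is likewise a map from $N$ to $M$, it suffices to verify that it satisfies this same defining relation.

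For the verification I would use three elementary facts: (i) $J_{\mu}\Omega=\Omega$ and $J_{\nu}\Lambda=\Lambda$; (ii) $j_{\mu}$ and $j_{\nu}$ are anti-multiplicative on all of $B(H)$ and $B(K)$, and the corresponding vector states are $j$-invariant, i.e.\ $\langle\Lambda,j_{\nu}(z)\Lambda\rangle=\langle\Lambda,z\Lambda\rangle$ for every $z\in B(K)$ (and likewise on $H$); (iii) the symmetry $\langle x,J_{\mu}y\rangle=\langle y,J_{\mu}x\rangle$, valid because $J_{\mu}$ is an antiunitary involution. Now fix $a'\in M'$ and $b\in N$, and set $b':=j_{\nu}(b)\in N'$, so that $b=j_{\nu}(b')$. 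Using that $\eta(j_{\mu}(a'))\in N$ commutes with $b'\in N'$ together with anti-multiplicativity of $j_{\nu}$, and then the $j$-invariance of the vector state of $\nu$,
\[
\langle\Lambda,\zeta(a')b\Lambda\rangle
=\langle\Lambda,j_{\nu}\!\big(\eta(j_{\mu}(a'))\,b'\big)\Lambda\rangle
=\langle\Lambda,\eta(j_{\mu}(a'))\,b'\Lambda\rangle .
\]
Applying the defining relation of $\eta'$ with $a=j_{\mu}(a')\in M$ and $b'\in N'$ rewrites the right-hand side as $\langle\Omega,j_{\mu}(a')\,\eta'(b')\Omega\rangle$. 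Finally, with $r:=\eta'(b')\in M'$ one has the identity
\[
\langle\Omega,j_{\mu}(a')\,r\,\Omega\rangle=\langle\Omega,a'\,j_{\mu}(r)\,\Omega\rangle ,
\]
obtained by moving $j_{\mu}(a')\in M$ past $r\in M'$, writing $j_{\mu}(a')\Omega=J_{\mu}(a')^{\ast}\Omega$ and $j_{\mu}(r)\Omega=J_{\mu}r^{\ast}\Omega$, and applying the symmetry in (iii). This turns the expression into $\langle\Omega,a'\,j_{\mu}(\eta'(j_{\nu}(b)))\Omega\rangle=\langle\Omega,a'\,(j_{\mu}\circ\eta'\circ j_{\nu})(b)\,\Omega\rangle$, which is what was needed, so $j_{\mu}\circ\eta'\circ j_{\nu}=\zeta'=(j_{\nu}\circ\eta\circ j_{\mu})'$.

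The argument is essentially bookkeeping. The only point requiring genuine care — the main obstacle, such as it is — is keeping precise track of which of the four algebras $M$, $M'$, $N$, $N'$ each operator lies in, so that the commutation relations and the defining properties of $J_{\mu}$ and $J_{\nu}$ (fixing $\Omega$ resp.\ $\Lambda$, and conjugating an algebra onto its commutant) are invoked in the correct places. No analytic input beyond Theorem \ref{duaalBestaan} and the standard properties of the modular conjugation is needed.
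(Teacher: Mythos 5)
Your proposal is correct and follows essentially the same route as the paper: the paper's proof simply asserts the "straightforward calculation" that $\left\langle \Omega,a^{\prime}\,j_{\mu}\circ\eta^{\prime}\circ j_{\nu}(b)\Omega\right\rangle =\left\langle \Lambda,j_{\nu}\circ\eta\circ j_{\mu}(a^{\prime})b\Lambda\right\rangle$ for all $a^{\prime}\in M^{\prime}$, $b\in N$, and concludes by the uniqueness in Theorem \ref{duaalBestaan}, which is exactly the identity you verify (with the preliminary checks that $\nu^{\prime}\circ(j_{\nu}\circ\eta\circ j_{\mu})=\mu^{\prime}$ and that $\Omega$ is cyclic for $M^{\prime}$ made explicit).
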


\begin{proof}
It is a straightforward calculation to show that
\[
\left\langle \Omega,a^{\prime}j_{\mu}\circ\eta^{\prime}\circ j_{\nu}
(b)\Omega\right\rangle =\left\langle \Lambda,j_{\nu}\circ\eta\circ j_{\mu
}(a^{\prime})b\Lambda\right\rangle
\]
for all $a^{\prime}\in M^{\prime}$ and $b\in N$.
\end{proof}

This proposition is related to KMS-duals and KMS-symmetry which appear in
Sections \ref{afdKar} and \ref{fb} via the following definition:

\begin{definition}
\label{KMS-duaalDef}The map $\eta^{\sigma}:=j_{\mu}\circ\eta^{\prime}\circ
j_{\nu}:N\rightarrow M$ in Proposition \ref{joetaoj} will be referred to as
the \emph{KMS-dual} of the positive linear map $\eta:M\rightarrow N$.
\end{definition}

Combining Corollary \ref{2deDuaal} and Proposition \ref{joetaoj}, we see that
\begin{equation}
(\eta^{\sigma})^{\sigma}=\eta. \label{2deKMS-d}
\end{equation}
Further remarks and references on the origins of KMS-duals can be found in
Section \ref{afdKar}.

Let us now finally turn to our main concern in this paper:

\begin{definition}
\label{balans}Let $\mu$ and $\nu$ be faithful normal states on the von Neumann
algebras $A$ and $B$ respectively. A \emph{coupling} of $(A,\mu)$ and
$(B,\nu)$, is a state $\omega$ on the algebraic tensor product $A\odot
B^{\prime}$ such that
\[
\omega(a\otimes1)=\mu(a)\text{ \ and \ }\omega(1\otimes b^{\prime}
)=\nu^{\prime}(b^{\prime})
\]
for all $a\in A$ and $b\in B^{\prime}$. We also call such an $\omega$ a
coupling of $\mu$ and $\nu$. Let $\mathbf{A}$ and $\mathbf{B}$ be systems. We
say that $\mathbf{A}$ and $\mathbf{B}$ (in this order) are in \emph{balance}
with respect to a coupling $\omega$ of $\mu$ and $\nu$, expressed in symbols
as
\[
\mathbf{A}\omega\mathbf{B},
\]
if
\[
\omega(\alpha(a)\otimes b^{\prime})=\omega(a\otimes\beta^{\prime}(b^{\prime}))
\]
for all $a\in A$ and $b^{\prime}\in B^{\prime}$.
\end{definition}

Notice that this definition is in terms of the dual $\mathbf{B}^{\prime}$
rather than in terms of $\mathbf{B}$ itself. To define balance in terms of
$\omega(\alpha(a)\otimes b)=\omega(a\otimes\beta(b))$, for $a\in A$ and $b\in
B$, turns out to be a less natural convention, in particular with regards to
transitivity (see Section \ref{afdTrans}). Also, strictly speaking, saying
that $\mathbf{A}$ and $\mathbf{B}$ are in balance, implies a direction, say
from $\mathbf{A}$ to $\mathbf{B}$. These points will become more apparent in
subsequent sections. For example, symmetry of balance will be explored in
Section \ref{afdKar} in terms of KMS-symmetry of the dynamics $\alpha$ and
$\beta$.

\begin{remark}
\label{kontTyd2}For systems given by quantum Markov semigroups $(\alpha
_{t})_{t\geq0}$ and $(\beta_{t})_{t\geq0}$, instead of a single map for each
system, we note that balance is defined by requiring $\omega(\alpha
_{t}(a)\otimes b^{\prime})=\omega(a\otimes\beta_{t}^{\prime}(b^{\prime}))$ at
every $t\geq0$.
\end{remark}

\begin{remark}
\label{bind}For comparison to the theory of joinings \cite{D, D2, D3}, note
that a joining of systems $\mathbf{A}$ and $\mathbf{B}$, with $\alpha$ and
$\beta$ $\ast$-automorphisms, is a state $\omega$ on $A\odot B$ such that
$\omega(a\otimes1)=\mu(a)$, $\omega(1\otimes b)=\nu(b)$ and $\omega
\circ(\alpha\odot\beta)=\omega$. In addition \cite{BCM} also assumes that
$\omega\circ(\sigma_{t}^{\mu}\odot\sigma_{t}^{\nu})=\omega$, where $\sigma
_{t}^{\mu}$ and $\sigma_{t}^{\nu}$ are the modular groups associated to $\mu$
and $\nu$. In \cite{BCM}, however, it is formulated in terms of the opposite
algebra of $B$, which is in that sense somewhat closer to the conventions used
above for balance.
\end{remark}

\section{Couplings and u.c.p. maps}\label{afdE}

Here we define and study a map $E_{\omega}$ associated to a coupling $\omega$.
This map is of fundamental importance in the theory of balance, as will be
seen in the next two sections. We do not consider systems in this section, only
couplings. At the end of Section \ref{afdTrans} we discuss how $E_{\omega}$
appears in the theory of correspondences. Some aspects of this section and the
next are closely related to \cite[Section 4]{BCM} regarding joinings (see
Remark \ref{bind}).

Let $\omega$ be a coupling of $(A,\mu)$ and $(B,\nu)$ as in Definition
\ref{balans}. To clarify certain points later on in this and subsequent
sections, we consider multiple (but necessarily unitarily equivalent) cyclic
representations of a given von Neumann algebra and state. This requires us to
have corresponding notations. We assume without loss of generality that
$(B,\nu)$ is in its cyclic representation, denoted here by $(G_{\nu
},\operatorname{id}_{B},\Lambda_{\nu})$, which means that $(G_{\nu
},\operatorname{id}_{B^{\prime}},\Lambda_{\nu})$ is a cyclic representation of
$(B^{\prime},\nu^{\prime})$. Similarly, we assume that $(A,\mu)$ is in the
cyclic representation $(G_{\mu},\operatorname{id}_{A},\Lambda_{\mu})$.

Denoting the cyclic representation of $(A\odot B^{\prime},\omega)$ by
$(H_{\omega},\pi_{\omega},\Omega_{\omega})$, we obtain a second cyclic
representation $(H_{\mu},\pi_{\mu},\Omega_{\mu})$ of $(A,\mu)$ by setting
\begin{equation}
H_{\mu}:=\overline{\pi_{\omega}(A\otimes1)\Omega_{\omega}}\text{, }\pi_{\mu
}(a):=\pi_{\omega}(a\otimes1)|_{H_{\mu}}\text{ and }\Omega_{\mu}
:=\Omega_{\omega} \label{Hmu}
\end{equation}
for all $a\in A$, since
\[
\left\langle \Omega_{\mu},\pi_{\mu}(a)\Omega_{\mu}\right\rangle =\left\langle
\Omega_{\omega},\pi_{\omega}(a\otimes1)\Omega_{\omega}\right\rangle
=\omega(a\otimes1)=\mu(a).
\]
Similarly
\begin{equation}
H_{\nu}:=\overline{\pi_{\omega}(1\otimes B^{\prime})\Omega_{\omega}}\text{,
}\pi_{\nu^{\prime}}(b^{\prime}):=\pi_{\omega}(1\otimes b^{\prime})|_{H_{\nu}
}\text{ and }\Omega_{\nu}:=\Omega_{\omega} \label{Hnu}
\end{equation}
for all $b^{\prime}\in B^{\prime}$, gives a second cyclic representation
$(H_{\nu},\pi_{\nu^{\prime}},\Omega_{\nu})$ of $(B^{\prime},\nu^{\prime})$. In
particular $H_{\mu}$ and $H_{\nu}$ are subspaces of $H_{\omega}$.

We can define a unitary equivalence
\begin{equation}
u_{\nu}:G_{\nu}\rightarrow H_{\nu} \label{GnuHnu}
\end{equation}
from $(G_{\nu},$id$_{B^{\prime}},\Lambda_{\nu})$ to $(H_{\nu},\pi_{\nu
^{\prime}},\Omega_{\nu})$ by
\[
u_{\nu}b^{\prime}\Lambda_{\nu}:=\pi_{\nu^{\prime}}(b^{\prime})\Omega_{\nu}
\]
for all $b^{\prime}\in B^{\prime}$. Then
\begin{equation}
\pi_{\nu^{\prime}}(b^{\prime})=u_{\nu}b^{\prime}u_{\nu}^{\ast} \label{pi_nu'}
\end{equation}
for all $b^{\prime}\in B^{\prime}$. By setting
\begin{equation}
\pi_{\nu}(b):=u_{\nu}bu_{\nu}^{\ast} \label{pi_nu}
\end{equation}
for all $b\in B$, we also obtain a second cyclic representation $(H_{\nu}
,\pi_{\nu},\Omega_{\nu})$ of $(B,\nu)$, which has the property
\[
\pi_{\nu}(B)^{\prime}=\pi_{\nu^{\prime}}(B^{\prime})
\]
as is easily verified.

Let
\[
P_{\nu}\in B(H_{\omega})
\]
be the projection of $H_{\omega}$ onto $H_{\nu}$.

\begin{proposition}
\label{Eformule}In terms of the notation above, we have
\[
u_{\nu}^{\ast}\iota_{H_{\nu}}^{\ast}\pi_{\omega}(a\otimes1)\iota_{H_{\nu}
}u_{\nu}=u_{\nu}^{\ast}P_{\nu}\pi_{\omega}(a\otimes1)u_{\nu}\in B
\]
for all $a\in A$, where $\iota_{H_{\nu}}:H_{\nu}\rightarrow H_{\omega}$ is the
inclusion map, and $\iota_{H_{\nu}}^{\ast}:H_{\omega}\rightarrow H_{\nu}$ its adjoint.
\end{proposition}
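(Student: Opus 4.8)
The plan is to first dispose of the identity between the two displayed expressions, which is pure bookkeeping. Since $P_{\nu}=\iota_{H_{\nu}}\iota_{H_{\nu}}^{\ast}$ and $\iota_{H_{\nu}}^{\ast}\iota_{H_{\nu}}=\id_{H_{\nu}}$, and reading $u_{\nu}$ in the second expression as the composite $\iota_{H_{\nu}}u_{\nu}:G_{\nu}\to H_{\omega}$, one has $(\iota_{H_{\nu}}u_{\nu})^{\ast}=u_{\nu}^{\ast}\iota_{H_{\nu}}^{\ast}$ and $u_{\nu}^{\ast}\iota_{H_{\nu}}^{\ast}P_{\nu}=u_{\nu}^{\ast}\iota_{H_{\nu}}^{\ast}$, so both sides equal the operator $T_{a}:=u_{\nu}^{\ast}\iota_{H_{\nu}}^{\ast}\pi_{\omega}(a\otimes 1)\iota_{H_{\nu}}u_{\nu}\in B(G_{\nu})$, the compression of $\pi_{\omega}(a\otimes 1)$ to $G_{\nu}$ through $u_{\nu}$. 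This $T_{a}$ is manifestly bounded; the real content is that $T_{a}\in B$, where $B\subseteq B(G_{\nu})$ acts via $\id_{B}$.

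By von Neumann's bicommutant theorem, $B=B''$, so it suffices to show that $T_{a}$ commutes with every $b'\in B'$. First I would record the relevant intertwining relations coming from the constructions in (\ref{Hnu})--(\ref{pi_nu}). Since $H_{\nu}=\overline{\pi_{\omega}(1\otimes B')\Omega_{\omega}}$ is invariant under the $\ast$-algebra $\pi_{\omega}(1\otimes B')$, both $H_{\nu}$ and $H_{\nu}^{\perp}$ are invariant under each $\pi_{\omega}(1\otimes b')$, whence $\pi_{\omega}(1\otimes b')\iota_{H_{\nu}}=\iota_{H_{\nu}}\pi_{\nu'}(b')$ and $\iota_{H_{\nu}}^{\ast}\pi_{\omega}(1\otimes b')=\pi_{\nu'}(b')\iota_{H_{\nu}}^{\ast}$. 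From $\pi_{\nu'}(b')=u_{\nu}b'u_{\nu}^{\ast}$ (Eq.\ (\ref{pi_nu'})) and unitarity of $u_{\nu}$ one gets $\pi_{\nu'}(b')u_{\nu}=u_{\nu}b'$ and $u_{\nu}^{\ast}\pi_{\nu'}(b')=b'u_{\nu}^{\ast}$. Finally, since $a\otimes 1$ and $1\otimes b'$ commute in $A\odot B'$ and $\pi_{\omega}$ is a $\ast$-homomorphism, $\pi_{\omega}(a\otimes 1)$ and $\pi_{\omega}(1\otimes b')$ commute.

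The computation of $T_{a}b'$ is then a chain of substitutions using these four facts in turn: push $b'$ through $u_{\nu}$ to produce $\pi_{\nu'}(b')$, move it through $\iota_{H_{\nu}}$ to turn it into $\pi_{\omega}(1\otimes b')$, slide it past $\pi_{\omega}(a\otimes 1)$ by commutativity, pull it back through $\iota_{H_{\nu}}^{\ast}$ as $\pi_{\nu'}(b')$ again, and finally through $u_{\nu}^{\ast}$ to recover $b'$; the outcome is $T_{a}b'=b'T_{a}$, so $T_{a}\in B''=B$, which completes the argument. I do not anticipate a genuine obstacle here: the only thing requiring care is the bookkeeping with $\iota_{H_{\nu}}$, $\iota_{H_{\nu}}^{\ast}$ and $P_{\nu}$, and keeping straight in which of the several cyclic representations each operator lives, together with the (routine) self-adjointness argument showing $H_{\nu}^{\perp}$ is $\pi_{\omega}(1\otimes B')$-invariant.
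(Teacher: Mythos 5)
Your proposal is correct and follows essentially the same route as the paper: the invariance of $H_{\nu}$ (and hence $H_{\nu}^{\perp}$) under the $\ast$-algebra $\pi_{\omega}(1\otimes B^{\prime})$ gives the commutation with $P_{\nu}$ (equivalently your intertwining relations with $\iota_{H_{\nu}}$), and then sliding $\pi_{\omega}(a\otimes1)$ past $\pi_{\omega}(1\otimes b^{\prime})$ places the compression in the commutant of $B^{\prime}$. The only cosmetic difference is that the paper runs the commutant argument on $H_{\nu}$, using $\pi_{\nu^{\prime}}(B^{\prime})^{\prime}=\pi_{\nu}(B)$, whereas you transport everything to $G_{\nu}$ via $u_{\nu}$ and invoke $B^{\prime\prime}=B$ directly — the same argument up to unitary conjugation.
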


\begin{proof}
Note that $P_{\nu}=\iota_{H_{\nu}}^{\ast}$, so indeed $u_{\nu}^{\ast}
\iota_{H_{\nu}}^{\ast}\pi_{\omega}(a\otimes1)\iota_{H_{\nu}}u_{\nu}=u_{\nu
}^{\ast}P_{\nu}\pi_{\omega}(a\otimes1)u_{\nu}$. We now show that this is in
$B$.

For any $b^{\prime}\in B^{\prime}$ we have $\pi_{\omega}(1\otimes b^{\prime
})H_{\nu}^{\bot}\subset H_{\nu}^{\bot}$, since $\pi_{\omega}(1\otimes
b^{\prime\ast})H_{\nu}\subset H_{\nu}$. It follows that $P_{\nu}\pi_{\omega
}(1\otimes b^{\prime})=\pi_{\omega}(1\otimes b^{\prime})P_{\nu}$. Therefore
\begin{align*}
P_{\nu}\pi_{\omega}(a\otimes1)|_{H_{\nu}}\pi_{\nu^{\prime}}(b^{\prime})  &
=P_{\nu}\pi_{\omega}(a\otimes1)\pi_{\omega}(1\otimes b^{\prime})|_{H_{\nu}}\\
&  =P_{\nu}\pi_{\omega}(1\otimes b^{\prime})\pi_{\omega}(a\otimes1)|_{H_{\nu}
}\\
&  =\pi_{\omega}(1\otimes b^{\prime})P_{\nu}\pi_{\omega}(a\otimes1)|_{H_{\nu}
}\\
&  =\pi_{\nu^{\prime}}(b^{\prime})P_{\nu}\pi_{\omega}(a\otimes1)|_{H_{\nu}}
\end{align*}
for all $a\in A$ and $b^{\prime}\in B^{\prime}$. So $P_{\nu}\pi_{\omega
}(a\otimes1)|_{H_{\nu}}\in\pi_{\nu^{\prime}}(B^{\prime})^{\prime}=\pi_{\nu
}(B)$. Hence $u_{\nu}^{\ast}P_{\nu}\pi_{\omega}(a\otimes1)u_{\nu}\in B$ by Eq.
(\ref{pi_nu}).
\end{proof}

This proposition proves part of the following result, which defines the
central object of this section, namely the map $E_{\omega}:A\rightarrow B$.

\begin{theorem}
\label{E-eienskap}In terms of the notation above we have the following
well-defined linear map
\begin{equation}
E_{\omega}:A\rightarrow B:a\mapsto u_{\nu}^{\ast}\iota_{H_{\nu}}^{\ast}
\pi_{\omega}(a\otimes1)\iota_{H_{\nu}}u_{\nu} \label{E}
\end{equation}
which is normal and completely positive. It has the following properties:
\[
E_{\omega}(1)=1
\]
\[
\left\|  E_{\omega}\right\|  =1
\]
\begin{equation}
\nu\circ E_{\omega}=\mu\label{nuE=mu}
\end{equation}
\end{theorem}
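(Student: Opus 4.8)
The plan is to verify the stated properties of $E_\omega$ one at a time, having already obtained from Proposition \ref{Eformule} that the formula lands in $B$ and hence that $E_\omega$ is well-defined as a linear map into $B$. First I would establish complete positivity and normality by exhibiting $E_\omega$ as a composition of maps whose positivity properties are standard. Indeed, $a\mapsto\pi_\omega(a\otimes 1)$ is a normal $\ast$-homomorphism $A\to B(H_\omega)$ (since $\pi_\omega$ restricted to $A\odot 1$ is a normal representation of $A$, normality following because $A$ is in its cyclic representation and $\omega|_{A\otimes 1}=\mu$ is normal, or by Theorem \ref{normaal}), hence completely positive and normal. Compression by the isometry $\iota_{H_\nu}u_\nu:G_\nu\to H_\omega$, i.e. $T\mapsto (\iota_{H_\nu}u_\nu)^\ast T(\iota_{H_\nu}u_\nu)$, is completely positive and normal. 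So $E_\omega$, being the composition of these, is normal and completely positive; and Proposition \ref{Eformule} guarantees the range is actually inside $B\subset B(G_\nu)$.

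Next I would check $E_\omega(1)=1$: since $\pi_\omega(1\otimes 1)=1_{H_\omega}$ and $\iota_{H_\nu}^\ast\iota_{H_\nu}=\operatorname{id}_{H_\nu}$, $u_\nu^\ast u_\nu=\operatorname{id}_{G_\nu}$, we get $E_\omega(1)=u_\nu^\ast\iota_{H_\nu}^\ast\iota_{H_\nu}u_\nu=1_{G_\nu}=1_B$. The norm statement $\|E_\omega\|=1$ then follows because $E_\omega$ is positive and unital, so it is a contraction by \cite[Proposition II.6.9.4]{Bl} (as already invoked in the paper), while $\|E_\omega(1)\|=1$ forces equality.

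For the state-preservation identity $\nu\circ E_\omega=\mu$, I would compute $\nu(E_\omega(a))=\langle\Lambda_\nu,E_\omega(a)\Lambda_\nu\rangle$ directly. Using $u_\nu\Lambda_\nu=\pi_{\nu'}(1)\Omega_\nu=\Omega_\nu=\Omega_\omega$ (from the definition of $u_\nu$ with $b'=1$) and the fact that $\Omega_\omega=\Omega_\nu\in H_\nu$ so that $\iota_{H_\nu}u_\nu\Lambda_\nu=\Omega_\omega$ and $\iota_{H_\nu}^\ast\Omega_\omega=\Omega_\omega$, one gets
\[
\nu(E_\omega(a))=\langle\Omega_\omega,\pi_\omega(a\otimes 1)\Omega_\omega\rangle=\omega(a\otimes 1)=\mu(a).
\]
I do not expect a genuine obstacle here; the one point deserving care is the verification that $a\mapsto\pi_\omega(a\otimes 1)$ is normal (equivalently that $\pi_\omega|_{A\odot 1}$ is a normal representation), which is where I would be most careful to cite the right earlier result rather than to hand-wave — everything else is a short bookkeeping computation with the isometry $\iota_{H_\nu}u_\nu$ and the vector $\Omega_\omega$.
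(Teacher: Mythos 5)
Your treatment of complete positivity, unitality, the norm statement and $\nu\circ E_{\omega}=\mu$ coincides with the paper's: the same decomposition of $E_{\omega}$ as the compression of the $\ast$-homomorphism $a\mapsto\pi_{\omega}(a\otimes1)$ by the isometry $\iota_{H_{\nu}}u_{\nu}$, and the same computation $\nu(E_{\omega}(a))=\left\langle \Omega_{\omega},\pi_{\omega}(a\otimes1)\Omega_{\omega}\right\rangle =\omega(a\otimes1)=\mu(a)$. Where you diverge is normality. The paper does not prove normality of $a\mapsto\pi_{\omega}(a\otimes1)$ at this stage; instead it applies Kadison's inequality to the unital completely positive contraction $E_{\omega}$ itself, obtaining $\nu(E_{\omega}(a)^{\ast}E_{\omega}(a))\leq\nu(E_{\omega}(a^{\ast}a))=\mu(a^{\ast}a)$, and then invokes Theorem \ref{normaal} with $M=A$, $N=B$, $\Omega=\Lambda_{\mu}$, $\Lambda=\Lambda_{\nu}$ (both cyclic, $\nu$ faithful). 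That argument is self-contained once Eq. (\ref{nuE=mu}) is in hand, which you have already established.

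Your alternative route, normality of $a\mapsto\pi_{\omega}(a\otimes1)$ followed by normality of compression, is workable in principle, and the needed fact is exactly what Remark \ref{KorNor} records, with a citation to the proof of \cite[Theorem 3.3]{BCM}. But the justifications you sketch for it do not hold up as stated: $\Omega_{\omega}$ is cyclic for $\pi_{\omega}(A\odot B^{\prime})$, not for $\pi_{\omega}(A\otimes1)$, whose cyclic subspace generated by $\Omega_{\omega}$ is only $H_{\mu}$. So you cannot identify the representation $a\mapsto\pi_{\omega}(a\otimes1)$ on all of $H_{\omega}$ with the GNS representation of the normal state $\mu$, and Theorem \ref{normaal} does not apply either, since it requires a cyclic vector for the target algebra and a faithful vector state on it, neither of which is available for $\pi_{\omega}(A\otimes1)^{\prime\prime}\subset B(H_{\omega})$ in general. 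The genuine argument (as in \cite{BCM}) decomposes $H_{\omega}$ into cyclic subspaces generated by the vectors $\pi_{\omega}(1\otimes b^{\prime})\Omega_{\omega}$ and uses that $a\mapsto\omega(a\otimes b^{\prime\ast}b^{\prime})$ is dominated by $\left\|  b^{\prime}\right\|  ^{2}\mu$ and hence normal. Either supply (or cite) that argument, or, more economically, adopt the paper's route, which derives normality of $E_{\omega}$ directly from the Schwarz inequality and Theorem \ref{normaal} and sidesteps normality of $\pi_{\omega}(\cdot\otimes1)$ altogether.
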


\begin{proof}
The map $a\mapsto\pi_{\omega}(a\otimes1)$ is completely positive, since it is
a $\ast$-homomorphism. Therefore $E_{\omega}$ is completely positive, as it is
the composition of the completely positive maps $a\mapsto\pi_{\omega}
(a\otimes1)$, $\iota_{H_{\nu}}^{\ast}(\cdot)\iota_{H_{\nu}}$ and $u_{\nu
}^{\ast}(\cdot)u_{\nu}$.

From Eq. (\ref{E}) we have $E_{\omega}(1)=$ $u_{\nu}^{\ast}\iota_{H_{\nu}
}^{\ast}\iota_{H_{\nu}}u_{\nu}=1$ as well as $\left\|  E_{\omega}\right\|
\leq1$, thus it follows that $\left\|  E_{\omega}\right\|  =1$. Furthermore,
\[
\nu\circ E_{\omega}(a)=\left\langle \Lambda_{\nu},E_{\omega}(a)\Lambda_{\nu
}\right\rangle =\left\langle \Omega_{\omega},\pi_{\omega}(a\otimes
1)\Omega_{\omega}\right\rangle =\omega(a\otimes1)=\mu(a)
\]
for all $a\in A$.

Lastly, Kadison's inequality, $E_{\omega}(a)^{\ast}E_{\omega}(a)\leq
E_{\omega}(a^{\ast}a)$, holds, since $E_{\omega}$ is a completely positive
contraction, so $\nu(E_{\omega}(a)^{\ast}E_{\omega}(a))\leq\nu(E_{\omega
}(a^{\ast}a))=\mu(a^{\ast}a)$, for all $a\in A$. Hence, $E_{\omega}$ is
normal, due to Theorem \ref{normaal}.
\end{proof}

\begin{remark}
\label{KorNor}The map $a\mapsto\pi_{\omega}(a\otimes1)$ itself can also be
shown to be normal (see for example the proof of \cite[Theorem 3.3]{BCM}).
\end{remark}

We proceed by discussing some further general properties of $E_{\omega}$ which
will be useful for us later.

The map $E_{\omega}$ is closely related to the \emph{diagonal coupling} of
$\nu$ with itself, which we now define: Let
\[
\varpi_{B}:B\odot B^{\prime}\rightarrow B(G_{\nu})
\]
be the unital $\ast$-homomorphism defined by extending $\varpi_{B}(b\otimes
b^{\prime})=bb^{\prime}$ via the universal property of tensor products. Here
$B(G_{\nu})$ is the von Neumann algebra of all bounded linear operators
$G_{\nu}\rightarrow G_{\nu}$. Now set
\begin{equation}
\delta_{\nu}(d)=\left\langle \Lambda_{\nu},\varpi_{B}(d)\Lambda_{\nu
}\right\rangle \label{diagKop}
\end{equation}
for all $d\in B\odot B^{\prime}$. Then $\delta_{\nu}$ is a coupling of $\nu$
with itself, which we call the \emph{diagonal coupling for} $\nu$. In terms of
this coupling we have the following characterization of $E_{\omega}$ which
will often be used:

\begin{proposition}
\label{UnE}The map $E_{\omega}$ is the unique function from $A$ to $B$ such
that%
\[
\omega(a\otimes b^{\prime})=\delta_{\nu}(E_{\omega}(a)\otimes b^{\prime})
\]
for all $a\in A$ and $b^{\prime}\in B^{\prime}$.
\end{proposition}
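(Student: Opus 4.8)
The plan is to verify the identity $\omega(a\otimes b') = \delta_\nu(E_\omega(a)\otimes b')$ directly by unwinding the definitions, and then to deduce uniqueness from the fact that $\delta_\nu$ restricted to $B\otimes 1$ recovers $\nu$ together with the faithfulness/cyclicity built into the cyclic representation of $(B,\nu)$. First I would write $\delta_\nu(E_\omega(a)\otimes b') = \langle \Lambda_\nu, E_\omega(a)\,b'\,\Lambda_\nu\rangle$, using the definition of $\delta_\nu$ via $\varpi_B$. The right-hand side then involves $E_\omega(a) = u_\nu^*\iota_{H_\nu}^*\pi_\omega(a\otimes 1)\iota_{H_\nu}u_\nu$. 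The key computational step is to transport everything to $H_\nu\subset H_\omega$ via the unitary $u_\nu$: I would use $u_\nu b'\Lambda_\nu = \pi_{\nu'}(b')\Omega_\nu$ and $\pi_{\nu'}(b') = \pi_\omega(1\otimes b')|_{H_\nu}$ to rewrite $u_\nu b'\Lambda_\nu = \pi_\omega(1\otimes b')\Omega_\omega$, and similarly $u_\nu\Lambda_\nu = \Omega_\omega$ (since $\Lambda_\nu \mapsto \pi_{\nu'}(1)\Omega_\nu = \Omega_\omega$).

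With these substitutions, $\langle \Lambda_\nu, E_\omega(a)\,b'\,\Lambda_\nu\rangle = \langle u_\nu\Lambda_\nu,\ \iota_{H_\nu}^*\pi_\omega(a\otimes 1)\iota_{H_\nu}\, u_\nu b'\Lambda_\nu\rangle$ after moving $b'$ inside: here one must be careful that $b'$ commutes appropriately, but in fact $E_\omega(a)b' = E_\omega(a)\pi_\nu(\text{...})$ — more cleanly, I would instead compute $\langle \Lambda_\nu, E_\omega(a)b'\Lambda_\nu\rangle = \langle \Lambda_\nu, E_\omega(a) u_\nu^* \pi_\omega(1\otimes b')\Omega_\omega\rangle$ is awkward, so the better route is: $b'\Lambda_\nu$ has $u_\nu(b'\Lambda_\nu) = \pi_\omega(1\otimes b')\Omega_\omega$, hence
\[
\langle \Lambda_\nu, E_\omega(a) b'\Lambda_\nu\rangle = \langle \Omega_\omega,\ \pi_\omega(a\otimes 1)\,\pi_\omega(1\otimes b')\,\Omega_\omega\rangle
\]
once one checks the projections $\iota_{H_\nu}^*$ and the inclusions cancel against vectors already lying in $H_\nu$ (which both $\Omega_\omega$ and $\pi_\omega(1\otimes b')\Omega_\omega$ do, by definition \eqref{Hnu}). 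The right side equals $\omega((a\otimes 1)(1\otimes b')) = \omega(a\otimes b')$, which is exactly the desired identity. The one subtle point — the main obstacle — is justifying that $E_\omega(a)b'$ can be pulled through $u_\nu$ correctly, i.e. that $u_\nu E_\omega(a) u_\nu^* = P_\nu \pi_\omega(a\otimes 1)|_{H_\nu}$ commutes past $\pi_{\nu'}(b') = \pi_\omega(1\otimes b')|_{H_\nu}$ in the right way; this is essentially the content of the commutation $P_\nu \pi_\omega(1\otimes b') = \pi_\omega(1\otimes b')P_\nu$ already established in the proof of Proposition \ref{Eformule}, so I would invoke that.

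For uniqueness, suppose $F:A\to B$ also satisfies $\omega(a\otimes b') = \delta_\nu(F(a)\otimes b')$ for all $a,b'$. Subtracting gives $\delta_\nu\big((E_\omega(a)-F(a))\otimes b'\big) = 0$ for all $b'\in B'$, i.e. $\langle \Lambda_\nu, (E_\omega(a)-F(a))\,b'\,\Lambda_\nu\rangle = 0$ for all $b'\in B'$. Since $\Lambda_\nu$ is cyclic for $B'$ on $G_\nu$, the vectors $b'\Lambda_\nu$ are dense, so $(E_\omega(a)-F(a))\Lambda_\nu = 0$; and since $\Lambda_\nu$ is separating for $B$ (being cyclic for $B'$), this forces $E_\omega(a) = F(a)$. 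I expect the verification step to be routine once the bookkeeping of the three Hilbert spaces $G_\nu$, $H_\nu$, $H_\omega$ and the maps $u_\nu$, $\iota_{H_\nu}$, $P_\nu$ is set up carefully; the uniqueness half is immediate from separatingness of $\Lambda_\nu$.
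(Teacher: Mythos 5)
Your proposal is correct and follows essentially the same route as the paper: the identity is verified by transporting $b^{\prime}\Lambda_{\nu}$ and $\Lambda_{\nu}$ through $u_{\nu}$ to $\pi_{\omega}(1\otimes b^{\prime})\Omega_{\omega}$ and $\Omega_{\omega}\in H_{\nu}$ (so the commutation of $P_{\nu}$ with $\pi_{\omega}(1\otimes b^{\prime})$ you worry about is not actually needed), and uniqueness follows from density of $B^{\prime}\Lambda_{\nu}$ together with $\Lambda_{\nu}$ being separating for $B$, exactly as in the paper.
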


\begin{proof}
We simply calculate:
\begin{align*}
\delta_{\nu}(E_{\omega}(a)\otimes b^{\prime})  &  =\left\langle \Lambda_{\nu
},E_{\omega}(a)b^{\prime}\Lambda_{\nu}\right\rangle =\left\langle \Lambda
_{\nu},u_{\nu}^{\ast}P_{\nu}\pi_{\omega}(a\otimes1)u_{\nu}b^{\prime}
\Lambda_{\nu}\right\rangle \\
&  =\left\langle P_{\nu}\Omega_{\nu},\pi_{\omega}(a\otimes1)\pi_{\nu^{\prime}
}(b^{\prime})\Omega_{\nu}\right\rangle \\
&  =\left\langle \Omega_{\nu},\pi_{\omega}(a\otimes b^{\prime})\Omega_{\nu
}\right\rangle =\omega(a\otimes b^{\prime})
\end{align*}
for all $a\in A$ and $b^{\prime}\in B^{\prime}$. Secondly, suppose that for
some $b_{1},b_{2}\in B$ we have $\delta_{\nu}(b_{1}\otimes b^{\prime}
)=\delta_{\nu}(b_{2}\otimes b^{\prime})$ for all $b^{\prime}\in B^{\prime}$.
Then $\left\langle b_{1}^{\ast}\Lambda_{\nu},b^{\prime}\Lambda_{\nu
}\right\rangle =\left\langle b_{2}^{\ast}\Lambda_{\nu},b^{\prime}\Lambda_{\nu
}\right\rangle $ for all $b^{\prime}\in B^{\prime}$, so $b_{1}^{\ast}
\Lambda_{\nu}=b_{2}^{\ast}\Lambda_{\nu}$, since $B^{\prime}\Lambda_{\nu}$ is
dense in $G_{\nu}$. But $\Lambda_{\nu}$ is separating for $B$, hence
$b_{1}=b_{2}$. Therefore $E_{\omega}$ is indeed the unique function as stated.
\end{proof}

This has four simple corollaries:

\begin{corollary}
\label{unKop}If $\omega_{1}$ and $\omega_{2}$ are both couplings of $\mu$ and
$\nu$, then $\omega_{1}=\omega_{2}$ if and only if $E_{\omega_{1}}
=E_{\omega_{2}}$.
\end{corollary}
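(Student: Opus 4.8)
The plan is to deduce Corollary \ref{unKop} directly from Proposition \ref{UnE}, which identifies $E_\omega$ as the unique map $A\to B$ satisfying $\omega(a\otimes b')=\delta_\nu(E_\omega(a)\otimes b')$ for all $a\in A$, $b'\in B'$. The key observation is that a coupling $\omega$ of $\mu$ and $\nu$ is a state on the algebraic tensor product $A\odot B'$, hence is completely determined by its values on elementary tensors $a\otimes b'$ (by linearity and continuity of states). So the whole content reduces to relating the numbers $\omega(a\otimes b')$ to the map $E_\omega$.

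First I would prove the forward direction. If $\omega_1=\omega_2$, then in particular $\omega_1(a\otimes b')=\omega_2(a\otimes b')$ for all $a,b'$, so by Proposition \ref{UnE} applied to $\omega_2$ we have $\delta_\nu(E_{\omega_1}(a)\otimes b')=\omega_1(a\otimes b')=\omega_2(a\otimes b')=\delta_\nu(E_{\omega_2}(a)\otimes b')$ for all $a\in A$ and $b'\in B'$. The uniqueness clause in Proposition \ref{UnE} (more precisely, the separating-vector argument in its proof, showing that $b\mapsto\delta_\nu(b\otimes\cdot)$ determines $b\in B$) then forces $E_{\omega_1}(a)=E_{\omega_2}(a)$ for each $a$, i.e.\ $E_{\omega_1}=E_{\omega_2}$.

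Conversely, suppose $E_{\omega_1}=E_{\omega_2}$. Then for all $a\in A$ and $b'\in B'$,
\[
\omega_1(a\otimes b')=\delta_\nu(E_{\omega_1}(a)\otimes b')=\delta_\nu(E_{\omega_2}(a)\otimes b')=\omega_2(a\otimes b'),
\]
using Proposition \ref{UnE} for each coupling in turn. Since $\omega_1$ and $\omega_2$ agree on all elementary tensors, they agree on the span of elementary tensors, which is all of $A\odot B'$; hence $\omega_1=\omega_2$ as linear functionals.

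There is essentially no obstacle here: the corollary is a bookkeeping consequence of Proposition \ref{UnE}, and the only thing to be slightly careful about is invoking the right half of the uniqueness argument from that proposition (that $\delta_\nu(b_1\otimes b')=\delta_\nu(b_2\otimes b')$ for all $b'$ implies $b_1=b_2$, which uses that $B'\Lambda_\nu$ is dense in $G_\nu$ and $\Lambda_\nu$ is separating for $B$). If one prefers, the forward direction can be phrased without re-deriving this: $E_\omega$ is defined by Eq.\ (\ref{E}) purely in terms of the GNS data of $\omega$, so $\omega_1=\omega_2$ trivially gives $E_{\omega_1}=E_{\omega_2}$; I would mention this as the quicker route for that implication.
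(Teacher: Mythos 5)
Your argument is correct and is exactly the intended deduction from Proposition \ref{UnE}: the paper states this as an immediate corollary of that proposition, with the forward direction following from uniqueness (or directly from the GNS construction, as you note) and the converse from the defining identity $\omega(a\otimes b')=\delta_\nu(E_\omega(a)\otimes b')$ plus linearity on $A\odot B'$. Nothing further is needed.
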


\begin{corollary}
The map $E_{\omega}$ is faithful in the sense that if $E_{\omega}(a^{\ast
}a)=0$, then $a=0$.
\end{corollary}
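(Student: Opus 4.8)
The plan is to deduce faithfulness of $E_\omega$ directly from the intertwining relation $\nu\circ E_\omega=\mu$ recorded in Eq. (\ref{nuE=mu}), together with faithfulness of $\mu$. First I would suppose that $E_\omega(a^\ast a)=0$ for some $a\in A$. Applying the state $\nu$ to both sides gives $\nu(E_\omega(a^\ast a))=0$, and by Eq. (\ref{nuE=mu}) this says exactly that $\mu(a^\ast a)=0$. Since $\mu$ is assumed to be a faithful normal state on $A$ (indeed this is part of the standing hypotheses on couplings, via Definition \ref{balans}), it follows that $a^\ast a=0$, and hence $a=0$.

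Alternatively, one can phrase the same argument through Proposition \ref{UnE}: taking $b^\prime=1$ there yields $\omega(a^\ast a\otimes 1)=\delta_\nu(E_\omega(a^\ast a)\otimes 1)=\langle\Lambda_\nu,E_\omega(a^\ast a)\Lambda_\nu\rangle$, while the coupling property gives $\omega(a^\ast a\otimes 1)=\mu(a^\ast a)$; the vanishing of $E_\omega(a^\ast a)$ then forces $\mu(a^\ast a)=0$ and one concludes as before. There is essentially no obstacle here: the only point to be careful about is that faithfulness of $\mu$ is what is needed (not faithfulness of $\nu$, which is used elsewhere), and that $E_\omega$ being positive is what lets us speak of $E_\omega(a^\ast a)$ as a positive element so that applying $\nu$ is legitimate.
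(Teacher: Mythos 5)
Your argument is correct and is essentially the paper's own proof: the paper also deduces $\mu(a^{\ast}a)=\omega((a^{\ast}a)\otimes 1)=\delta_{\nu}(E_{\omega}(a^{\ast}a)\otimes 1)=0$ (which is just $\nu\circ E_{\omega}=\mu$ in the form of Proposition \ref{UnE}) and then invokes faithfulness of $\mu$. Your remark that it is faithfulness of $\mu$, not of $\nu$, that is needed is exactly the right point of care.
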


\begin{proof}
If $E_{\omega}(a^{\ast}a)=0$, then $\mu(a^{\ast}a)=\omega((a^{\ast}
a)\otimes1)=\delta_{\nu}(E_{\omega}(a^{\ast}a)\otimes1)=0$, but $\mu$ is
faithful, hence $a=0$.
\end{proof}

The latter also follows from Theorem \ref{duaalBestaan}(b) and $E_{\omega
}^{\prime\prime}=E_{\omega}$.

The next corollary is relevant when we consider cases of trivial balance, i.e.
balance with respect to $\mu\odot\nu^{\prime}$, and will be applied toward the
end of the next section, in relation to ergodicity:

\begin{corollary}
\label{trivkop}Let $\omega$ be a coupling of $(A,\mu)$ and $(B,\nu)$. If
$\omega=\mu\odot\nu^{\prime}$, then $E_{\omega}(a)=\mu(a)1_{B}$ for all $a\in
A$. Conversely, if $E_{\omega}(A)=\mathbb{C}1_{B}$, then $\omega=\mu\odot
\nu^{\prime}$.
\end{corollary}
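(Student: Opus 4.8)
The plan is to prove both implications using Proposition \ref{UnE}, which characterizes $E_\omega$ as the unique map satisfying $\omega(a\otimes b') = \delta_\nu(E_\omega(a)\otimes b')$ for all $a\in A$, $b'\in B'$.

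For the first implication, suppose $\omega = \mu\odot\nu'$. Then for all $a\in A$ and $b'\in B'$ we compute $\omega(a\otimes b') = \mu(a)\nu'(b')$. On the other hand, using that $\delta_\nu$ is a coupling (so $\delta_\nu(1\otimes b') = \nu'(b')$), we get $\delta_\nu\big((\mu(a)1_B)\otimes b'\big) = \mu(a)\,\delta_\nu(1\otimes b') = \mu(a)\nu'(b')$. Thus the map $a\mapsto \mu(a)1_B$ satisfies the defining property of $E_\omega$ in Proposition \ref{UnE}, and by uniqueness $E_\omega(a) = \mu(a)1_B$ for all $a\in A$.

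For the converse, suppose $E_\omega(A) = \mathbb{C}1_B$. Since $E_\omega(1) = 1$ by Theorem \ref{E-eienskap}, and $E_\omega$ is linear with range contained in $\mathbb{C}1_B$, for each $a\in A$ we can write $E_\omega(a) = \lambda(a)1_B$ for a scalar $\lambda(a)$; applying $\nu$ and using Eq. (\ref{nuE=mu}) gives $\lambda(a) = \nu(E_\omega(a)) = \mu(a)$, so $E_\omega(a) = \mu(a)1_B$. Then for all $a\in A$ and $b'\in B'$, Proposition \ref{UnE} gives $\omega(a\otimes b') = \delta_\nu(\mu(a)1_B\otimes b') = \mu(a)\nu'(b') = (\mu\odot\nu')(a\otimes b')$. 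Since these elementary tensors span $A\odot B'$, we conclude $\omega = \mu\odot\nu'$.

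I do not anticipate any real obstacle here; the statement follows almost immediately from Proposition \ref{UnE} together with the normalization properties of $E_\omega$ recorded in Theorem \ref{E-eienskap}. The only point requiring a moment's care is deducing, in the converse direction, that the scalar attached to $E_\omega(a)$ must equal $\mu(a)$ — but this is forced by $\nu\circ E_\omega = \mu$.
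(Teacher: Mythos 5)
Your proof is correct and follows essentially the same route as the paper: both directions rest on the uniqueness characterization of $E_{\omega}$ in Proposition \ref{UnE}, with the scalar in the converse pinned down by the normalization $\nu\circ E_{\omega}=\mu$ (the paper instead sets $b^{\prime}=1$ and uses $\omega(a\otimes 1)=\mu(a)$, which is the same computation in disguise). No gaps.
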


\begin{proof}
If $\omega=\mu\odot\nu^{\prime}$, then $E_{\omega}(a)=\mu(a)1_{B}$ follows
from Proposition \ref{UnE}. Conversely, again using Proposition \ref{UnE}, if
$E_{\omega}(A)=\mathbb{C}1_{B}$, then $\omega(a\otimes b^{\prime})1_{B}
=\delta_{\nu}(E_{\omega}(a)\otimes b^{\prime})1_{B}=E_{\omega}(a)\delta_{\nu
}(1\otimes b^{\prime})=E_{\omega}(a)\nu^{\prime}(b^{\prime})$. In particular,
setting $b^{\prime}=1$, $E_{\omega}(a)=\mu(a)1_{B}$, so $\omega=\mu\odot
\nu^{\prime}$.
\end{proof}

\begin{corollary}
\label{diagVsId}We have $\omega=\delta_{\nu}$ if and only if $E_{\omega
}=\id_{B}$.
\end{corollary}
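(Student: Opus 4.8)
The plan is to deduce this directly from Proposition \ref{UnE} and Corollary \ref{unKop}, so essentially no new work is needed. Note first that the statement only makes sense when the system $\mathbf{A}$ shares its state-algebra pair with $\mathbf{B}$, i.e. when $(A,\mu)=(B,\nu)$, so that $\delta_{\nu}$ is indeed a coupling of $\mu$ and $\nu$; this is the implicit standing assumption, and I would note it at the outset.

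The first step is to identify $E_{\delta_{\nu}}$. By Proposition \ref{UnE}, $E_{\delta_{\nu}}$ is the \emph{unique} map $A\rightarrow B$ satisfying $\delta_{\nu}(a\otimes b^{\prime})=\delta_{\nu}(E_{\delta_{\nu}}(a)\otimes b^{\prime})$ for all $a\in A$ and $b^{\prime}\in B^{\prime}$. Since $\id_{B}$ manifestly satisfies this relation, uniqueness forces $E_{\delta_{\nu}}=\id_{B}$. The corollary is then immediate from Corollary \ref{unKop}, applied to the two couplings $\omega_{1}=\omega$ and $\omega_{2}=\delta_{\nu}$: we get $\omega=\delta_{\nu}$ if and only if $E_{\omega}=E_{\delta_{\nu}}=\id_{B}$.

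If one prefers a self-contained argument not routed through Corollary \ref{unKop}, it also follows directly: if $E_{\omega}=\id_{B}$ then $\omega(a\otimes b^{\prime})=\delta_{\nu}(E_{\omega}(a)\otimes b^{\prime})=\delta_{\nu}(a\otimes b^{\prime})$ by Proposition \ref{UnE}, and since both $\omega$ and $\delta_{\nu}$ are linear on $A\odot B^{\prime}$, which is spanned by elementary tensors, $\omega=\delta_{\nu}$; conversely, if $\omega=\delta_{\nu}$ then $\delta_{\nu}(E_{\omega}(a)\otimes b^{\prime})=\omega(a\otimes b^{\prime})=\delta_{\nu}(a\otimes b^{\prime})$ for all $b^{\prime}\in B^{\prime}$, and the separating-vector argument used in the proof of Proposition \ref{UnE} then gives $E_{\omega}(a)=a$ for all $a\in A$. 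There is no real obstacle here, as the substantive content has already been isolated in Proposition \ref{UnE}; the only point requiring care is recording the assumption $(A,\mu)=(B,\nu)$ so that the diagonal coupling $\delta_{\nu}$ is available in the first place.
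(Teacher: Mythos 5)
Your proof is correct and follows exactly the route the paper intends: the corollary is stated without proof as an immediate consequence of Proposition \ref{UnE} (whose uniqueness clause gives $E_{\delta_{\nu}}=\id_{B}$, a fact the paper itself invokes later in Proposition \ref{idVirKring}) combined with Corollary \ref{unKop}. Your remark that the statement implicitly requires $(A,\mu)=(B,\nu)$ is a reasonable point of care, and the alternative self-contained argument is also fine.
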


Next we point out that u.c.p. maps from $A$ to $B$ with specific additional
properties can be used to define couplings:

\begin{proposition}
\label{kopUitE}Let $\mu$ and $\nu$ be faithful normal states on the von
Neumann algebras $A$ and $B$ respectively. Consider a linear map
$E:A\rightarrow B$ and define a linear functional $\omega_{E}:A\odot
B^{\prime}\rightarrow\mathbb{C}$ by
\[
\omega_{E}:=\delta_{\nu}\circ(E\odot\id_{B^{\prime}}),
\]
i.e.
\[
\omega_{E}(a\otimes b^{\prime})=\delta_{\nu}(E(a)\otimes b^{\prime})
\]
for all $a\in A$ and $b\in B^{\prime}$. Then $\omega_{E}$ is a coupling of
$\mu$ and $\nu$ if and only if $E$ is completely positive, unital and
$\nu\circ E=\mu$. In this case $E=E_{\omega_{E}}$.
\end{proposition}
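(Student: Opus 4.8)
The plan is to prove the biconditional by establishing each direction separately, and then to verify the final identity $E = E_{\omega_E}$ using Proposition \ref{UnE}.

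First I would dispose of the easy direction. Suppose $E$ is completely positive, unital and satisfies $\nu\circ E=\mu$. I need to check that $\omega_E$ is a state on $A\odot B'$ with the correct marginals. Since $\delta_\nu$ is a state and $E\odot\id_{B'}$ is a positive map (indeed completely positive, being a tensor product of completely positive maps) between the relevant $*$-algebras, $\omega_E=\delta_\nu\circ(E\odot\id_{B'})$ is a positive linear functional; it is normalized because $\omega_E(1\otimes1)=\delta_\nu(E(1)\otimes1)=\delta_\nu(1\otimes1)=1$ using unitality of $E$. For the marginals, $\omega_E(a\otimes1)=\delta_\nu(E(a)\otimes1)=\langle\Lambda_\nu,E(a)\Lambda_\nu\rangle=\nu(E(a))=\mu(a)$, and $\omega_E(1\otimes b')=\delta_\nu(E(1)\otimes b')=\delta_\nu(1\otimes b')=\langle\Lambda_\nu,b'\Lambda_\nu\rangle=\nu'(b')$. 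One subtlety worth a sentence: one should note that $\delta_\nu$ restricted to $\mathbf{1}\odot B'$ and to $B\odot\mathbf{1}$ indeed recovers $\nu'$ and $\nu$ respectively, which is immediate from \eqref{diagKop} and $\varpi_B(1\otimes b')=b'$, $\varpi_B(b\otimes1)=b$.

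For the converse, assume $\omega_E$ is a coupling of $\mu$ and $\nu$. From $\omega_E(1\otimes b')=\nu'(b')$ and $\omega_E(a\otimes1)=\mu(a)$ applied with $a=1$ we already get $\delta_\nu(E(1)\otimes b')=\nu'(b')=\delta_\nu(1\otimes b')$ for all $b'\in B'$; by the uniqueness argument in the proof of Proposition \ref{UnE} (namely $\delta_\nu(b_1\otimes b')=\delta_\nu(b_2\otimes b')$ for all $b'$ forces $b_1=b_2$, using that $B'\Lambda_\nu$ is dense and $\Lambda_\nu$ is separating for $B$) this gives $E(1)=1$, i.e. $E$ is unital. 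Also $\nu(E(a))=\delta_\nu(E(a)\otimes1)=\omega_E(a\otimes1)=\mu(a)$, so $\nu\circ E=\mu$. The remaining point is complete positivity of $E$, and this is the step I expect to be the main obstacle: I have positivity-type information only about the composite $\delta_\nu\circ(E\odot\id_{B'})$, and I must extract complete positivity of $E$ itself. The idea is that $\delta_\nu$ is a \emph{faithful} functional in an appropriate sense — more precisely, the GNS/cyclic representation of $B\odot B'$ with respect to $\delta_\nu$ is essentially the identity representation on $G_\nu$ via $\varpi_B$, with cyclic vector $\Lambda_\nu$ which is cyclic and separating. So I would argue: for $n\geq1$ and $a_{ij}\in A$, $b'_{ij}\in B'$, positivity of $\sum_{i,j}\bar\lambda_i\lambda_j\,\omega_E\big((a_i^*a_j)\otimes(b_i'^*b_j')\big)$-type expressions, i.e. positivity of $\omega_E$ on squares in $M_n(A\odot B')$, reads as $\langle\xi,(\varpi_B\circ(E\odot\id))[\text{positive element}]\,\xi\rangle$-type inequalities on $G_\nu$, and since $B'\Lambda_\nu$ is dense in $G_\nu$ one can let the $B'$-part of the test element range over a dense set of vectors, concluding that $\big(E(a_i^*a_j)\big)_{ij}\geq0$ in $M_n(B)$ whenever $\big(a_i^*a_j\big)_{ij}\geq 0$, which is exactly complete positivity of $E$. (Alternatively, and perhaps more cleanly, one notes that $E = E_{\omega_E}$ already — see below — and then complete positivity and normality of $E$ come for free from Theorem \ref{E-eienskap}; so one may prefer to prove the identity $E=E_{\omega_E}$ first and deduce complete positivity from it, reorganizing the logic so that the converse direction only needs $E(1)=1$ and $\nu\circ E=\mu$ to be checked by hand.)

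Finally, for the identity $E=E_{\omega_E}$ in the case that $\omega_E$ is a coupling: by Proposition \ref{UnE}, $E_{\omega_E}$ is characterized as the unique map $A\to B$ with $\omega_E(a\otimes b')=\delta_\nu(E_{\omega_E}(a)\otimes b')$ for all $a\in A$, $b'\in B'$. But by the very definition of $\omega_E$ we have $\omega_E(a\otimes b')=\delta_\nu(E(a)\otimes b')$ for all such $a,b'$, so $E$ satisfies the same defining relation; uniqueness in Proposition \ref{UnE} then gives $E=E_{\omega_E}$. This also shows that $E$ inherits normality and the norm-one property from Theorem \ref{E-eienskap}, and closes the proof.
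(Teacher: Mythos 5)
Your proof is correct, and its skeleton --- positivity of $\omega_E$ from complete positivity of $E$ together with positivity of $\delta_\nu$, the marginal computations, and the identification $E=E_{\omega_E}$ via the uniqueness statement of Proposition \ref{UnE} --- is the same as the paper's. The only divergence is in the converse: the paper disposes of it in one line by noting that $E=E_{\omega_E}$ (uniqueness in Proposition \ref{UnE}) and then citing Theorem \ref{E-eienskap}, so that unitality, $\nu\circ E=\mu$, complete positivity and normality of $E$ are all inherited from $E_{\omega_E}$; this is exactly the reorganization you yourself flag as the cleaner option. Your primary converse argument --- extracting complete positivity directly from positivity of $\omega_E$ on squares, i.e. writing $\omega_E(x^*x)=\sum_{i,j}\left\langle b_i'\Lambda_\nu, E(a_i^*a_j)b_j'\Lambda_\nu\right\rangle\ge 0$ for $x=\sum_i a_i\otimes b_i'$ and using density of $B'\Lambda_\nu$ in $G_\nu$ to conclude $(E(a_i^*a_j))_{ij}\ge 0$ in $M_n(B)$ --- is also valid, since every positive element of $M_n(A)$ is a sum of matrices of the form $(a_i^*a_j)_{ij}$, so this does amount to complete positivity of $E$. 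What this buys is a self-contained verification that does not lean on the GNS machinery behind $E_{\omega_E}$, making explicit where complete positivity comes from; the paper's route is shorter but leaves that mechanism implicit. Either way the statement is fully proved.
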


\begin{proof}
Consider a completely positive linear map $E:A\rightarrow B$. Then
$E\odot\operatorname{id}_{B^{\prime}}$ is positive, so $\omega_{E}$ is
positive, since $\delta_{\nu}$ is. If we furthermore assume that $E$ is
unital, then $\omega_{E}(1\otimes1)=1$, so $\omega_{E}$ is a state. Assuming
in addition that $\nu\circ E=\mu$, we conclude that $\omega_{E}(a\otimes
1)=\nu(E(a))=\mu(a)$ and $\omega_{E}(1\otimes b^{\prime})=$ $\nu^{\prime
}(b^{\prime})$, so $\omega_{E}$ is indeed a coupling of $\mu$ and $\nu$.
Because of Proposition \ref{UnE} we necessarily have $E=E_{\omega_{E}}$. The
converse is covered by Theorem \ref{E-eienskap} and Proposition \ref{UnE}.
\end{proof}

So in effect we can define couplings as maps $E$ of the form described in this proposition.

Lastly we study the dual $E_{\omega}^{\prime}$ of $E_{\omega}$, given by
Theorem \ref{duaalBestaan}. Given a coupling $\omega$ of $\mu$ and $\nu$, we
define
\[
\omega':=\delta_{\mu'}\circ(E_{\omega}'\odot\id_{A}):B'\odot A\rightarrow\mathbb{C}
\]
where
$\delta_{\mu'}(d'):=\left\langle\Lambda_{\mu},\varpi_{A'}(d')\Lambda_{\mu}\right\rangle$ 
for all
$d^{\prime}\in A^{\prime}\odot A$, i.e. 
$\delta_{\mu'}(a'\otimes a)=\left\langle \Lambda_{\mu},a'a\Lambda_{\mu}\right\rangle$. 
Since $E_{\omega}^{\prime}$ is a u.c.p. map, it then follows, using Theorem
\ref{duaalBestaan}, Proposition \ref{kopUitE} and Proposition \ref{UnE}, that
$\omega^{\prime}$ is a coupling of $\nu^{\prime}$ and $\mu^{\prime}$ such
that
\begin{equation}
\omega^{\prime}(b^{\prime}\otimes a)=\omega(a\otimes b^{\prime})
\label{kopAksent}
\end{equation}
for all $a\in A$ and $b^{\prime}\in B^{\prime}$.

\begin{proposition}
\label{Eomega'}In terms of the above notation we have
\[
E_{\omega}^{\prime}=E_{\omega^{\prime}}:B^{\prime}\rightarrow A^{\prime}
\]
and
\[
E_{\omega^{\prime}}(b^{\prime})=u_{\mu}^{\ast}\iota_{H_{\mu}}^{\ast}
\pi_{\omega}(1\otimes b^{\prime})\iota_{H_{\mu}}u_{\mu}%
\]
for all $b^{\prime}\in B^{\prime}$, where $u_{\mu}:G_{\mu}\rightarrow H_{\mu}$
is the unitary operator defined by
\[
u_{\mu}a\Lambda_{\mu}:=\pi_{\mu}(a)\Omega_{\mu}
\]
for all $a\in A$, $\iota_{H_{\mu}}:H_{\mu}\rightarrow H_{\omega}$ is the
inclusion map, and $\iota_{H_{\mu}}^{\ast}:H_{\omega}\rightarrow H_{\mu}$ its adjoint.
\end{proposition}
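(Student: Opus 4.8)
The plan is to treat the two assertions separately. The equality $E_{\omega}^{\prime}=E_{\omega^{\prime}}$ is essentially a repackaging of the argument sketched just above the proposition: by Theorem~\ref{duaalBestaan}, the map $E_{\omega}^{\prime}\colon B^{\prime}\to A^{\prime}$ is unital and completely positive (part (a), since $E_{\omega}$ is c.p.) and satisfies $\mu^{\prime}\circ E_{\omega}^{\prime}=\nu^{\prime}$ (part (b)); and by definition $\omega^{\prime}=\delta_{\mu^{\prime}}\circ(E_{\omega}^{\prime}\odot\id_{A})$, which is precisely the functional $\omega_{E_{\omega}^{\prime}}$ of Proposition~\ref{kopUitE}, now with $(B^{\prime},\nu^{\prime})$ and $(A^{\prime},\mu^{\prime})$ in the roles of $(A,\mu)$ and $(B,\nu)$ and with $(A^{\prime})^{\prime}=A$. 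Proposition~\ref{kopUitE} then gives at once that $\omega^{\prime}$ is a coupling of $\nu^{\prime}$ and $\mu^{\prime}$ and that $E_{\omega}^{\prime}=E_{\omega_{E_{\omega}^{\prime}}}=E_{\omega^{\prime}}$; the relation $\omega^{\prime}(b^{\prime}\otimes a)=\omega(a\otimes b^{\prime})$ drops out because both sides equal $\langle\Lambda_{\mu},E_{\omega}^{\prime}(b^{\prime})a\Lambda_{\mu}\rangle$ (using that $E_{\omega}^{\prime}(b^{\prime})\in A^{\prime}$ commutes with $a$, the defining relation of $E_{\omega}^{\prime}$ in Theorem~\ref{duaalBestaan}, and Proposition~\ref{UnE}).

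For the explicit formula the idea is to compute $E_{\omega^{\prime}}$ inside a concrete cyclic representation of $\omega^{\prime}$ built from the one already available for $\omega$. Set $\widetilde{\pi}(b^{\prime}\otimes a):=\pi_{\omega}(a\otimes b^{\prime})$ on $H_{\omega}$. Since $A\otimes1$ and $1\otimes B^{\prime}$ commute inside $A\odot B^{\prime}$, $\widetilde{\pi}$ is a unital $\ast$-homomorphism; moreover $\langle\Omega_{\omega},\widetilde{\pi}(b^{\prime}\otimes a)\Omega_{\omega}\rangle=\omega(a\otimes b^{\prime})=\omega^{\prime}(b^{\prime}\otimes a)$, and $\Omega_{\omega}$ is cyclic for $\widetilde{\pi}(B^{\prime}\odot A)=\pi_{\omega}(A\odot B^{\prime})$, so $(H_{\omega},\widetilde{\pi},\Omega_{\omega})$ is a cyclic representation of $(B^{\prime}\odot A,\omega^{\prime})$. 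By Proposition~\ref{UnE} the map produced by the construction of Section~\ref{afdE} depends only on the coupling and the relevant diagonal coupling and not on the chosen cyclic representation, so I may compute $E_{\omega^{\prime}}$ from this model.

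Running the construction of Section~\ref{afdE} for the coupling $\omega^{\prime}$ of $(B^{\prime},\nu^{\prime})$ and $(A^{\prime},\mu^{\prime})$ in the model $(H_{\omega},\widetilde{\pi},\Omega_{\omega})$: the ``second'' algebra is now $A^{\prime}$, in its standard cyclic representation $(G_{\mu},\id_{A^{\prime}},\Lambda_{\mu})$, so the subspace playing the role of $H_{\nu}$ in (\ref{Hnu}) is $\overline{\widetilde{\pi}(1\otimes A)\Omega_{\omega}}=\overline{\pi_{\omega}(A\otimes1)\Omega_{\omega}}=H_{\mu}$; the representation playing the role of $\pi_{\nu^{\prime}}$ is $a\mapsto\widetilde{\pi}(1\otimes a)|_{H_{\mu}}=\pi_{\omega}(a\otimes1)|_{H_{\mu}}=\pi_{\mu}(a)$; the unitary playing the role of $u_{\nu}$ in (\ref{GnuHnu}) is then exactly the operator $u_{\mu}\colon G_{\mu}\to H_{\mu}$, $u_{\mu}a\Lambda_{\mu}=\pi_{\mu}(a)\Omega_{\mu}$, of the statement (a unitary equivalence for the same reasons as $u_{\nu}$, since $\Lambda_{\mu}$ is cyclic and separating for $A$); and $\pi_{\omega}(\,\cdot\,\otimes1)$ is replaced by $\widetilde{\pi}(\,\cdot\,\otimes1)=\pi_{\omega}(1\otimes\,\cdot\,)$. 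Theorem~\ref{E-eienskap}, together with Proposition~\ref{Eformule} (which also confirms that the range lies in $A^{\prime}$), then yields
\[
E_{\omega^{\prime}}(b^{\prime})=u_{\mu}^{\ast}\iota_{H_{\mu}}^{\ast}\pi_{\omega}(1\otimes b^{\prime})\iota_{H_{\mu}}u_{\mu}
\]
for all $b^{\prime}\in B^{\prime}$, as claimed. If one prefers not to invoke uniqueness of the cyclic representation, one can instead verify directly, by a computation mirroring the proof of Proposition~\ref{UnE}, that $\widetilde{E}(b^{\prime}):=u_{\mu}^{\ast}\iota_{H_{\mu}}^{\ast}\pi_{\omega}(1\otimes b^{\prime})\iota_{H_{\mu}}u_{\mu}$ lies in $A^{\prime}$ (same argument as in Proposition~\ref{Eformule}) and satisfies $\delta_{\mu^{\prime}}(\widetilde{E}(b^{\prime})\otimes a)=\langle\Omega_{\omega},\pi_{\omega}(a\otimes b^{\prime})\Omega_{\omega}\rangle=\omega^{\prime}(b^{\prime}\otimes a)$ for all $a\in A$ and $b^{\prime}\in B^{\prime}$, so that $\widetilde{E}=E_{\omega^{\prime}}$ by the uniqueness in Proposition~\ref{UnE}.

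The mathematical content here is slight; the real work is bookkeeping. One must keep straight the ``mirror'' dictionary induced by passing from $\omega$ to $\omega^{\prime}$ --- namely $A\leftrightarrow B^{\prime}$, $\mu\leftrightarrow\nu^{\prime}$, $B\leftrightarrow A^{\prime}$, $\nu\leftrightarrow\mu^{\prime}$ and $B^{\prime}\leftrightarrow(A^{\prime})^{\prime}=A$ --- and check carefully that $u_{\mu}$ as written in the statement is genuinely the unitary that the Section~\ref{afdE} construction assigns to $\omega^{\prime}$, and that $\delta_{\mu^{\prime}}$ (defined via $\varpi_{A^{\prime}}\colon A^{\prime}\odot A\to B(G_{\mu})$) is precisely the diagonal coupling entering Proposition~\ref{UnE} in this mirrored setting.
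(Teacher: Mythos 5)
Your proposal is correct and follows essentially the same route as the paper: the identity $E_{\omega}^{\prime}=E_{\omega^{\prime}}$ via the definition of $\omega^{\prime}$ together with Propositions \ref{kopUitE}/\ref{UnE}, and the explicit formula by using $(H_{\omega},\pi_{\omega^{\prime}},\Omega_{\omega})$ with $\pi_{\omega^{\prime}}(b^{\prime}\otimes a)=\pi_{\omega}(a\otimes b^{\prime})$ as a cyclic representation of $(B^{\prime}\odot A,\omega^{\prime})$, identifying the relevant subspace, representation and unitary as $H_{\mu}$, $\pi_{\mu}$ and $u_{\mu}$. Your closing direct verification through the uniqueness statement of Proposition \ref{UnE} is a small, sound addition that the paper leaves implicit.
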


\begin{proof}
That $E_{\omega}^{\prime}=E_{\omega^{\prime}}$, follows from the definition of
$\omega^{\prime}$ and Proposition \ref{UnE} applied to $\omega^{\prime}$ and
$\delta_{\mu^{\prime}}$ instead of $\omega$ and $\delta_{\nu}$.

Note that $u_{\mu}$ is defined in perfect analogy to $u_{\nu}$ in Eq.
(\ref{GnuHnu}): As the cyclic representation of $(B^{\prime}\odot
A,\omega^{\prime})$ we can use $(H_{\omega},\pi_{\omega^{\prime}}
,\Omega_{\omega})$ with $\pi_{\omega^{\prime}}$ defined via
\[
\pi_{\omega^{\prime}}(b^{\prime}\otimes a):=\pi_{\omega}(a\otimes b^{\prime})
\]
(and the universal property of tensor products) for all $b^{\prime}\in
B^{\prime}$ and $a\in A$. Then, referring to the form of Eq. (\ref{Hnu}), we
see that in the place of $(H_{\nu},\pi_{\nu^{\prime}},\Omega_{\nu})$ we have
$(H_{\mu},\pi_{\mu},\Omega_{\mu})$, as we would expect, since $\overline
{\pi_{\omega^{\prime}}(1\otimes A)\Omega_{\omega}}=\overline{\pi_{\omega
}(A\otimes1)\Omega_{\omega}}=H_{\mu}$, $\pi_{\omega^{\prime}}(1\otimes
a)|_{H_{\mu}}=\pi_{\omega}(a\otimes1)|_{H_{\mu}}=\pi_{\mu}(a)$ and
$\Omega_{\mu}=\Omega_{\omega}$ for all $a\in A$.

So $u_{\mu}$ plays the same role for $E_{\omega^{\prime}}$ as $u_{\nu}$ does
for $E_{\omega}$, i.e. by definition (see Theorem \ref{E-eienskap})
\[
E_{\omega^{\prime}}(b^{\prime})=u_{\mu}^{\ast}\iota_{H_{\mu}}^{\ast}
\pi_{\omega^{\prime}}(b^{\prime}\otimes1)\iota_{H_{\mu}}u_{\mu}=u_{\mu}^{\ast
}\iota_{H_{\mu}}^{\ast}\pi_{\omega}(1\otimes b^{\prime})\iota_{H_{\mu}}u_{\mu}
\]
for all $b^{\prime}\in B^{\prime}$.
\end{proof}

We are now in a position to apply $E_{\omega}$ to balance in subsequent
sections. Also see Section \ref{afdVw} for brief remarks on how $E_{\omega}$
may be related to ideas from quantum information.

\section{A characterization of balance}\label{afdKar}

In this section we derive a characterization of balance in terms of the map
$E_{\omega}$ from the previous section, and consider some of its consequences,
including a condition for symmetry of balance in terms of KMS-symmetry. This
gives insight into the meaning and possible applications of balance. We
continue with the notation from Section \ref{afdE}.

The dynamics $\alpha$ of a system $\mathbf{A}$ can be represented by a
contraction $U$ on $H_{\mu}$ defined as the unique extension of
\begin{equation}
U\pi_{\mu}(a)\Omega_{\mu}:=\pi_{\mu}(\alpha(a))\Omega_{\mu} \label{U}
\end{equation}
for $a\in A$. Note that $U$ is indeed a contraction, since from Kadison's
inequality mentioned in Section \ref{afdBalDef}, we have $\mu(\alpha(a)^{\ast
}\alpha(a))\leq\mu(a^{\ast}a)$. (It is also simple to check from the
definition of the dual system that $U^{\ast}$ is the corresponding
representation of $\alpha^{\prime}$ on $H_{\mu}$.) Similarly
\[
V\pi_{\nu}(b)\Omega_{\nu}:=\pi_{\nu}(\beta(b))\Omega_{\nu}
\]
for all $b\in B$, to represent $\beta$ on $H_{\nu}$ by the contraction $V$.

Also set
\begin{equation}
P_{\omega}:=P_{\nu}|_{H_{\mu}}:H_{\mu}\rightarrow H_{\nu}, \label{Pomega}
\end{equation}
where $P_{\nu}$ is again the projection of $H_{\omega}$ onto $H_{\nu}$. Note
that from Eqs. (\ref{E}) and (\ref{pi_nu}) it follows that
\begin{equation}
P_{\omega}\pi_{\mu}(a)\Omega_{\mu}=\pi_{\nu}(E_{\omega}(a))\Omega_{\nu}
\label{Evoorst}
\end{equation}
for all $a\in A$, so $P_{\omega}$ is a Hilbert space representation of
$E_{\omega}$.

The characterization of balance in terms of $E_{\omega}$ is the following:

\begin{theorem}
\label{balkar}For systems $\mathbf{A}$ and $\mathbf{B}$, let $\omega$ be a
coupling of $\mu$ and $\nu$. Then $\mathbf{A}\omega\mathbf{B}$, i.e.
$\mathbf{A}$ and $\mathbf{B}$ are in balance with respect to $\omega$, if and
only if
\[
E_{\omega}\circ\alpha=\beta\circ E_{\omega}
\]
holds, or equivalently, if and only if $P_{\omega}U=VP_{\omega}$.
\end{theorem}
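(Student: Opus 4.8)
The plan is to prove the two equivalences in turn, starting from the definition of balance and translating it through Proposition \ref{UnE}, then transferring the resulting identity to the Hilbert space picture via Eq. (\ref{Evoorst}).

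\smallskip

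First I would show that $\mathbf{A}\omega\mathbf{B}$ is equivalent to $E_{\omega}\circ\alpha=\beta\circ E_{\omega}$. By definition, $\mathbf{A}\omega\mathbf{B}$ means $\omega(\alpha(a)\otimes b^{\prime})=\omega(a\otimes\beta^{\prime}(b^{\prime}))$ for all $a\in A$ and $b^{\prime}\in B^{\prime}$. Using Proposition \ref{UnE} twice, the left-hand side equals $\delta_{\nu}(E_{\omega}(\alpha(a))\otimes b^{\prime})$. For the right-hand side, I need to rewrite $\omega(a\otimes\beta^{\prime}(b^{\prime}))=\delta_{\nu}(E_{\omega}(a)\otimes\beta^{\prime}(b^{\prime}))$ and then move $\beta^{\prime}$ across: by the defining property of the dual $\beta^{\prime}$ (Definition \ref{duaalDef}, with $\mathbf{B}$ in place of $\mathbf{A}$), $\langle\Lambda_{\nu},b\,\beta^{\prime}(b^{\prime})\Lambda_{\nu}\rangle=\langle\Lambda_{\nu},\beta(b)\,b^{\prime}\Lambda_{\nu}\rangle$, so $\delta_{\nu}(E_{\omega}(a)\otimes\beta^{\prime}(b^{\prime}))=\langle\Lambda_{\nu},E_{\omega}(a)\beta^{\prime}(b^{\prime})\Lambda_{\nu}\rangle=\langle\Lambda_{\nu},\beta(E_{\omega}(a))\,b^{\prime}\Lambda_{\nu}\rangle=\delta_{\nu}(\beta(E_{\omega}(a))\otimes b^{\prime})$. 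Hence $\mathbf{A}\omega\mathbf{B}$ is equivalent to $\delta_{\nu}((E_{\omega}\circ\alpha)(a)\otimes b^{\prime})=\delta_{\nu}((\beta\circ E_{\omega})(a)\otimes b^{\prime})$ for all $a$ and $b^{\prime}$, which by the uniqueness part of Proposition \ref{UnE} (equivalently the separating property of $\Lambda_{\nu}$ for $B$, exactly as used there) forces $E_{\omega}\circ\alpha=\beta\circ E_{\omega}$.

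\smallskip

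Next I would show $E_{\omega}\circ\alpha=\beta\circ E_{\omega}\iff P_{\omega}U=VP_{\omega}$. The forward direction: apply both sides of $P_{\omega}U=VP_{\omega}$ to a vector of the form $\pi_{\mu}(a)\Omega_{\mu}$, which span a dense subspace of $H_{\mu}$. Using Eq. (\ref{U}), $P_{\omega}U\pi_{\mu}(a)\Omega_{\mu}=P_{\omega}\pi_{\mu}(\alpha(a))\Omega_{\mu}=\pi_{\nu}(E_{\omega}(\alpha(a)))\Omega_{\nu}$ by Eq. (\ref{Evoorst}), while $VP_{\omega}\pi_{\mu}(a)\Omega_{\mu}=V\pi_{\nu}(E_{\omega}(a))\Omega_{\nu}=\pi_{\nu}(\beta(E_{\omega}(a)))\Omega_{\nu}$ by Eq. (\ref{Evoorst}) and the definition of $V$. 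So the intertwining identity on the dense set is precisely $\pi_{\nu}((E_{\omega}\circ\alpha)(a))\Omega_{\nu}=\pi_{\nu}((\beta\circ E_{\omega})(a))\Omega_{\nu}$; since $\pi_{\nu}$ is faithful (it is a cyclic representation of the von Neumann algebra $B$ and $\Omega_{\nu}$ is separating for $\pi_{\nu}(B)$), this holds for all $a$ iff $E_{\omega}\circ\alpha=\beta\circ E_{\omega}$. Because $U,V,P_{\omega}$ are bounded and the vectors $\pi_{\mu}(a)\Omega_{\mu}$ are dense in $H_{\mu}$, agreement on this dense set is equivalent to the operator identity $P_{\omega}U=VP_{\omega}$, giving both directions at once.

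\smallskip

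The main obstacle I anticipate is bookkeeping rather than conceptual: one must be careful that the representations $\pi_{\nu}$ and $\pi_{\nu^{\prime}}$ built inside $H_{\omega}$ (via Eqs. (\ref{pi_nu'})--(\ref{pi_nu})) interact correctly with $P_{\nu}$, and that $E_{\omega}(a)\in B$ so that $\beta$ and $\beta^{\prime}$ may legitimately be applied — all of which is guaranteed by Proposition \ref{Eformule} and Theorem \ref{E-eienskap}. A second small point requiring care is the well-definedness of $U$ on the dense domain and hence the legitimacy of checking $P_{\omega}U=VP_{\omega}$ only on vectors $\pi_{\mu}(a)\Omega_{\mu}$; this is handled by the contractivity of $U$ already noted after Eq. (\ref{U}). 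With these in hand the proof is a short chain of equalities plus two appeals to faithfulness/separation.
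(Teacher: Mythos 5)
Your proposal is correct; every step checks out, and it differs from the paper's argument only in how the pieces are organized. The paper proves the theorem ``on Hilbert space level'': it first shows balance is equivalent to $P_{\omega}U=VP_{\omega}$ by a direct sesquilinear-form computation with $\pi_{\omega}$, the GNS vector $\Omega_{\omega}$ and the defining property of $\beta^{\prime}$, and then transfers to $E_{\omega}\circ\alpha=\beta\circ E_{\omega}$ (and back) using Eq. (\ref{Evoorst}) and the fact that $\Lambda_{\nu}$ is separating for $B$. You instead establish the algebra-level equivalence first, routing balance through Proposition \ref{UnE} and the diagonal coupling $\delta_{\nu}$ and moving $\beta^{\prime}$ across with the defining relation of the dual (Definition \ref{duaalDef}), with uniqueness in Proposition \ref{UnE} closing the loop; your second equivalence, via Eq. (\ref{Evoorst}), boundedness and the separating vector, is essentially the same as the corresponding portion of the paper's proof. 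The ingredients coincide (Proposition \ref{UnE}/separating vector, the dual relation for $\beta^{\prime}$, Eq. (\ref{Evoorst})); what your ordering buys is that the equivalence of balance with $E_{\omega}\circ\alpha=\beta\circ E_{\omega}$ is obtained purely algebraically, without passing through the contractions $U$, $V$ at all, while the paper's ordering makes the spatial condition $P_{\omega}U=VP_{\omega}$ the pivot of the whole argument. Your attention to the minor points (that $E_{\omega}(a)\in B$ so $\beta$ may be applied, and that checking $P_{\omega}U=VP_{\omega}$ on the dense set $\pi_{\mu}(A)\Omega_{\mu}$ suffices because all operators involved are bounded) is exactly what is needed.
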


\begin{proof}
We prove it on Hilbert space level. Note that $P_{\omega}$ as defined in Eq.
(\ref{Pomega}) is the unique function $H_{\mu}\rightarrow H_{\nu}$ such that
$\left\langle P_{\omega}x,y\right\rangle =\left\langle x,y\right\rangle $ for
all $x\in H_{\mu}$ and $y\in H_{\nu}$. (This is a Hilbert space version of
Proposition \ref{UnE}, but it follows directly from the definition of
$P_{\omega}$.)

Assume that $\mathbf{A}$ and $\mathbf{B}$ are in balance with respect to
$\omega$. Then, for $x=\pi_{\mu}(a)\Omega_{\omega}\in H_{\mu}$ and $y=\pi
_{\nu^{\prime}}(b^{\prime})\Omega_{\omega}\in H_{\nu}$, where $a\in A$ and
$b^{\prime}\in B^{\prime}$,
\begin{align*}
\left\langle P_{\omega}Ux,y\right\rangle  &  =\left\langle Ux,y\right\rangle
=\left\langle \pi_{\omega}(\alpha(a)\otimes1)\Omega_{\omega},\pi_{\omega
}(1\otimes b^{\prime})\Omega_{\omega}\right\rangle \\
&  =\left\langle \Omega_{\omega},\pi_{\omega}(\alpha(a^{\ast})\otimes
b^{\prime})\Omega_{\omega}\right\rangle =\omega(\alpha(a^{\ast})\otimes
b^{\prime})\\
&  =\omega(a^{\ast}\otimes\beta^{\prime}(b^{\prime}))=\left\langle \pi
_{\omega}(a\otimes1)\Omega_{\omega},\pi_{\omega}(1\otimes\beta^{\prime
}(b^{\prime}))\Omega_{\omega}\right\rangle \\
&  =\left\langle x,V^{\ast}y\right\rangle =\left\langle P_{\omega}x,V^{\ast
}y\right\rangle =\left\langle VP_{\omega}x,y\right\rangle
\end{align*}
which implies that $P_{\omega}U=VP_{\omega}$. Therefore, using Eqs. (\ref{E}),
(\ref{Hmu}) and (\ref{pi_nu}), and since $u_{\nu}\Lambda_{\nu}=\Omega_{\omega
}$,
\begin{align*}
E_{\omega}\circ\alpha(a)\Lambda_{\nu}  &  =u_{\nu}^{\ast}P_{\omega}\pi_{\mu
}(\alpha(a))\Omega_{\omega}=u_{\nu}^{\ast}P_{\omega}U\pi_{\mu}(a)\Omega
_{\omega}\\
&  =u_{\nu}^{\ast}VP_{\omega}\pi_{\mu}(a)\Omega_{\omega}=u_{\nu}^{\ast}
Vu_{\nu}E_{\omega}(a)u_{\nu}^{\ast}\Omega_{\omega}\\
&  =u_{\nu}^{\ast}V\pi_{\nu}(E_{\omega}(a))\Omega_{\omega}=u_{\nu}^{\ast}
\pi_{\nu}(\beta\circ E_{\omega}(a))\Omega_{\omega}\\
&  =\beta\circ E_{\omega}(a)\Lambda_{\nu}
\end{align*}
but since $\Lambda_{\nu}$ is separating for $B$, this means that $E_{\omega
}\circ\alpha(a)=\beta\circ E_{\omega}(a)$.

Conversely, if $E_{\omega}\circ\alpha=\beta\circ E_{\omega}$, then by Eq.
(\ref{Evoorst}),
\begin{align*}
P_{\omega}U\pi_{\mu}(a)\Omega_{\mu}  &  =P_{\omega}\pi_{\mu}(\alpha
(a))\Omega_{\omega}=\pi_{\nu}(E_{\omega}(\alpha(a)))\Omega_{\omega}\\
&  =\pi_{\nu}(\beta\circ E_{\omega}(a))\Omega_{\omega}=V\pi_{\nu}(E_{\omega
}(a))\Omega_{\omega}\\
&  =VP_{\omega}\pi_{\mu}(a)\Omega_{\mu}
\end{align*}
so $P_{\omega}U=VP_{\omega}$. Therefore, similar to the beginning of this
proof,
\[
\omega(\alpha(a^{\ast})\otimes b^{\prime})=\left\langle P_{\omega
}Ux,y\right\rangle =\left\langle VP_{\omega}x,y\right\rangle =\omega(a^{\ast
}\otimes\beta^{\prime}(b^{\prime}))
\]
for all $a\in A$ and $b^{\prime}\in B^{\prime}$, as required.
\end{proof}

\begin{remark}
This theorem can be compared to the case of joinings in \cite[Theorems 4.1 and
4.3]{BCM}. Keep in mind that in \cite{BCM} the dynamics of systems are given
by $\ast$-automorphisms, and secondly an additional assumption is made
involving the modular groups (see Remark \ref{bind}). The u.c.p. map obtained
in \cite{BCM} from a joining then also intertwines the modular groups, not
just the dynamics. See \cite{BCM16} for closely related results.
\end{remark}

\begin{remark}
From Theorem \ref{balkar} one starts to see some aspects of the meaning of
balance. In particular it can be seen from $E_{\omega}\circ\alpha=\beta\circ
E_{\omega}$ that part of the dynamics of $\mathbf{B}$, more precisely the
restriction $\beta|_{E_{\omega}(A)}:E_{\omega}(A)\rightarrow E_{\omega}(A)$ to
the space $E_{\omega}(A)$, is given by the dynamics of $\mathbf{A}$, 
via $E_\omega$.

Furthermore, regarding the condition $P_{\omega}U=VP_{\omega}$,
we can point the reader to the papers \cite{NSZ, DSt, Pand}, which show how
the asymptotic properties of contractions on Hilbert spaces (one of the most
well-studied topics in operator theory) could be used to obtain mixing and
ergodic properties of the completely positive maps that these contractions
implement spatially. This hints at the importance of balance in ergodic
theory, in particular with regards to ergodic properties which are at least 
partially shared by two semigroups that are in balance.
\end{remark}

A natural question is whether or not balance is symmetric. I.e., are
$\mathbf{A}$ and $\mathbf{B}$ in balance with respect to $\omega$ if and only
if $\mathbf{B}$ and $\mathbf{A}$ are in balance with respect to some coupling
(related in some way to $\omega$)? Below we derive balance conditions
equivalent to $\mathbf{A}\omega\mathbf{B}$, but where (duals of) the systems
$\mathbf{A}$ and $\mathbf{B}$ appear in the opposite order. This is then used
to find conditions under which balance is symmetric.

As before, let
\[
j_{\mu}:B(G_{\mu})\rightarrow B(G_{\mu}):a\mapsto J_{\mu}a^{\ast}J_{\mu},
\]
where as in the previous section we assume that $(A,\mu)$ is in the cyclic
representation $(G_{\mu},\operatorname{id}_{A},\Lambda_{\mu})$ and $J_{\mu}$
is the corresponding modular conjugation. Similarly for $j_{\nu}$.

Given a coupling $\omega$ of $\mu$ and $\nu$, this allows us to define
\[
\omega^{\sigma}:=\delta_{\mu}\circ(E_{\omega}^{\sigma}\odot\id
_{A^{\prime}}):B\odot A^{\prime}\rightarrow\mathbb{C},
\]
where
\[
E_{\omega}^{\sigma}:=j_{\mu}\circ E_{\omega}^{\prime}\circ j_{\nu}:B\rightarrow A
\]
is the KMS-dual of $E_{\omega}$ as in Definition \ref{KMS-duaalDef}, and
$\delta_{\mu}(d):=\left\langle \Lambda_{\mu},\varpi_{A}(d)\Lambda_{\mu
}\right\rangle $ for all $d\in A\odot A^{\prime}$, i.e. $\delta_{\mu}(a\otimes
a^{\prime})=\left\langle \Lambda_{\mu},aa^{\prime}\Lambda_{\mu}\right\rangle
$. Since $j_{\mu}$ is a anti-$\ast$-automorphism, the conjugate linear map
$j_{\mu}^{\ast}:B(G_{\mu})\rightarrow B(G_{\mu})$ obtained by composing
$j_{\mu}$ with the involution, i.e.
\[
j_{\mu}^{\ast}(a):=j_{\mu}(a^{\ast})
\]
for all $a\in B(G_{\mu})$, is completely positive in the sense that if it is
applied entry-wise to elements of the matrix algebra $M_{n}(A)$, then it maps
positive elements to positive elements for every $n$, just like complete
positivity of linear maps. It follows that $E_{\omega}^{\sigma}=j_{\mu}^{\ast
}\circ E_{\omega}^{\prime}\circ j_{\nu}^{\ast}$ is a u.c.p. map, since
$E_{\omega}^{\prime}$ is. Consequently, since $\mu\circ E_{\omega}^{\sigma
}=\mu^{\prime}\circ E_{\omega}^{\prime}\circ j_{\nu}=\nu^{\prime}\circ j_{\nu
}=\nu$, it follows from Proposition \ref{kopUitE} that $\omega^{\sigma}$ is a
coupling of $\nu$ and $\mu$. It is then also clear that
\begin{equation}
E_{\omega^{\sigma}}=E_{\omega}^{\sigma} \label{Eomegasigma}
\end{equation}
by applying Proposition \ref{UnE}.

The KMS-dual of $\alpha$ is given by
\begin{equation}
\alpha^{\sigma}=j_{\mu}\circ\alpha^{\prime}\circ j_{\mu} \label{KMS-duaal}
\end{equation}
and similarly for $\beta$. This means that
\[
\left\langle \Lambda_{\mu},a_{1}j_{\mu}(\alpha^{\sigma}(a_{2}))\Lambda_{\mu
}\right\rangle =\left\langle \Lambda_{\mu},\alpha(a_{1})j_{\mu}(a_{2}
)\Lambda_{\mu}\right\rangle
\]
for all $a_{1},a_{2}\in A$, which corresponds to the definition of the
KMS-dual given in \cite[Section 2]{FR}, in connection with quantum detailed
balance. (In \cite{FR}, however, the KMS-dual is indicated by a prime rather
than the symbol $\sigma$.) Also see \cite{Pet} and \cite[Proposition
8.3]{OPet}. In the latter the KMS-dual is defined in terms of the modular
conjugation as well, as is done above, rather than just in terms of an
analytic continuation of the modular group, as is often done in other sources
(including \cite{FR}).

\begin{proposition}
\label{KMS-dstelsel}In terms of the notation above,
\[
\mathbf{A}^{\sigma}:=(A,\alpha^{\sigma},\mu)
\]
is a system, called the \emph{KMS-dual} of $\mathbf{A}$.
\end{proposition}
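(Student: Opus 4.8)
The plan is to verify the three defining properties of a system (Definition \ref{stelsel}) for the triple $(A,\alpha^{\sigma},\mu)$: that $\alpha^{\sigma}$ is a u.c.p. map from $A$ to $A$, and that $\mu\circ\alpha^{\sigma}=\mu$. The state $\mu$ is already given to be faithful and normal on $A$, so nothing needs to be checked there; the entire content is in analysing $\alpha^{\sigma}=j_{\mu}\circ\alpha'\circ j_{\mu}$.

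First I would handle complete positivity and unitality. Since $\mathbf{A}$ is a system, $\alpha$ is u.c.p., hence in particular unital with $\mu\circ\alpha=\mu$, so Theorem \ref{duaalBestaan}(a) applies (with $M=N=A$, $\Omega=\Lambda=\Lambda_{\mu}$, $\eta=\alpha$): the dual $\alpha'$ is completely positive, and by the general part of that theorem it is also unital. Now, exactly as in the discussion preceding the proposition for $E_\omega^\sigma$, write $\alpha^{\sigma}=j_{\mu}^{\ast}\circ\alpha'\circ j_{\mu}^{\ast}$, where $j_{\mu}^{\ast}(a):=j_{\mu}(a^{\ast})=J_{\mu}aJ_{\mu}$ is the conjugate-linear map that is ``completely positive'' in the entrywise sense noted in the paper (conjugation by the fixed anti-unitary $J_\mu$ preserves positivity at every matrix level). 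Composing $\alpha'$ between two such maps gives a genuine (linear) completely positive map, and since $j_\mu(1)=1$ and $\alpha'(1)=1$ we get $\alpha^{\sigma}(1)=1$. Thus $\alpha^{\sigma}$ is u.c.p. Normality is then automatic by Theorem \ref{normaal} once the invariance below is in hand (or one may simply note $j_\mu$ is $\sigma$-weakly continuous and $\alpha'$ is normal by Remark \ref{kontTyd}'s reasoning, so the composition is normal).

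Next, invariance $\mu\circ\alpha^{\sigma}=\mu$. The cleanest route is the same one-line computation used for $E_\omega^\sigma$: $\mu\circ\alpha^{\sigma}=\mu\circ j_{\mu}\circ\alpha'\circ j_{\mu}=\mu'\circ\alpha'\circ j_{\mu}$, using $\mu\circ j_\mu=\mu'$ from the remark after Definition \ref{duaalDef}; then Theorem \ref{duaalBestaan}(b) (applicable since $A$ is a von Neumann algebra containing $1$, $\alpha$ is unital, and $\Lambda_\mu$ is separating, hence cyclic, for $A'$) gives $\mu'\circ\alpha'=\mu'$ (here $\nu'=\mu'$ since both systems are the same), so $\mu\circ\alpha^{\sigma}=\mu'\circ j_{\mu}=\mu$, again by $\mu'\circ j_\mu=\mu\circ j_\mu\circ j_\mu=\mu$ since $j_\mu^2=\id$. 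This establishes all the axioms.

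I do not expect a serious obstacle here; the proposition is essentially bookkeeping that assembles Theorem \ref{duaalBestaan}, the identity $\mu'=\mu\circ j_\mu$, and the entrywise positivity of conjugation by $J_\mu$. The only mild subtlety worth stating carefully is the complete positivity of $\alpha^\sigma$ despite $j_\mu$ being conjugate-linear and anti-multiplicative: one must phrase it through $j_\mu^\ast$ as above (a sandwich of a CP map by two positivity-preserving conjugate-linear maps is CP), exactly as the paper already does for $E_\omega^\sigma$, so I would simply refer back to that argument rather than repeat it.
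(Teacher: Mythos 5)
Your proposal is correct and follows essentially the same route as the paper: complete positivity and unitality of $\alpha^{\sigma}$ via the factorization $j_{\mu}^{\ast}\circ\alpha^{\prime}\circ j_{\mu}^{\ast}$ exactly as in the preceding argument for $E_{\omega}^{\sigma}$, and invariance via the chain $\mu\circ\alpha^{\sigma}=\mu^{\prime}\circ\alpha^{\prime}\circ j_{\mu}=\mu^{\prime}\circ j_{\mu}=\mu$, which is the paper's one-line computation. The extra remarks on normality and on the separating property of $\Lambda_{\mu}$ are harmless but not needed.
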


\begin{proof}
Simply note that $\alpha^{\sigma}$ is indeed a u.c.p. map (by the same
argument as for $E_{\omega}^{\sigma}$ above) such that $\mu\circ\alpha
^{\sigma}=\mu^{\prime}\circ\alpha^{\prime}\circ j_{\mu}=\mu^{\prime}\circ
j_{\mu}=\mu$.
\end{proof}

\begin{remark}
\label{ktydKMSd}For a QMS $(\alpha_{t})_{t\geq0}$ with the $\sigma$-weak
continuity property as in Remark \ref{kontTyd}, we again have that the same
$\sigma$-weak continuity property holds for $(\alpha_{t}^{\sigma})_{t\geq0}$
as well, where $\alpha_{t}^{\sigma}:=(\alpha_{t})^{\sigma}$ for every $t$.
This follows from the corresponding property of $(\alpha_{t}^{\prime})_{t\geq0}$.
\end{remark}

In terms of this notation, we have the following consequence of Theorem
\ref{balkar}:

\begin{corollary}
For systems $\mathbf{A}$ and $\mathbf{B}$, let $\omega$ be a coupling of $\mu$
and $\nu$. Then
\[
\mathbf{A}\omega\mathbf{B}
\Leftrightarrow 
\mathbf{B}^{\prime}\mathbb{\omega}^{\prime}\mathbf{A}^{\prime}
\Leftrightarrow 
\mathbf{B}^{\sigma}\mathbb{\omega}^{\sigma}\mathbf{A}^{\sigma}.
\]
\end{corollary}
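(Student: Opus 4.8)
The plan is to prove the two equivalences by reducing each to the characterization of balance from Theorem \ref{balkar}, applied to the appropriate systems and couplings, together with the identities $E_{\omega^{\prime}} = E_{\omega}^{\prime}$ (Proposition \ref{Eomega'}) and $E_{\omega^{\sigma}} = E_{\omega}^{\sigma}$ (Eq. \eqref{Eomegasigma}). By Theorem \ref{balkar}, $\mathbf{A}\omega\mathbf{B}$ is equivalent to $E_{\omega}\circ\alpha = \beta\circ E_{\omega}$, and likewise $\mathbf{B}^{\prime}\omega^{\prime}\mathbf{A}^{\prime}$ is equivalent to $E_{\omega^{\prime}}\circ\beta^{\prime} = \alpha^{\prime}\circ E_{\omega^{\prime}}$, i.e. $E_{\omega}^{\prime}\circ\beta^{\prime} = \alpha^{\prime}\circ E_{\omega}^{\prime}$. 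So the first equivalence amounts to showing that $E_{\omega}\circ\alpha = \beta\circ E_{\omega}$ holds if and only if $E_{\omega}^{\prime}\circ\beta^{\prime} = \alpha^{\prime}\circ E_{\omega}^{\prime}$. This is a purely duality-theoretic statement: taking the dual of both sides of $E_{\omega}\circ\alpha = \beta\circ E_{\omega}$ and using that $(\eta_1\circ\eta_2)^{\prime} = \eta_2^{\prime}\circ\eta_1^{\prime}$ for composable positive unital maps with the appropriate state-intertwining conditions gives $\alpha^{\prime}\circ E_{\omega}^{\prime} = E_{\omega}^{\prime}\circ\beta^{\prime}$, and the reverse direction follows by taking duals again and invoking $\eta^{\prime\prime} = \eta$ (Corollary \ref{2deDuaal}).

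First I would verify that the duals involved are all well-defined and that the contravariant composition rule applies here; this requires checking the hypotheses of Theorem \ref{duaalBestaan} for each of $\alpha$, $\beta$, $E_{\omega}$, and their composites — all are u.c.p. maps between von Neumann algebras in their standard cyclic representations, with $\mu\circ\alpha = \mu$, $\nu\circ\beta = \nu$ and $\nu\circ E_{\omega} = \mu$, so the state-preservation conditions needed to form $(\,\cdot\,)^{\prime}$ are in place. The composition rule $(\eta_1\circ\eta_2)^{\prime} = \eta_2^{\prime}\circ\eta_1^{\prime}$ should follow directly from the defining relation of the dual in Theorem \ref{duaalBestaan}: for $a$ in the appropriate algebra and $c^{\prime}$ in the commutant of the target, $\langle\Lambda,\, a\,(\eta_1\circ\eta_2)^{\prime}(c^{\prime})\,\Lambda\rangle = \langle\Lambda,\,(\eta_1\circ\eta_2)(a)\,c^{\prime}\,\Lambda\rangle$, and one unwinds this through the two defining relations for $\eta_1^{\prime}$ and $\eta_2^{\prime}$, using uniqueness to conclude. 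The key technical point to get right is that $\alpha^{\prime\prime} = \alpha$, $\beta^{\prime\prime} = \beta$ and $E_{\omega}^{\prime\prime} = E_{\omega}$, which all follow from Corollary \ref{2deDuaal} since the relevant cyclic vectors ($\Lambda_{\mu}$, $\Lambda_{\nu}$, and $\Omega_{\omega}$ restricted appropriately) are cyclic for the relevant commutants.

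For the second equivalence, $\mathbf{B}^{\sigma}\omega^{\sigma}\mathbf{A}^{\sigma}$, I would again apply Theorem \ref{balkar}: it is equivalent to $E_{\omega^{\sigma}}\circ\beta^{\sigma} = \alpha^{\sigma}\circ E_{\omega^{\sigma}}$, i.e., using \eqref{Eomegasigma}, to $E_{\omega}^{\sigma}\circ\beta^{\sigma} = \alpha^{\sigma}\circ E_{\omega}^{\sigma}$. Now substitute the definitions $E_{\omega}^{\sigma} = j_{\mu}\circ E_{\omega}^{\prime}\circ j_{\nu}$, $\alpha^{\sigma} = j_{\mu}\circ\alpha^{\prime}\circ j_{\mu}$ and $\beta^{\sigma} = j_{\nu}\circ\beta^{\prime}\circ j_{\nu}$. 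The left side becomes $j_{\mu}\circ E_{\omega}^{\prime}\circ j_{\nu}\circ j_{\nu}\circ\beta^{\prime}\circ j_{\nu} = j_{\mu}\circ E_{\omega}^{\prime}\circ\beta^{\prime}\circ j_{\nu}$, since $j_{\nu}\circ j_{\nu} = \operatorname{id}$, and similarly the right side becomes $j_{\mu}\circ\alpha^{\prime}\circ E_{\omega}^{\prime}\circ j_{\nu}$. Since $j_{\mu}$ and $j_{\nu}$ are bijections, these two are equal if and only if $E_{\omega}^{\prime}\circ\beta^{\prime} = \alpha^{\prime}\circ E_{\omega}^{\prime}$, which by the first equivalence is exactly $\mathbf{A}\omega\mathbf{B}$. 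I expect the main obstacle to be purely bookkeeping: keeping the directions of all the maps and the roles of the various algebras and their commutants straight, and confirming the contravariant composition rule for duals holds with the precise normalizations used here — once that rule is in hand, both equivalences are short formal manipulations.
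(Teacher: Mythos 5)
Your proposal is correct and follows essentially the same route as the paper: reduce each balance statement to an intertwining relation via Theorem \ref{balkar}, then pass between $E_{\omega}\circ\alpha=\beta\circ E_{\omega}$, $E_{\omega^{\prime}}\circ\beta^{\prime}=\alpha^{\prime}\circ E_{\omega^{\prime}}$ and $E_{\omega^{\sigma}}\circ\beta^{\sigma}=\alpha^{\sigma}\circ E_{\omega^{\sigma}}$ using the contravariant composition rule and uniqueness of duals from Theorem \ref{duaalBestaan}, Corollary \ref{2deDuaal}, Proposition \ref{Eomega'}, Eq. (\ref{Eomegasigma}) and Eq. (\ref{KMS-duaal}). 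Your explicit cancellation of $j_{\mu}\circ j_{\mu}$ and $j_{\nu}\circ j_{\nu}$ is just a spelled-out version of the paper's appeal to the KMS-dual definitions, so there is no substantive difference.
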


\begin{proof}
By the definition of the dual\ of a map in Theorem \ref{duaalBestaan} (which
tells us that $(E_{\omega}\circ\alpha)^{\prime}=\alpha^{\prime}\circ
E_{\omega}^{\prime}$, etc.), as well as Proposition \ref{Eomega'} and Eqs.
(\ref{Eomegasigma}) and (\ref{KMS-duaal}), we have
\[
E_{\omega}\circ\alpha=\beta\circ E_{\omega}\Leftrightarrow E_{\omega^{\prime}
}\circ\beta^{\prime}=\alpha^{\prime}\circ E_{\omega^{\prime}}\Leftrightarrow
E_{\omega^{\sigma}}\circ\beta^{\sigma}=\alpha^{\sigma}\circ E_{\omega^{\sigma
}}
\]
which completes the proof by Theorem \ref{balkar}.
\end{proof}

This is not quite symmetry of balance. However, we say that the system
$\mathbf{A}$ (and also $\alpha$ itself) is \emph{KMS-symmetric} when
\begin{equation}
\alpha^{\sigma}=\alpha\label{KMS-sim}
\end{equation}
holds. If both $\alpha$ and $\beta$ are KMS-symmetric, then we see that
\[
\mathbf{A}\omega\mathbf{B\Leftrightarrow B}\mathbb{\omega}^{\sigma}
\mathbf{A},
\]
which expresses symmetry of balance in this special case.

KMS-symmetry was studied in \cite{GL93}, \cite{GL} and \cite{C}, and in
\cite{FU} it was considered in the context of the structure of generators of
norm-continuous quantum Markov semigroups on $B(\mathfrak{H})$ and standard
quantum detailed balance conditions.

We have however not excluded the possibility that there is some coupling other
than $\omega^{\sigma}$ that could be used to show symmetry of balance more
generally. This possibility seems unlikely, given how natural the foregoing
arguments and constructions are.

We end this section by studying some simple applications of balance that
follow from Theorem \ref{balkar} and the facts derived in the previous section.

First we consider \emph{ergodicity} of a system $\mathbf{B}$, which we define
to mean
\begin{equation}
B^{\beta}:=\{b\in B:\beta(b)=b\}=\mathbb{C}1_{B} \label{erg}
\end{equation}
in analogy to the case for $\ast$-automorphisms instead of u.c.p. maps. This
is certainly not the only notion of ergodicity available; see for example
\cite{AH} for an alternative definition which implies Eq. (\ref{erg}), because
of \cite[Lemma 2.1]{AH}. The definition we give here is however convenient to
illustrate how balance can be applied: this form of ergodicity can be
characterized in terms of balance, similar to how it is done in the theory of
joinings (see \cite[Theorem 3.3]{D}, \cite[Theorem 2.1]{D2} and \cite[Theorem
6.2]{BCM}), as we now explain.

\begin{definition}
A system $\mathbf{B}$ is said to be \emph{disjoint} from a system $\mathbf{A}$
if the only coupling $\omega$ with respect to which $\mathbf{A}$ and
$\mathbf{B}$ (in this order) are in balance, is the trivial coupling
$\omega=\mu\odot\nu^{\prime}$.
\end{definition}

In the next result, an \emph{identity system} is a system $\mathbf{A}$ with
$\alpha=\operatorname{id}_{A}$.

\begin{proposition}
\label{ErgKar}A system is ergodic if and only if it is disjoint from all
identity systems.
\end{proposition}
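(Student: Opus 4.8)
The plan is to prove both implications using the characterization of balance in Theorem \ref{balkar} together with Corollary \ref{trivkop}, which links the trivial coupling to the case $E_\omega(A) = \mathbb{C}1_B$.

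First I would prove that an ergodic system $\mathbf{B}$ is disjoint from every identity system $\mathbf{A} = (A, \operatorname{id}_A, \mu)$. Suppose $\mathbf{A}\omega\mathbf{B}$ for some coupling $\omega$ of $\mu$ and $\nu$. By Theorem \ref{balkar}, this is equivalent to $E_\omega \circ \alpha = \beta \circ E_\omega$, and since $\alpha = \operatorname{id}_A$ this reads $E_\omega = \beta \circ E_\omega$. Hence every element of $E_\omega(A)$ is fixed by $\beta$, i.e. $E_\omega(A) \subseteq B^\beta$. By ergodicity of $\mathbf{B}$, $B^\beta = \mathbb{C}1_B$, so $E_\omega(A) = \mathbb{C}1_B$ (the reverse inclusion holding since $E_\omega$ is unital). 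By the converse part of Corollary \ref{trivkop}, $\omega = \mu \odot \nu'$. Thus the only coupling realizing balance is the trivial one, which is exactly disjointness of $\mathbf{B}$ from $\mathbf{A}$.

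For the converse, I would argue the contrapositive: if $\mathbf{B}$ is not ergodic, I exhibit an identity system from which it is not disjoint. Since $B^\beta \neq \mathbb{C}1_B$, there is a fixed point $b_0 = b_0^* \in B^\beta$ that is not a scalar multiple of $1_B$. The natural candidate is to take $A$ to be the (abelian, or at any rate nontrivial) von Neumann subalgebra of $B$ generated by $b_0$ and $1_B$, with $\alpha = \operatorname{id}_A$ and $\mu = \nu|_A$, and then let $E \colon A \to B$ be the inclusion map. One checks $E$ is u.c.p., unital, and $\nu \circ E = \mu$, so by Proposition \ref{kopUitE} the functional $\omega_E := \delta_\nu \circ (E \odot \operatorname{id}_{B'})$ is a coupling of $\mu$ and $\nu$ with $E_{\omega_E} = E$. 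Since $b_0 \in E(A)$ is $\beta$-fixed and $E_{\omega_E} \circ \alpha = E = \beta \circ E = \beta \circ E_{\omega_E}$ (using $\beta(b_0) = b_0$, $\beta(1)=1$ and linearity on the two-dimensional space $E(A)$), Theorem \ref{balkar} gives $\mathbf{A}\,\omega_E\,\mathbf{B}$. But $E_{\omega_E}(A) = E(A) \neq \mathbb{C}1_B$, so by Corollary \ref{trivkop} we have $\omega_E \neq \mu \odot \nu'$, showing $\mathbf{B}$ is not disjoint from this identity system $\mathbf{A}$.

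The main obstacle is the converse direction, and specifically making sure the constructed $\mathbf{A}$ is genuinely an admissible \emph{system} in the sense of Definition \ref{stelsel}: one must verify that $A$ is a von Neumann algebra on which $\mu = \nu|_A$ is a \emph{faithful} normal state (faithfulness of $\nu$ on $B$ passes to the subalgebra), that the inclusion $E$ is completely positive and unital with $\nu\circ E = \mu$ (immediate), and that $\beta \circ E = E \circ \alpha$ holds on all of $A$ — which reduces, by normality and linearity, to the identity on the generators $b_0, 1_B$, hence to $\beta(b_0) = b_0$. A minor point to handle cleanly is that $b_0$ can be chosen self-adjoint and non-scalar (take real or imaginary part of any non-scalar fixed point; $B^\beta$ is $*$-closed because $\beta$ is positive hence $*$-preserving), so that $A = \{b_0, 1_B\}''$ is a bona fide abelian von Neumann algebra and the argument goes through.
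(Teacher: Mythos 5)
Your forward direction (ergodic $\Rightarrow$ disjoint from identity systems) is exactly the paper's argument and is correct. The problem is in the converse. Having picked a non-scalar self-adjoint fixed point $b_0\in B^{\beta}$ and set $A:=\{b_0,1_B\}''$, you claim that $\beta\circ E=E\circ\alpha$ on all of $A$ ``reduces, by normality and linearity, to the identity on the generators $b_0,1_B$.'' That reduction is invalid: $\beta$ is only a u.c.p.\ map, not a homomorphism, so its values on generators do not determine its values on the generated algebra, and $A$ is not the closed linear span of $\{1_B,b_0\}$ --- it contains $b_0^2$, all bounded Borel functions of $b_0$, its spectral projections, and so on. Your phrase ``linearity on the two-dimensional space $E(A)$'' makes the confusion explicit: $\operatorname{span}\{1_B,b_0\}$ is two-dimensional but in general not an algebra, so it cannot serve as the algebra of a system in the sense of Definition \ref{stelsel}, whereas $\{b_0,1_B\}''$ is a von Neumann algebra but generally infinite-dimensional. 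From $\beta(b_0)=b_0$ and $\beta(1_B)=1_B$ alone you cannot conclude $\beta(b_0^2)=b_0^2$, so the balance identity $E_{\omega_E}\circ\alpha=\beta\circ E_{\omega_E}$ needed to invoke Theorem \ref{balkar} is unjustified as written.

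The missing ingredient is precisely the nontrivial fact the paper's proof leans on: since $\beta$ is a Schwarz map and $\nu$ is a faithful normal invariant state, the fixed point set $B^{\beta}$ is a von Neumann subalgebra of $B$ (see \cite[Lemma 6.4]{BCM}); in particular it is pointwise fixed and weakly closed, so your $A=\{b_0,1_B\}''\subseteq B^{\beta}$ is indeed pointwise fixed by $\beta$ and the rest of your argument (Proposition \ref{kopUitE} to get the coupling $\omega_E$, Corollary \ref{trivkop} to see it is nontrivial) then goes through. Equivalently, and more economically, you can do what the paper does: take $A:=B^{\beta}$ itself, $\omega:=\delta_{\nu}|_{A\odot B'}$, so that $E_{\omega}$ is the inclusion, get balance from Theorem \ref{balkar}, and conclude $B^{\beta}=E_{\omega}(A)=\mathbb{C}1_B$ directly from disjointness and Corollary \ref{trivkop}. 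So your route is essentially the paper's (up to arguing contrapositively and restricting to a singly generated subalgebra); the genuine gap is that the key step --- that fixing the generators forces fixing the generated von Neumann algebra, i.e.\ that $B^{\beta}$ is an algebra --- is asserted via an incorrect ``reduce to generators by linearity'' argument instead of the Kadison--Schwarz plus faithful-invariant-state argument that actually proves it.
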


\begin{proof}
Suppose $\mathbf{B}$ is ergodic and $\mathbf{A}$ an identity system. If
$\mathbf{A}\omega\mathbf{B}$ for some coupling $\omega$, then $\beta\circ
E_{\omega}=E_{\omega}$ by Theorem \ref{balkar}. So $E_{\omega}(A)=\mathbb{C}1_{B}$, 
since $\mathbf{B}$ is ergodic. By Corollary \ref{trivkop} we conclude
that $\omega=\mu\odot\nu^{\prime}$.

Conversely, suppose that $\mathbf{B}$ is disjoint from all identity systems.
Recall that $A:=B^{\beta}$ is a von Neumann algebra (see for example
\cite[Lemma 6.4]{BCM} for a proof). Therefore $\mathbf{A}
:=(A,\operatorname{id}_{A},\mu)$ is an identity system, where $\mu:=\nu|_{A}$.
Define a coupling of $\mu$ and $\nu$ by $\omega:=\delta_{\nu}|_{A\odot
B^{\prime}}$ (see Eq. (\ref{diagKop})), then from Proposition \ref{UnE} we
have $E_{\omega}=\operatorname{id}_{A}$. So $E_{\omega}\circ\alpha
=\operatorname{id}_{A}=\beta\circ E_{\omega}$, implying that $\mathbf{A}$ and
$\mathbf{B}$ are in balance with respect to $\omega$ by Theorem \ref{balkar}.
Hence, by our supposition and Corollary \ref{trivkop}, $B^{\beta}=E_{\omega
}(A)=\mathbb{C}1_{B}$, which means that $\mathbf{B}$ is ergodic.
\end{proof}

It seems plausible that some other ergodic properties can be similarly
characterized in terms of balance, but that will not be pursued further in
this paper.

Our second application is connected to non-equilibrium statistical mechanics,
in particular the convergence of states to steady states. See for example the
early papers \cite{Sp}, \cite{Fri} and \cite{M83} on the topic, as well as
more recent papers like \cite{M01}, \cite{Fel} and \cite{FR08}. To clarify the
connection between these results (which are expressed in terms of continuous
time $t\geq0$) and the result below, we formulate the latter in terms of
continuous time as well. Compare it in particular to results in \cite[Section
3]{Fri}. It is an example of how properties of one system can be partially
carried over to other systems via balance.

\begin{proposition}
\label{neig}Assume that $\mathbf{A}$ and $\mathbf{B}$ are in balance with
respect to $\omega$. Suppose that
\[
\lim_{t\rightarrow\infty}\varkappa(\alpha_{t}(a))=\mu(a)
\]
for all normal states $\varkappa$ on $A$, and all $a\in A$. Then
\[
\lim_{t\rightarrow\infty}\lambda(\beta_{t}(b))=\nu(b)
\]
for all normal states $\lambda$ on $B$, and all $b\in E_{\omega}(A)$.
\end{proposition}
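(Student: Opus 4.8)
The plan is to reduce the conclusion about $\mathbf{B}$ to the hypothesis about $\mathbf{A}$ by pushing everything through the map $E_{\omega}$ and using the intertwining relation $E_{\omega}\circ\alpha_t=\beta_t\circ E_{\omega}$ supplied by Theorem \ref{balkar} (applied at each $t\geq0$, via Remark \ref{kontTyd2}). Fix $b\in E_{\omega}(A)$, say $b=E_{\omega}(a)$ for some $a\in A$, and fix a normal state $\lambda$ on $B$. First I would observe that $\lambda\circ\beta_t(b)=\lambda(\beta_t(E_{\omega}(a)))=\lambda(E_{\omega}(\alpha_t(a)))=(\lambda\circ E_{\omega})(\alpha_t(a))$. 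The functional $\varkappa:=\lambda\circ E_{\omega}$ is a positive linear functional on $A$; it is a state because $E_{\omega}$ is unital ($\varkappa(1_A)=\lambda(E_{\omega}(1_A))=\lambda(1_B)=1$), and it is normal because $E_{\omega}$ is normal by Theorem \ref{E-eienskap} and $\lambda$ is normal, so the composition of a normal positive functional with a normal positive map is normal. Hence the hypothesis applies to $\varkappa$, giving $\lim_{t\to\infty}\varkappa(\alpha_t(a))=\mu(a)$.

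Next I would identify the limit $\mu(a)$ with $\nu(b)$. This is exactly Eq. (\ref{nuE=mu}): $\mu(a)=\nu(E_{\omega}(a))=\nu(b)$. Combining the two displays gives
\[
\lim_{t\to\infty}\lambda(\beta_t(b))=\lim_{t\to\infty}\varkappa(\alpha_t(a))=\mu(a)=\nu(b),
\]
which is the assertion. One small point worth spelling out is that the conclusion should not depend on the choice of preimage $a$ of $b$ under $E_{\omega}$; but it doesn't, since the left-hand side $\lim_t\lambda(\beta_t(b))$ is manifestly a function of $b$ alone, and the computation above shows it equals $\nu(b)$ regardless of which $a$ with $E_{\omega}(a)=b$ we picked.

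I do not expect a serious obstacle here; the argument is essentially a one-line chase once the right objects are in place. The only thing requiring a little care is the verification that $\varkappa=\lambda\circ E_{\omega}$ is a \emph{normal} state on $A$, which is where the hypothesis is actually applied — this uses normality of $E_{\omega}$ (Theorem \ref{E-eienskap}), not merely its complete positivity. It is also worth noting that the restriction $b\in E_{\omega}(A)$ is genuinely needed: the intertwining relation only lets us transport the dynamics on the subspace $E_{\omega}(A)$, consistent with the remark following Theorem \ref{balkar} that only $\beta|_{E_{\omega}(A)}$ is controlled by $\alpha$ via $E_{\omega}$.
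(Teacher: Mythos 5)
Your argument is correct and is essentially the paper's own proof: set $\varkappa:=\lambda\circ E_{\omega}$, use the intertwining $E_{\omega}\circ\alpha_t=\beta_t\circ E_{\omega}$ from Theorem \ref{balkar}, and identify the limit via $\nu\circ E_{\omega}=\mu$ from Theorem \ref{E-eienskap}. Your extra checks (normality and unitality of $E_{\omega}$ making $\varkappa$ a normal state, independence of the choice of preimage) are sound elaborations of what the paper leaves implicit.
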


\begin{proof}
Applying Theorem \ref{balkar} and setting $\varkappa:=\lambda\circ E_{\omega}
$, we have
\[
\lim_{t\rightarrow\infty}\lambda(\beta_{t}(E_{\omega}(a)))=\lim_{t\rightarrow
\infty}\varkappa(\alpha_{t}(a))=\mu(a)=\nu(E_{\omega}(a))
\]
for all $a\in A$, by Theorem \ref{E-eienskap}.
\end{proof}

We expect various results of this sort to be possible, namely where two
systems are in balance, and properties of the one then necessarily hold in a
weaker form for the other.

Conversely, one can in principle use balance as a way to impose less stringent
alternative versions of a given property, by requiring a system to be in
balance with another system having the property in question. We expect that
such conditions need not be directly comparable (and strictly weaker) than the
property in question. This idea will be discussed further in relation to
detailed balance in Section \ref{fb}.

\section{Composition of couplings and transitivity of balance}\label{afdTrans}

Here we show transitivity of balance: if $\mathbf{A}$ and $\mathbf{B}$ are in
balance with respect to $\omega$, and $\mathbf{B}$ and $\mathbf{C}$ are in
balance with respect to $\psi$, then $\mathbf{A}$ and $\mathbf{C}$ are in
balance with respect to a certain coupling obtained from $\omega$ and $\psi$,
and denoted by $\omega\circ\psi$. The coupling $\omega\circ\psi$ is the
composition of $\omega$ and $\psi$, as defined and discussed in detail below.
Furthermore, we discuss the connection between couplings and correspondences
in the sense of Connes.

Let $\omega$ be a coupling of $(A,\mu)$ and $(B,\nu)$, and let $\psi$ be a
coupling of $(B,\nu)$ and $(C,\xi)$. Note that $E_{\psi}\circ E_{\omega
}:A\rightarrow C$ is a u.c.p. map such that $\xi\circ E_{\psi}\circ E_{\omega
}=\mu$ by Theorem \ref{E-eienskap}. Therefore, by Proposition \ref{kopUitE},
setting
\begin{equation}
\omega\circ\psi:=\delta_{\xi}\circ((E_{\psi}\circ E_{\omega})\odot\id_{C'}),
\label{kopKring}
\end{equation}
i.e.
\[
\omega\circ\psi(a\otimes c^{\prime})=\delta_{\xi}(E_{\psi}(E_{\omega
}(a))\otimes c^{\prime})
\]
for all $a\in A$ and $c\in C^{\prime}$, we obtain a coupling $\omega\circ\psi$
of $\mu$ and $\xi$ such that
\begin{equation}
E_{\omega\circ\psi}=E_{\psi}\circ E_{\omega}. \label{EkringE}
\end{equation}
This construction forms the foundation for the rest of this section.

We call the coupling $\omega\circ\psi$ the \emph{composition} of the couplings
$\omega$ and $\psi$. We can view it as an analogue of a construction appearing
in the theory of joinings in classical ergodic theory; see for example
\cite[Definition 6.9]{G}.

We can immediately give the main result of this section, namely that we have
transitivity of balance in the following sense:

\begin{theorem}
\label{trans}If $\mathbf{A}\omega\mathbf{B}$ and $\mathbf{B}\psi\mathbf{C}$,
then $\mathbf{A}(\omega\circ\psi)\mathbf{C}$.
\end{theorem}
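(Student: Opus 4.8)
The plan is to reduce the statement to the characterization of balance from Theorem \ref{balkar} and then compose the intertwining relations. First I would invoke Theorem \ref{balkar} for each of the two balance hypotheses: $\mathbf{A}\omega\mathbf{B}$ gives $E_{\omega}\circ\alpha=\beta\circ E_{\omega}$, and $\mathbf{B}\psi\mathbf{C}$ gives $E_{\psi}\circ\beta=\gamma\circ E_{\psi}$. The goal, again by Theorem \ref{balkar} (applied this time to the coupling $\omega\circ\psi$ of $\mu$ and $\xi$), is to show $E_{\omega\circ\psi}\circ\alpha=\gamma\circ E_{\omega\circ\psi}$.

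The key computational step is then a one-line composition. Using Eq. (\ref{EkringE}), namely $E_{\omega\circ\psi}=E_{\psi}\circ E_{\omega}$, I would write
\[
E_{\omega\circ\psi}\circ\alpha=E_{\psi}\circ E_{\omega}\circ\alpha=E_{\psi}\circ\beta\circ E_{\omega}=\gamma\circ E_{\psi}\circ E_{\omega}=\gamma\circ E_{\omega\circ\psi},
\]
where the second equality uses the first intertwining relation and the third uses the second. By Theorem \ref{balkar} this is exactly the assertion $\mathbf{A}(\omega\circ\psi)\mathbf{C}$, provided one notes that $\omega\circ\psi$ is indeed a coupling of $\mu$ and $\xi$ — but that was already established in the paragraph defining $\omega\circ\psi$ (via Proposition \ref{kopUitE}, using that $E_{\psi}\circ E_{\omega}$ is u.c.p. with $\xi\circ E_{\psi}\circ E_{\omega}=\mu$).

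There is essentially no obstacle here: all the genuine work — showing $E_{\omega}$ is a well-behaved u.c.p. map, showing $\omega\circ\psi$ is a coupling, establishing $E_{\omega\circ\psi}=E_{\psi}\circ E_{\omega}$, and proving the intertwinement characterization of balance — has already been done. The only thing to be mildly careful about is bookkeeping of domains and codomains: $E_{\omega}\colon A\to B$, $E_{\psi}\colon B\to C$, so the composition $E_{\psi}\circ E_{\omega}\colon A\to C$ is legitimate and matches the needed map for the coupling $\omega\circ\psi$ of $\mathbf{A}$ and $\mathbf{C}$. So the proof is really just: quote Theorem \ref{balkar} twice, substitute, and quote it once more in the reverse direction, with Eq. (\ref{EkringE}) supplying the identity $E_{\omega\circ\psi}=E_{\psi}\circ E_{\omega}$ that glues the two halves together.
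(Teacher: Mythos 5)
Your proof is correct and follows exactly the paper's own argument: apply Theorem \ref{balkar} to both hypotheses, use $E_{\omega\circ\psi}=E_{\psi}\circ E_{\omega}$ from Eq. (\ref{EkringE}) to compose the intertwining relations, and conclude by Theorem \ref{balkar} again, with the coupling property of $\omega\circ\psi$ already secured by Proposition \ref{kopUitE}. Nothing further is needed.
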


\begin{proof}
By Theorem \ref{balkar} we have $E_{\omega}\circ\alpha=\beta\circ E_{\omega}$
and $E_{\psi}\circ\beta=\gamma\circ E_{\psi}$, so
\[
E_{\omega\circ\psi}\circ\alpha=E_{\psi}\circ\beta\circ E_{\omega}=\gamma\circ
E_{\omega\circ\psi},
\]
which again by Theorem \ref{balkar} means that $\mathbf{A}(\omega\circ
\psi)\mathbf{C}$.
\end{proof}

In order to gain a deeper understanding of the transitivity of balance, we now
study properties of the composition of couplings.

\begin{proposition}
\label{idVirKring}The diagonal coupling $\delta_{\nu}$ in Eq. (\ref{diagKop})
is the identity for composition of couplings in the sense that $\delta_{\nu
}\circ\psi=\psi$ and $\omega\circ\delta_{\nu}=\omega$.
\end{proposition}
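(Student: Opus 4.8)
The plan is to reduce everything to the identity $E_{\omega\circ\psi} = E_{\psi}\circ E_{\omega}$ recorded in Eq.~(\ref{EkringE}), together with the characterization of couplings by their associated u.c.p.\ maps furnished by Corollary~\ref{unKop} (two couplings of the same pair of states coincide iff their $E$-maps coincide). So the whole statement becomes a pair of computations of $E$-maps, followed by an appeal to uniqueness.

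\textbf{First claim: $\delta_{\nu}\circ\psi = \psi$.} Here $\psi$ is a coupling of $(B,\nu)$ and $(C,\xi)$, and $\delta_{\nu}$ is the diagonal coupling of $\nu$ with itself, which by Corollary~\ref{diagVsId} satisfies $E_{\delta_{\nu}} = \id_{B}$. Then by Eq.~(\ref{EkringE}), $E_{\delta_{\nu}\circ\psi} = E_{\psi}\circ E_{\delta_{\nu}} = E_{\psi}\circ\id_{B} = E_{\psi}$. Since $\delta_{\nu}\circ\psi$ and $\psi$ are both couplings of $\nu$ and $\xi$ with the same associated map, Corollary~\ref{unKop} gives $\delta_{\nu}\circ\psi = \psi$.

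\textbf{Second claim: $\omega\circ\delta_{\nu} = \omega$.} Here $\omega$ is a coupling of $(A,\mu)$ and $(B,\nu)$. Again by Corollary~\ref{diagVsId}, $E_{\delta_{\nu}} = \id_{B}$, so Eq.~(\ref{EkringE}) yields $E_{\omega\circ\delta_{\nu}} = E_{\delta_{\nu}}\circ E_{\omega} = \id_{B}\circ E_{\omega} = E_{\omega}$. Both $\omega\circ\delta_{\nu}$ and $\omega$ are couplings of $\mu$ and $\nu$, so Corollary~\ref{unKop} forces $\omega\circ\delta_{\nu} = \omega$.

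\textbf{Remark on difficulty.} There is essentially no obstacle: the substantive work was already done in establishing Eq.~(\ref{EkringE}) and Corollary~\ref{diagVsId}. The only point requiring a moment's care is bookkeeping about which pair of states each coupling lives on (so that Corollary~\ref{unKop} applies), and the order of composition in $E_{\omega\circ\psi} = E_{\psi}\circ E_{\omega}$ — i.e.\ that $\delta_{\nu}$ on the right of $\circ$ inserts $E_{\delta_{\nu}}$ on the \emph{left} of the composed map, and vice versa — but in both cases $E_{\delta_{\nu}} = \id_{B}$ makes the side irrelevant. One could alternatively give a direct argument unwinding the definition~(\ref{kopKring}) and the formula $\delta_{\nu}(b_{1}\otimes b') = \langle\Lambda_{\nu}, b_{1}b'\Lambda_{\nu}\rangle$, but routing through $E$-maps is cleaner and is the natural approach given the machinery already set up.
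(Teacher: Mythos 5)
Your argument is exactly the paper's proof: both reduce the claim to $E_{\delta_{\nu}}=\id_{B}$ (Corollary \ref{diagVsId}), the composition formula $E_{\omega\circ\psi}=E_{\psi}\circ E_{\omega}$ of Eq. (\ref{EkringE}), and the uniqueness statement of Corollary \ref{unKop}. The bookkeeping of which states each composed coupling lives on, and which side $E_{\delta_{\nu}}$ enters the composition, is handled correctly, so nothing is missing.
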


\begin{proof}
By Corollary \ref{diagVsId}, $E_{\delta_{\nu}}=\id_{B}$. Hence,
from Eq. (\ref{EkringE}), we obtain 
$E_{\delta_{\nu}\circ\psi}=E_{\psi}\circ E_{\delta_{\nu}}=E_{\psi}$ and 
$E_{\omega\circ\delta_{\nu}}=E_{\delta_{\nu}}\circ E_{\omega}=E_{\omega}$, 
which concludes the proof by Corollary
\ref{unKop}.
\end{proof}

In order to treat further properties of $\omega\circ\psi$ and the connection
with the theory of correspondences, we need to set up the relevant notation:

Continuing with the notation in the previous two sections, also assuming
$(C,\xi)$ to be in its cyclic representation 
$(G_{\xi},\operatorname{id}_{C},\Lambda_{\xi})$, 
and denoting the cyclic representation of $(B\odot C^{\prime},\psi)$ by 
$(K_{\psi},\varphi_{\psi},\Psi_{\psi})$, it follows that
\[
K_{\nu}:=\overline{\pi_{\psi}(B\otimes1)\Psi_{\psi}}\text{, }\varphi_{\nu
}(b):=\varphi_{\psi}(b\otimes1)|_{K_{\nu}}\text{ and }\Psi_{\nu}:=\Psi_{\psi}
\]
gives a third cyclic representation $(K_{\nu},\varphi_{\nu},\Lambda_{\nu})$ of
$(B,\nu)$, and that
\begin{equation}
K_{\xi}:=\overline{\pi_{\psi}(1\otimes C^{\prime})\Psi_{\psi}}\text{, }
\varphi_{\xi^{\prime}}(c^{\prime}):=\varphi_{\psi}(1\otimes c^{\prime
})|_{K_{\xi}}\text{ and }\Psi_{\xi}:=\Psi_{\psi} \label{Kxi}
\end{equation}
gives a cyclic representation $(K_{\xi},\varphi_{\xi^{\prime}},\Psi_{\xi})$ of
$(C^{\prime},\xi^{\prime})$. Note that to help keep track of where we are, we
use the symbol $K$ instead of $H$ for the Hilbert spaces originating from
$\psi$ (as opposed to $\omega$), and similarly we use $\varphi$ instead of
$\pi$, and $\Psi$ instead of $\Omega$.

We can define a unitary equivalence
\begin{equation}
v_{\nu}:G_{\nu}\rightarrow K_{\nu} \label{GnuKnu}
\end{equation}
from $(G_{\nu},\operatorname{id}_{B},\Lambda_{\nu})$ to $(K_{\nu},\varphi
_{\nu},\Psi_{\nu})$ by
\[
v_{\nu}b\Lambda_{\nu}:=\varphi_{\nu}(b)\Psi_{\nu}
\]
for all $b\in B$. Then
\[
\varphi_{\nu}(b):=v_{\nu}bv_{\nu}^{\ast}
\]
for all $b\in B$.

By Theorem \ref{E-eienskap} we can then define the normal u.c.p. map
$E_{\psi^{\prime}}:C^{\prime}\rightarrow B^{\prime}$. By Proposition
\ref{Eomega'} this map is the dual $E_{\psi}^{\prime}$ of $E_{\psi}$, and we
can write it as
\begin{equation}
E_{\psi}^{\prime}:C^{\prime}\rightarrow B^{\prime}:c^{\prime}\mapsto v_{\nu
}^{\ast}\iota_{K_{\nu}}^{\ast}\varphi_{\psi}(1\otimes c^{\prime})\iota
_{K_{\nu}}v_{\nu}=v_{\nu}^{\ast}Q_{\nu}\varphi_{\psi}(1\otimes c^{\prime
})v_{\nu} \label{E'}
\end{equation}
where $Q_{\nu}$ is the projection of $K_{\psi}$ onto $K_{\nu}$, and 
$Q_{\nu}=\iota_{K_{\nu}}^{\ast}$ with $\iota_{K_{\nu}}:K_{\nu}\rightarrow K_{\psi}$
the inclusion map, in analogy to $P_{\nu}=\iota_{H_{\nu}}^{\ast}$ in
Proposition \ref{Eformule}.

The coupling $\omega\circ\psi$ can now be expressed in various ways:

\begin{proposition}
\label{KringItvE}The coupling $\omega\circ\psi$ is given by the following
formulas:
\begin{equation}
\omega\circ\psi=\delta_{\nu}\circ(E_{\omega}\odot E_{\psi}^{\prime})
\label{KringDef}
\end{equation}
and
\[
\omega\circ\psi=\delta_{\mu}\circ(\id_{A}\odot(E_{\omega
}^{\prime}\circ E_{\psi}^{\prime}))
\]
in terms of Eq. (\ref{diagKop}), as well as
\begin{equation}
\omega\circ\psi(a\otimes c^{\prime})=\psi(E_{\omega}(a)\otimes c^{\prime
})=\omega(a\otimes E_{\psi}^{\prime}(c^{\prime})) \label{KringItvPsiOmega}
\end{equation}
and
\begin{equation}
\omega\circ\psi(a\otimes c^{\prime})=\left\langle u_{\nu}^{\ast}P_{\nu}
\pi_{\mu}(a^{\ast})\Omega_{\omega},v_{\nu}^{\ast}Q_{\nu}\varphi_{\xi^{\prime}
}(c^{\prime})\Psi_{\psi}\right\rangle \label{KringItvInpr}
\end{equation}
(in the inner product of the Hilbert space $G_{\nu}$) for all $a\in A$ and
$c^{\prime}\in C^{\prime}$.
\end{proposition}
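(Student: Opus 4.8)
The plan is to verify each of the four displayed formulas for $\omega\circ\psi$ by reducing them to the defining equation $E_{\omega\circ\psi}=E_\psi\circ E_\omega$ from Eq.~(\ref{EkringE}), together with Proposition~\ref{UnE} and its analogues. The unifying observation is that a coupling is completely determined by its associated $E$-map (Corollary~\ref{unKop}), so for each candidate expression it suffices to either recognise it as being of the form $\delta_{(\cdot)}\circ(E\odot\id)$ for the right $E$, or to check the evaluation on elementary tensors $a\otimes c'$ directly.

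First I would establish Eq.~(\ref{KringItvPsiOmega}). The identity $\omega\circ\psi(a\otimes c')=\psi(E_\omega(a)\otimes c')$ is immediate: by Eq.~(\ref{kopKring}) the left side is $\delta_\xi(E_\psi(E_\omega(a))\otimes c')$, and applying Proposition~\ref{UnE} to the coupling $\psi$ (with $\delta_\xi$ in place of $\delta_\nu$) this equals $\psi(E_\omega(a)\otimes c')$. For the second equality $\psi(E_\omega(a)\otimes c')=\omega(a\otimes E_\psi'(c'))$, I would use Proposition~\ref{UnE} on $\psi$ once more to write $\psi(b\otimes c')=\delta_\nu(b\otimes E_\psi'(c'))$ — wait, more carefully: one uses the ``$\omega'$'' construction, Eq.~(\ref{kopAksent}), which gives $\psi(b\otimes c')=\psi'(c'\otimes b)=\delta_{\nu'}(E_\psi'(c')\otimes b)=\delta_\nu(b\otimes E_\psi'(c'))$, since $E_{\psi'}=E_\psi'$ by Proposition~\ref{Eomega'} and $\delta_{\nu'}(b'\otimes b)=\langle\Lambda_\nu,b'b\Lambda_\nu\rangle=\langle\Lambda_\nu,bb'\Lambda_\nu\rangle=\delta_\nu(b\otimes b')$. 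Putting $b=E_\omega(a)$ and invoking Proposition~\ref{UnE} on $\omega$ gives $\omega(a\otimes E_\psi'(c'))$. This same chain, read through the definition of $\delta_\nu$, simultaneously delivers Eq.~(\ref{KringDef}): $\omega\circ\psi(a\otimes c')=\delta_\nu(E_\omega(a)\otimes E_\psi'(c'))$, i.e.\ $\omega\circ\psi=\delta_\nu\circ(E_\omega\odot E_\psi')$.

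Next, the formula $\omega\circ\psi=\delta_\mu\circ(\id_A\odot(E_\omega'\circ E_\psi'))$ follows by the same trick applied to the first variable: starting from Eq.~(\ref{KringDef}), pull $E_\omega$ across into a dual using the $\omega'$-identity (Eq.~(\ref{kopAksent})) in the form $\delta_\nu(E_\omega(a)\otimes b')=\omega(a\otimes b')$, then re-expand $\omega(a\otimes b')=\omega'(b'\otimes a)=\delta_{\mu'}(E_\omega'(b')\otimes a)=\delta_\mu(a\otimes E_\omega'(b'))$; substituting $b'=E_\psi'(c')$ gives the claim. Finally, Eq.~(\ref{KringItvInpr}) is the Hilbert-space incarnation: using Eq.~(\ref{E}) and the computation in the proof of Proposition~\ref{UnE}, $\delta_\nu(b\otimes b')=\langle\Lambda_\nu,bb'\Lambda_\nu\rangle=\langle b^*\Lambda_\nu,b'\Lambda_\nu\rangle$; taking $b=E_\omega(a^*)=u_\nu^*P_\nu\pi_\mu(a^*)\Omega_\omega u_\nu$ — more precisely writing $E_\omega(a)\Lambda_\nu=u_\nu^*P_\nu\pi_\mu(a)\Omega_\omega$ and $E_\psi'(c')\Lambda_\nu=v_\nu^*Q_\nu\varphi_{\xi'}(c')\Psi_\psi$ from Eqs.~(\ref{E}), (\ref{pi_nu}), (\ref{E'}) — and substituting into Eq.~(\ref{KringDef}) yields exactly the stated inner product in $G_\nu$.

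The only mild subtlety — the main obstacle, such as it is — is bookkeeping: one must be consistent about which of the three cyclic representations ($G_\nu$, $H_\nu$, $K_\nu$) each vector lives in, and make sure the unitaries $u_\nu$ and $v_\nu$ are applied in the right places when translating the abstract formula~(\ref{KringDef}) into the concrete inner-product formula~(\ref{KringItvInpr}). There is no hard estimate or existence argument; every step is an application of Proposition~\ref{UnE}, the symmetry $\delta_{\nu'}(b'\otimes b)=\delta_\nu(b\otimes b')$, and Eq.~(\ref{kopAksent}), so the proof is essentially a sequence of rewrites closing on the definition $E_{\omega\circ\psi}=E_\psi\circ E_\omega$.
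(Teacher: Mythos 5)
Your proof is correct and follows essentially the same route as the paper: both rest on the key identity $\omega\circ\psi(a\otimes c^{\prime})=\delta_{\nu}(E_{\omega}(a)\otimes E_{\psi}^{\prime}(c^{\prime}))$ and then obtain the remaining formulas by Proposition \ref{UnE}, the duality relation behind Eq.\ (\ref{kopAksent}) (the paper invokes Theorem \ref{duaalBestaan} directly where you route through $\psi^{\prime}$ and $\omega^{\prime}$, which is the same content), and the explicit Hilbert-space expressions $E_{\omega}(a^{\ast})\Lambda_{\nu}=u_{\nu}^{\ast}P_{\nu}\pi_{\mu}(a^{\ast})\Omega_{\omega}$ and $E_{\psi}^{\prime}(c^{\prime})\Lambda_{\nu}=v_{\nu}^{\ast}Q_{\nu}\varphi_{\xi^{\prime}}(c^{\prime})\Psi_{\psi}$. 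The only blemish is the momentary conflation of the operator $E_{\omega}(a^{\ast})$ with a vector in your parenthetical aside, which you immediately correct, so it does not affect the argument.
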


\begin{proof}
From Eqs. (\ref{kopKring}) and (\ref{diagKop}), and Theorem \ref{duaalBestaan}, 
we have
\begin{align}
\omega\circ\psi(a\otimes c^{\prime})  &  =\left\langle \Lambda_{\xi},E_{\psi
}(E_{\omega}(a))c^{\prime}\Lambda_{\xi}\right\rangle \nonumber\\
&  =\left\langle \Lambda_{\nu},E_{\omega}(a)E_{\psi}^{\prime}(c^{\prime
})\Lambda_{\nu}\right\rangle \label{kringItvEE'}
\end{align}
from which Eq. (\ref{KringDef}) follows. Continuing with the last expression
above, we respectively have by Theorem \ref{duaalBestaan} that
\begin{align*}
\omega\circ\psi(a\otimes c^{\prime})  &  =\left\langle \Lambda_{\mu
},aE_{\omega}^{\prime}(E_{\psi}^{\prime}(c^{\prime}))\Lambda_{\mu
}\right\rangle \\
&  =\delta_{\mu}\circ(\operatorname{id}_{A}\odot(E_{\omega}^{\prime}\circ
E_{\psi}^{\prime}))(a\otimes c^{\prime}),
\end{align*}
by Proposition \ref{UnE} that
\[
\omega\circ\psi(a\otimes c^{\prime})=\omega(a\otimes E_{\psi}^{\prime
}(c^{\prime}))
\]
and by Proposition \ref{Eomega'} that
\begin{align*}
\omega\circ\psi(a\otimes c^{\prime})  
&  =\left\langle \Lambda_{\nu},E_{\psi^{\prime}}(c^{\prime})E_{\omega}(a)\Lambda_{\nu}\right\rangle \\
&  =\psi^{\prime}(c^{\prime}\otimes E_{\omega}(a))\\
&  =\psi(E_{\omega}(a)\otimes c^{\prime}),
\end{align*}
where in the second line we again applied Proposition \ref{UnE}, while the
last line follows from the definition of $\psi^{\prime}$, as in Eq.
(\ref{kopAksent}).

On Hilbert space level we again have from Eq. (\ref{kringItvEE'}) that
\begin{align*}
\omega\circ\psi(a\otimes c^{\prime})  &  =\left\langle E_{\omega}(a^{\ast
})\Lambda_{\nu},E_{\psi}^{\prime}(c^{\prime})\Lambda_{\nu}\right\rangle \\
&  =\left\langle u_{\nu}^{\ast}P_{\nu}\pi_{\omega}(a^{\ast}\otimes1)u_{\nu
}\Lambda_{\nu},v_{\nu}^{\ast}Q_{\nu}\varphi_{\psi}(1\otimes c^{\prime})v_{\nu
}\Lambda_{\nu}\right\rangle \\
&  =\left\langle u_{\nu}^{\ast}P_{\nu}\pi_{\mu}(a^{\ast})\Omega_{\omega
},v_{\nu}^{\ast}Q_{\nu}\varphi_{\xi^{\prime}}(c^{\prime})\Psi_{\psi
}\right\rangle
\end{align*}
for all $a\in A$ and $c^{\prime}\in C^{\prime}$, using Theorem
\ref{E-eienskap} (and Proposition \ref{Eformule}) as well as Eqs. (\ref{E'}),
(\ref{Hmu}) and (\ref{Kxi}).
\end{proof}

At the end of this section $\omega\circ\psi$ will also be expressed in terms
of the theory of relative tensor products of bimodules; see Corollary
\ref{kringItvRelTens}.

Next we consider triviality of transitivity, namely when $\omega\circ\psi
=\mu\odot\xi^{\prime}$, in which case we also say that the couplings $\omega$
and $\psi$ are \emph{orthogonal}, in analogy to the case of classical joinings
\cite[Definition 6.9]{G}. We first note the following:

\begin{proposition}
\label{trivKring}If either $\omega=\mu\odot\nu^{\prime}$ or $\psi=\nu\odot
\xi^{\prime}$, then $\omega\circ\psi=\mu\odot\xi^{\prime}$.
\end{proposition}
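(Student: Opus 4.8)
The plan is to reduce everything to the behaviour of the maps $E_\omega$ and $E_\psi$ via the identity $E_{\omega\circ\psi}=E_\psi\circ E_\omega$ from Eq.~(\ref{EkringE}), together with the characterization of trivial couplings in Corollary~\ref{trivkop} and the bijection between couplings and their $E$-maps in Corollary~\ref{unKop}. Concretely, I want to show that if $E_\omega(A)=\mathbb{C}1_B$ or $E_\psi(B)=\mathbb{C}1_C$, then $E_{\omega\circ\psi}(A)=\mathbb{C}1_C$, and then invoke the converse direction of Corollary~\ref{trivkop} to conclude $\omega\circ\psi=\mu\odot\xi^\prime$.

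First I would handle the case $\omega=\mu\odot\nu^\prime$. By Corollary~\ref{trivkop}, this gives $E_\omega(a)=\mu(a)1_B$ for all $a\in A$. Then, using Eq.~(\ref{EkringE}) and unitality of $E_\psi$,
\[
E_{\omega\circ\psi}(a)=E_\psi(E_\omega(a))=E_\psi(\mu(a)1_B)=\mu(a)1_C
\]
for all $a\in A$, so $E_{\omega\circ\psi}(A)=\mathbb{C}1_C$, and Corollary~\ref{trivkop} (converse direction) yields $\omega\circ\psi=\mu\odot\xi^\prime$. Next I would handle the case $\psi=\nu\odot\xi^\prime$: again by Corollary~\ref{trivkop}, $E_\psi(b)=\nu(b)1_C$ for all $b\in B$, so
\[
E_{\omega\circ\psi}(a)=E_\psi(E_\omega(a))=\nu(E_\omega(a))1_C=\mu(a)1_C
\]
for all $a\in A$, using Eq.~(\ref{nuE=mu}) of Theorem~\ref{E-eienskap}. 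Hence $E_{\omega\circ\psi}(A)=\mathbb{C}1_C$, and once more Corollary~\ref{trivkop} gives $\omega\circ\psi=\mu\odot\xi^\prime$.

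There is essentially no obstacle here; the statement is a routine consequence of the functorial identity $E_{\omega\circ\psi}=E_\psi\circ E_\omega$ and the already-established dictionary between couplings and $E$-maps. The only mild point of care is to apply the correct direction of Corollary~\ref{trivkop} each time (the forward direction to extract the form of $E_\omega$ or $E_\psi$ from triviality of $\omega$ or $\psi$, and the converse direction to deduce triviality of $\omega\circ\psi$ from $E_{\omega\circ\psi}(A)=\mathbb{C}1_C$), and to remember that $E_\psi$ is unital and satisfies $\nu\circ E_\omega=\mu$ (Theorem~\ref{E-eienskap}) in the two respective cases. Alternatively, one could argue directly at the level of functionals using the formula $\omega\circ\psi(a\otimes c^\prime)=\psi(E_\omega(a)\otimes c^\prime)$ from Eq.~(\ref{KringItvPsiOmega}), but the $E$-map argument is cleaner and shorter.
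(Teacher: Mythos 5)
Your proof is correct and follows essentially the same route as the paper: the paper likewise reduces to $E_{\mu\odot\nu'}=\mu(\cdot)1_B$, $E_{\nu\odot\xi'}=\nu(\cdot)1_C$, unitality of $E_\psi$ and $\nu\circ E_\omega=\mu$, merely evaluating $\omega\circ\psi(a\otimes c')=\delta_\xi(E_\psi(E_\omega(a))\otimes c')$ directly instead of passing through $E_{\omega\circ\psi}$ and the coupling--$E$-map dictionary. The difference is purely cosmetic, and your application of Corollary \ref{trivkop} in both directions is sound.
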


\begin{proof}
By Proposition \ref{UnE}, $E_{\mu\odot\nu^{\prime}}=\mu(\cdot)1_{B}$ and
$E_{\nu\odot\xi^{\prime}}=\nu(\cdot)1_{C}$, so $(\mu\odot\nu^{\prime}
)\circ\psi(a\otimes c^{\prime})=$ $\delta_{\xi}(\mu(a)1_{C}\otimes c^{\prime
})=$ $\mu(a)\xi^{\prime}(c^{\prime})$ and $\omega\circ(\nu\odot\xi^{\prime
})(a\otimes c^{\prime})=\delta_{\xi}(\nu(E_{\omega}(a))1_{C}\otimes c^{\prime
})=\mu(a)\xi^{\prime}(c^{\prime})$ according to Eq. (\ref{kopKring}) and
Theorem \ref{E-eienskap}.
\end{proof}

However, as will be seen by example in Subsection \ref{VbBal}, in general it
is possible that $\omega\circ\psi=\mu\odot\xi^{\prime}$ even when $\omega
\neq\mu\odot\nu^{\prime}$ and $\psi\neq\nu\odot\xi^{\prime}$. In order for
$\omega\circ\psi\neq\mu\odot\xi^{\prime}$ to hold, there has to be sufficient
``overlap'' between $\omega$ and $\psi$. The following makes this precise on
Hilbert space level and also explains the use of the term ``orthogonal'' above:

\begin{proposition}
\label{trivtranskar}We have $\omega\circ\psi=\mu\odot\xi^{\prime}$ if and only
if
\[
u_{\nu}^{\ast}[P_{\nu}H_{\mu}\ominus\mathbb{C}\Omega_{\omega}]
\perp 
v_{\nu}^{\ast}[Q_{\nu}K_{\xi}\ominus\mathbb{C}\Psi_{\psi}]
\]
in the Hilbert space $G_{\nu}$ (see Section \ref{afdE}), where $P_{\nu}$ and
$Q_{\nu}$ are the projections of $H_{\omega}$ onto $H_{\nu}$ and $K_{\psi}$
onto $K_{\nu}$ respectively, and $u_{\nu}$ and $v_{\nu}$ are the unitaries
defined above (see Eqs. (\ref{GnuHnu}) and (\ref{GnuKnu})).
\end{proposition}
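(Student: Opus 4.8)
The plan is to rewrite the condition $\omega\circ\psi=\mu\odot\xi'$ using the inner-product formula \eqref{KringItvInpr}, and then translate the resulting statement about vanishing of a bilinear form into the stated orthogonality of two subspaces of $G_\nu$. First I would recall from \eqref{KringItvInpr} that
\[
\omega\circ\psi(a\otimes c')=\left\langle u_{\nu}^{\ast}P_{\nu}\pi_{\mu}(a^{\ast})\Omega_{\omega},\,v_{\nu}^{\ast}Q_{\nu}\varphi_{\xi^{\prime}}(c^{\prime})\Psi_{\psi}\right\rangle,
\]
while $(\mu\odot\xi')(a\otimes c')=\mu(a)\xi'(c')$. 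Using that $u_\nu^\ast\Omega_\omega=\Lambda_\nu$ and $v_\nu^\ast\Psi_\psi=\Lambda_\nu$ (these follow from the definitions of $u_\nu,v_\nu$ in Eqs. \eqref{GnuHnu}, \eqref{GnuKnu}), together with $P_\nu\Omega_\omega=\Omega_\omega$, $Q_\nu\Psi_\psi=\Psi_\psi$, one checks that $\mu(a)=\langle\Lambda_\nu,u_\nu^\ast P_\nu\pi_\mu(a)\Omega_\omega\rangle$ and $\xi'(c')=\langle\Lambda_\nu, v_\nu^\ast Q_\nu\varphi_{\xi'}(c')\Psi_\psi\rangle$, so that $\mu(a)\xi'(c')$ is exactly the inner product of the projections of $u_\nu^\ast P_\nu\pi_\mu(a^\ast)\Omega_\omega$ and $v_\nu^\ast Q_\nu\varphi_{\xi'}(c')\Psi_\psi$ onto the line $\mathbb{C}\Lambda_\nu$. (One must be slightly careful with the $a$ versus $a^\ast$ and the conjugate-linearity, but since the identity is required for all $a$ this causes no real trouble.)

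Next I would decompose each of the two vectors into its component along $\mathbb{C}\Lambda_\nu$ and its component in the orthogonal complement inside $G_\nu$. Writing $\xi_a:=u_\nu^\ast P_\nu\pi_\mu(a^\ast)\Omega_\omega$ and $\eta_{c'}:=v_\nu^\ast Q_\nu\varphi_{\xi'}(c')\Psi_\psi$, the previous paragraph shows $\langle\xi_a,\eta_{c'}\rangle - \mu(a)\xi'(c')$ equals the inner product of the two "fluctuation" vectors $\xi_a-\mu(a^\ast)^{-}\Lambda_\nu$ (suitably interpreted) and $\eta_{c'}-\xi'(c')\Lambda_\nu$, i.e. the inner product of the components orthogonal to $\Lambda_\nu$. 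Hence $\omega\circ\psi=\mu\odot\xi'$ is equivalent to the vanishing of this inner product for all $a\in A$, $c'\in C'$. Now as $a$ ranges over $A$, the vectors $\pi_\mu(a^\ast)\Omega_\omega=\pi_\mu(a)^\ast\Omega_\mu$ span a dense subspace of $H_\mu$, so $u_\nu^\ast P_\nu\pi_\mu(a^\ast)\Omega_\omega$ ranges over a dense subspace of $u_\nu^\ast P_\nu H_\mu$; subtracting off the $\Lambda_\nu$-components, these fluctuation vectors span (a dense subspace of) $u_\nu^\ast[P_\nu H_\mu\ominus\mathbb{C}\Omega_\omega]$. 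Similarly the vectors $\eta_{c'}$, after removing $\Lambda_\nu$-components, span (a dense subspace of) $v_\nu^\ast[Q_\nu K_\xi\ominus\mathbb{C}\Psi_\psi]$, using that $\varphi_{\xi'}(C')\Psi_\psi$ is dense in $K_\xi$. Therefore the bilinear form vanishes identically on $A\times C'$ if and only if these two closed subspaces of $G_\nu$ are orthogonal, which is the claimed condition.

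The main obstacle I expect is purely bookkeeping: keeping straight the several unitary identifications ($u_\mu,u_\nu$ from $\omega$ and $v_\nu$ from $\psi$, all landing in the common space $G_\nu$), verifying that under these identifications $P_\nu H_\mu$ and $Q_\nu K_\xi$ both contain $\Lambda_\nu$ (so that $\ominus\mathbb{C}\Omega_\omega$ and $\ominus\mathbb{C}\Psi_\psi$ make sense and correspond to $\ominus\mathbb{C}\Lambda_\nu$ in $G_\nu$), and handling the conjugate-linearity in $a$ without sign or adjoint errors. Once the formula \eqref{KringItvInpr} is in hand, none of the steps is deep — the content is that "removing the trivial part $\mu\odot\xi'$ corresponds exactly to removing the rank-one $\Lambda_\nu$-part of each representing vector," after which orthogonality of the bilinear form is orthogonality of the spans. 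I would close by noting that density of $\pi_\mu(A)\Omega_\mu$ in $H_\mu$ and of $\varphi_{\xi'}(C')\Psi_\psi$ in $K_\xi$ is exactly what lets us pass from "for all $a,c'$" to the subspace statement, completing the proof.
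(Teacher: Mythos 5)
Your proposal is correct and follows essentially the same route as the paper: both start from Eq. (\ref{KringItvInpr}), split each representing vector into its component along the cyclic vector ($\Omega_{\omega}$, resp. $\Psi_{\psi}$, i.e. $\Lambda_{\nu}$ after applying $u_{\nu}^{\ast}$, $v_{\nu}^{\ast}$) and the orthogonal ``fluctuation'' part, observe that the cross terms vanish and the rank-one term is exactly $\mu\odot\xi^{\prime}(a\otimes c^{\prime})$, and then use density of $\pi_{\mu}(A)\Omega_{\omega}$ in $H_{\mu}$ and of $\varphi_{\xi^{\prime}}(C^{\prime})\Psi_{\psi}$ in $K_{\xi}$ to pass to the subspace orthogonality statement. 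The only cosmetic difference is that the paper performs the decomposition via the projections $P=P_{\nu}-P_{\Omega_{\omega}}$ and $Q=Q_{\nu}-Q_{\Psi_{\psi}}$ before applying the unitaries, whereas you decompose in $G_{\nu}$ relative to $\Lambda_{\nu}$ afterwards; these are the same computation up to unitary conjugation.
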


\begin{proof}
In terms of the projections $P_{\Omega_{\omega}}$ and $Q_{\Psi_{\psi}}$ of
$H_{\omega}$ and $K_{\psi}$ onto $\mathbb{C}\Omega_{\omega}$ and
$\mathbb{C}\Psi_{\psi}$ respectively, we have%
\begin{align*}
&  \left\langle u_{\nu}^{\ast}P_{\Omega_{\omega}}\pi_{\mu}(a^{\ast}
)\Omega_{\omega},v_{\nu}^{\ast}Q_{\Psi_{\psi}}\varphi_{\xi^{\prime}}
(c^{\prime})\Psi_{\psi}\right\rangle \\
&  =\left\langle \left\langle \Omega_{\omega},\pi_{\mu}(a^{\ast}
)\Omega_{\omega}\right\rangle u_{\nu}^{\ast}\Omega_{\omega},\left\langle
\Psi_{\psi},\varphi_{\xi^{\prime}}(c^{\prime})\Psi_{\psi}\right\rangle v_{\nu
}^{\ast}\Psi_{\psi}\right\rangle \\
&  =\mu(a)\xi^{\prime}(c^{\prime})\left\langle \Lambda_{\nu},\Lambda_{\nu
}\right\rangle \\
&  =\mu\odot\xi^{\prime}(a\otimes c^{\prime})
\end{align*}
for all $a\in A$ and $c^{\prime}\in C^{\prime}$. In terms of $P:=P_{\nu
}-P_{\Omega_{\omega}}$ and $Q:=Q_{\nu}-Q_{\Psi_{\psi}}$, it then follows from
Eq. (\ref{KringItvInpr}) that%
\begin{align*}
&  \omega\circ\psi(a\otimes c^{\prime})-\mu\odot\xi^{\prime}(a\otimes
c^{\prime})\\
&  =\left\langle u_{\nu}^{\ast}P\pi_{\mu}(a^{\ast})\Omega_{\omega},v_{\nu
}^{\ast}Q\varphi_{\xi^{\prime}}(c^{\prime})\Psi_{\psi}\right\rangle \\
&  +\left\langle u_{\nu}^{\ast}P\pi_{\mu}(a^{\ast})\Omega_{\omega},v_{\nu
}^{\ast}Q_{\Psi_{\psi}}\varphi_{\xi^{\prime}}(c^{\prime})\Psi_{\psi
}\right\rangle +\left\langle u_{\nu}^{\ast}P_{\Omega_{\omega}}\pi_{\mu
}(a^{\ast})\Omega_{\omega},v_{\nu}^{\ast}Q\varphi_{\xi^{\prime}}(c^{\prime
})\Psi_{\psi}\right\rangle \\
&  =\left\langle u_{\nu}^{\ast}P\pi_{\mu}(a^{\ast})\Omega_{\omega},v_{\nu
}^{\ast}Q\varphi_{\xi^{\prime}}(c^{\prime})\Psi_{\psi}\right\rangle .
\end{align*}
For the last line we used $u_{\nu}^{\ast}PH_{\omega}=G_{\nu}\ominus
\mathbb{C}\Lambda_{\nu}$ and $v_{\nu}^{\ast}Q_{\Psi_{\psi}}K_{\psi}
=\mathbb{C}\Lambda_{\nu}$ to obtain the one term as zero, while the other term
is zero, since $v_{\nu}^{\ast}QK_{\psi}=G_{\nu}\ominus\mathbb{C}\Lambda_{\nu}$
and $u_{\nu}^{\ast}P_{\Omega_{\omega}}H_{\omega}=\mathbb{C}\Lambda_{\nu}$.
Therefore $\omega\circ\psi(a\otimes c^{\prime})-\mu\odot\xi^{\prime}(a\otimes
c^{\prime})$ is zero for all $a\in A$ and $c^{\prime}\in C^{\prime}$ if and
only if $u_{\nu}^{\ast}[P_{\nu}H_{\mu}\ominus\mathbb{C}\Omega_{\omega}]\perp
v_{\nu}^{\ast}[Q_{\nu}K_{\xi}\ominus\mathbb{C}\Psi_{\psi}]$.
\end{proof}

To conclude this section, we discuss bimodules and correspondences, the main
goal being to show how $\omega\circ\psi$ can be expressed in terms of the
relative tensor product of bimodules obtained from $\omega$ and $\psi$. Along
the way we get an indication of the connection between couplings and
correspondences. Also see \cite{BCM} for a related discussion of
correspondences in the context of joinings.

The theory of correspondences was originally developed by Connes, but never
published in full, although it is discussed briefly in his book \cite[Appendix
V.B]{Con}. In short, a \emph{correspondence} from one von Neumann algebra,
$M$, to another, $N$, is an $M$-$N$-bimodule (where the direction \emph{from}
$M$ \emph{to} $N$, is the convention used in this paper).

For details on the relative tensor product, see for example \cite[Section
IX.3]{T2} and \cite{Fal}, but also \cite{Sa} for some of the early work on
this topic. We only outline the most pertinent aspects of relative tensor
products, and the reader is referred to these sources, in particular
\cite[Section IX.3]{T2}, for a more systematic exposition.

As before, let
\[
j_{\nu}(b):=J_{\nu}b^{\ast}J_{\nu}
\]
for all $b\in B(G_{\nu})$, with $J_{\nu}:G_{\nu}\rightarrow G_{\nu}$ the
modular conjugation associated with $(B,\Lambda_{\nu})$. Similarly, with
$(C,\xi)$ in its cyclic representation $(G_{\xi},$id$_{C},\Lambda_{\xi})$,
let
\[
j_{\xi}(c):=J_{\xi}c^{\ast}J_{\xi}
\]
for all $c\in B(G_{\xi})$, with $J_{\xi}:G_{\xi}\rightarrow G_{\xi}$ the
modular conjugation associated with $(C,\Lambda_{\xi})$.

Given a coupling $\omega$ of $(A,\mu)$ and $(B,\nu)$ as at the beginning 
of this section, we can view $H=H_{\omega}$ as an $A$-$B$-bimodule by setting
\[
\pi_{H}(a):=\pi_{\omega}(a\otimes1)
\]
and
\[
\pi_{H}^{\prime}(b):=\pi_{\omega}(1\otimes j_{\nu}(b)),
\]
and writing
\[
axb:=\pi_{H}(a)\pi_{H}^{\prime}(b)x
\]
for all $a\in A,$ $b\in B$, and $x\in H$. As already mentioned in Remark
\ref{KorNor}, $\pi_{H}$ is normal, as required for it to give a left
$A$-module, and similarly $\pi_{H}^{\prime}$ gives a normal right action of
$B$ on $H$; again see \cite[Theorem 3.3]{BCM}. When viewing $H$ as the $A$
-$B$-bimodule thus defined, we also denote it by $_{A}H_{B}$. This module is
therefore an example of a correspondence from $A$ to $B$.

With $\psi$ a coupling of $(B,\nu)$ and $(C,\xi)$ as at the beginning of this section, 
and $(K_{\psi},\varphi_{\psi},\Psi_{\psi})$ the corresponding cyclic representation 
as before, but now using the notation $K=K_{\psi}$, we analogously 
obtain the $B$-$C$-bimodule $_{B}K_{C}$ via $\pi_{K}$ and $\pi_{K}^{\prime}$ 
given by
\[
\pi_{K}(b):=\varphi_{\psi}(b\otimes1)
\]
and
\[
\pi_{K}^{\prime}(c):=\varphi_{\psi}(1\otimes j_{\xi}(c))
\]
which enables us to write
\[
byc:=\pi_{K}(b)\pi_{K}^{\prime}(c)y
\]
for all $b\in B$, $c\in C$, and $y\in K$.

Now we form the relative tensor product (see \cite[Definition IX.3.16]{T2})
\[
_{A}X_{C}:=H\otimes_{\nu}K
\]
with respect to the faithful normal state $\nu$. This is also a Hilbert space
(its inner product will be discussed below) and, as the notation on the left
suggests, the relative tensor product is itself a $A$-$C$-bimodule. This is a
special case of \cite[Corollary IX.3.18]{T2}. The reason it works is that
since $H$ is a $A$-$B$-bimodule, any element of $\pi_{H}(A)$ can be viewed as
an element of $\mathcal{L}(H_{B})$, the space of all bounded (in the usual
sense of linear operators on Hilbert spaces) right $B$-module maps. Similarly
for the right action of $C$. So $_{A}X_{C}$ is a correspondence from $A$ to
$C$, which can be viewed as the composition of the correspondences $_{A}H_{B}$
and $_{B}K_{C}$.

As one may expect, the actions of $A$ and $C$ on $H\otimes_{\nu}K$ are given
by
\[
a(x\otimes_{\nu}y)c=(ax)\otimes_{\nu}(yc)
\]
for all $a\in A$ and $c\in C$. However, in general this does not hold for all
$x\in H$ and $y\in K$. In fact the elementary tensor $x\otimes_{\nu}y$ does
not exist for all $x\in H$ and $y\in K$. However, it does work if we restrict
either $x$ or $y$ to a certain dense subspace, say $x\in\mathfrak{D}(H,\nu)\subset H$ 
and $y\in K$. (See below for further details on the space
$\mathfrak{D}(H,\nu)$.) We correspondingly use $x\in H$ and $y\in
\mathfrak{D}^{\prime}(K,\nu)\subset K$ if we rather want to restrict $y$ to a
dense subspace of $K$.

In particular we have $\Omega_{\omega}\in\mathfrak{D}(H,\nu)$ and $\Psi_{\psi
}\in\mathfrak{D}^{\prime}(K,\nu)$, so we set
\[
\Omega:=\Omega_{\omega}\otimes_{\nu}\Psi_{\psi}\in H\otimes_{\nu}K,
\]
which we use to define a state, denoted by $\omega\diamond\psi$, on $A\odot
C^{\prime}$ as follows:
\begin{equation}
\omega\diamond\psi(d):=\left\langle \Omega,\pi_{X}(d)\Omega\right\rangle
\label{SaamstelDmvRelTens}
\end{equation}
for all $d\in A\odot C^{\prime}$, where $\pi_{X}$ is the representation of
$A\odot C^{\prime}$ on $_{A}X_{C}$ given in terms of its bimodule structure by
\[
\pi_{X}(a\otimes c^{\prime})x:=axj_{\xi}(c^{\prime})
\]
for all $x\in$ $_{A}X_{C}$. Below we show that $\omega\diamond\psi=\omega
\circ\psi$, so we have the composition of couplings expressed in terms of the
relative tensor product of bimodules, i.e. in terms of the composition of correspondences.

We first review the inner product of the relative tensor product in more
detail, in order to clarify its use below. Write
\begin{equation}
\eta_{\nu}^{\prime}(b):=j_{\nu}(b)\Lambda_{\nu}=J_{\nu}b^{\ast}\Lambda_{\nu}
\label{eta'}
\end{equation}
for all $b\in B$.

For every $x\in\mathfrak{D}(H,\nu)$, define the bounded linear operator
$L_{\nu}(x):G_{\nu}\rightarrow H$ by setting
\[
L_{\nu}(x)\eta_{\nu}^{\prime}(b)=xb\equiv\pi_{H}^{\prime}(b)x
\]
for all $b\in B$, and uniquely extending to $G_{\nu}$. We note that the space
$\mathfrak{D}(H,\nu)$ is defined to ensure that $L_{\nu}(x)$ is indeed
bounded:
\[
\mathfrak{D}(H,\nu)=
\{x\in H:\left\|  xb\right\|  \leq k_{x}\left\| \eta_{\nu}^{\prime}(b)\right\|  
\text{ for all }b\in B\text{, for some }k_{x}\geq0\}
\]
It then follows that $L_{\nu}(x_{1})^{\ast}L_{\nu}(x_{2})\in B$ for all
$x_{1},x_{2}\in\mathfrak{D}(H,\nu)$. The space $H\otimes_{\nu}K$ and its inner
product is obtained from a quotient construction such that we have
\begin{equation}
\left\langle x_{1}\otimes_{\nu}y_{1},x_{2}\otimes_{\nu}y_{2}\right\rangle
=\left\langle y_{1},\pi_{K}(L_{\nu}(x_{1})^{\ast}L_{\nu}(x_{2}))y_{2}
\right\rangle _{K} \label{defrelinprod}
\end{equation}
for $x_{1},x_{2}\in\mathfrak{D}(H,\nu)$ and $y_{1},y_{2}\in K$, where for
emphasis we have denoted the inner product of $K$ by $\left\langle \cdot
,\cdot\right\rangle _{K}$. This is the ``left'' version, but there is also a
corresponding ``right'' version of this formula for the inner product (see
\cite[Section IX.3]{T2}). It can be shown from the definition of
$\mathfrak{D}(H,\nu)$, that $\pi_{H}(a)\pi_{\nu}(b)\Omega_{\omega}\in
D(H,\nu)$ for all $a\in A$ and $b\in B$, from which in turn it follows that
$\mathfrak{D}(H,\nu)$ is dense in $H$, and that $\Omega_{\omega}
\in\mathfrak{D}(H,\nu)$. Similarly $D^{\prime}(K,\nu)$, which is defined
analogously, is dense in $K$.

From this short review of the inner product, we can show that it has the
following property:

\begin{proposition}
In $H\otimes_{\nu}K$,
\begin{equation}
\left\langle a_{1}\Omega c_{1},a_{2}\Omega c_{2}\right\rangle =\psi(E_{\omega
}(a_{1}^{\ast}a_{2})\otimes j_{\xi}(c_{2}c_{1}^{\ast})) \label{relinprod}
\end{equation}
for $a_{1},a_{2}\in A$ and $c_{1},c_{2}\in C$.
\end{proposition}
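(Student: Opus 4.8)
The plan is to compute the inner product $\left\langle a_{1}\Omega c_{1},a_{2}\Omega c_{2}\right\rangle$ directly from the formula (\ref{defrelinprod}) for the relative tensor product inner product, so the main task is to identify the operator $L_{\nu}(x_{1})^{\ast}L_{\nu}(x_{2})\in B$ when $x_{i}=a_{i}\Omega_{\omega}$ (absorbing the right action of $c_{i}$ into the $K$-side afterwards). First I would write $a_{i}\Omega c_{i} = (a_{i}\Omega_{\omega})\otimes_{\nu}(\Psi_{\psi}c_{i})$, using that the left action of $A$ goes on the $H$-factor and the right action of $C$ on the $K$-factor; here $\Psi_{\psi}c_{i}$ means $\pi_{K}^{\prime}(c_{i})\Psi_{\psi}$. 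Then by (\ref{defrelinprod}) the inner product equals $\left\langle \Psi_{\psi}c_{1}, \pi_{K}(L_{\nu}(a_{1}\Omega_{\omega})^{\ast}L_{\nu}(a_{2}\Omega_{\omega}))\,\Psi_{\psi}c_{2}\right\rangle_{K}$.

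The key computational step is to show that $L_{\nu}(a_{1}\Omega_{\omega})^{\ast}L_{\nu}(a_{2}\Omega_{\omega}) = E_{\omega}(a_{1}^{\ast}a_{2})$ as an element of $B$. From the defining relation $L_{\nu}(x)\eta_{\nu}^{\prime}(b) = \pi_{H}^{\prime}(b)x$ together with (\ref{eta'}) and the definition $\pi_{H}^{\prime}(b)=\pi_{\omega}(1\otimes j_{\nu}(b))$, I would compute, for $b_{1},b_{2}\in B$,
\[
\left\langle \eta_{\nu}^{\prime}(b_{1}), L_{\nu}(a_{1}\Omega_{\omega})^{\ast}L_{\nu}(a_{2}\Omega_{\omega})\,\eta_{\nu}^{\prime}(b_{2})\right\rangle_{G_{\nu}}
= \left\langle \pi_{\omega}(1\otimes j_{\nu}(b_{1}))a_{1}\Omega_{\omega},\ \pi_{\omega}(1\otimes j_{\nu}(b_{2}))a_{2}\Omega_{\omega}\right\rangle,
\]
which rearranges (since $j_{\nu}(b_{1})^{\ast}j_{\nu}(b_{2}) = j_{\nu}(b_{1}b_{2}^{\ast})^{\ast}$, and using that $j_{\nu}(B)=B^{\prime}$ commutes with $a_{1}^{\ast}a_{2}$) into $\left\langle\Omega_{\omega}, \pi_{\omega}((a_{1}^{\ast}a_{2})\otimes j_{\nu}(b_{1}b_{2}^{\ast}))\Omega_{\omega}\right\rangle = \omega(a_{1}^{\ast}a_{2}\otimes j_{\nu}(b_{1}b_{2}^{\ast}))$. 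By Proposition \ref{UnE} this equals $\delta_{\nu}(E_{\omega}(a_{1}^{\ast}a_{2})\otimes j_{\nu}(b_{1}b_{2}^{\ast})) = \left\langle\Lambda_{\nu}, E_{\omega}(a_{1}^{\ast}a_{2})\,j_{\nu}(b_{1}b_{2}^{\ast})\Lambda_{\nu}\right\rangle = \left\langle \eta_{\nu}^{\prime}(b_{1}), E_{\omega}(a_{1}^{\ast}a_{2})\,\eta_{\nu}^{\prime}(b_{2})\right\rangle_{G_{\nu}}$, using $j_{\nu}(b_1 b_2^*)\Lambda_\nu = J_\nu (b_1 b_2^*)^* \Lambda_\nu$ and the definition (\ref{eta'}). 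Since $\{\eta_{\nu}^{\prime}(b): b\in B\} = J_{\nu}B^{\ast}\Lambda_{\nu}$ is dense in $G_{\nu}$, this identifies the two operators.

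Plugging this back in, the inner product becomes $\left\langle \pi_{K}^{\prime}(c_{1})\Psi_{\psi},\ \pi_{K}(E_{\omega}(a_{1}^{\ast}a_{2}))\,\pi_{K}^{\prime}(c_{2})\Psi_{\psi}\right\rangle_{K}$. Since $\pi_{K}^{\prime}(c_{1})^{\ast}\pi_{K}^{\prime}(c_{2})$ involves $j_{\xi}(c_{1})^{\ast}j_{\xi}(c_{2}) = j_{\xi}(c_{1}c_{1}^{\ast})^{?}$ — more precisely $j_{\xi}(c_{1})^{*}j_{\xi}(c_{2}) = J_\xi c_1 c_2^* J_\xi = j_\xi(c_2 c_1^*)$ — and $\pi_{K}^{\prime}$ commutes with $\pi_{K}(E_{\omega}(a_{1}^{\ast}a_{2}))$, this rearranges to $\left\langle \Psi_{\psi},\ \pi_{K}(E_{\omega}(a_{1}^{\ast}a_{2}))\,\varphi_{\psi}(1\otimes j_{\xi}(c_{2}c_{1}^{\ast}))\Psi_{\psi}\right\rangle = \psi(E_{\omega}(a_{1}^{\ast}a_{2})\otimes j_{\xi}(c_{2}c_{1}^{\ast}))$, which is exactly (\ref{relinprod}). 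The main obstacle I anticipate is bookkeeping with the various $j$-maps and making sure the right actions land on the correct side with the correct conjugations, in particular getting $j_{\xi}(c_{2}c_{1}^{\ast})$ rather than $j_{\xi}(c_{1}^{\ast}c_{2})$ or a version without the conjugation; the density argument at the end of step two and the density of $\mathfrak{D}(H,\nu)$, $\mathfrak{D}^{\prime}(K,\nu)$ are routine given the review preceding the proposition.
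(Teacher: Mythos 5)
Your argument is correct and follows the same overall skeleton as the paper's proof: reduce everything to the identity $L_{\nu}(\pi_{H}(a_{1})\Omega_{\omega})^{\ast}L_{\nu}(\pi_{H}(a_{2})\Omega_{\omega})=E_{\omega}(a_{1}^{\ast}a_{2})$, insert it into Eq. (\ref{defrelinprod}), and then move $\pi_{K}^{\prime}(c_{1})^{\ast}$ across $\pi_{K}(E_{\omega}(a_{1}^{\ast}a_{2}))$ to produce $j_{\xi}(c_{2}c_{1}^{\ast})$; that last stage is identical to the paper's. Where you differ is in how the key identity is established: the paper first derives the explicit operator formula $L_{\nu}(\pi_{H}(a)\pi_{\nu}(b)\Omega_{\omega})=\pi_{H}(a)\pi_{\nu}(b)u_{\nu}$ (its Eq. (\ref{L-formule})) and then reads off $L^{\ast}L=u_{\nu}^{\ast}P_{\nu}\pi_{H}(a_{1}^{\ast}a_{2})u_{\nu}=E_{\omega}(a_{1}^{\ast}a_{2})$ from the concrete form of $E_{\omega}$ in Theorem \ref{E-eienskap}/Proposition \ref{Eformule}, whereas you identify the two bounded operators weakly, by matching matrix elements against the dense set $\eta_{\nu}^{\prime}(B)=B^{\prime}\Lambda_{\nu}$ and invoking the characterization $\omega(a\otimes b^{\prime})=\delta_{\nu}(E_{\omega}(a)\otimes b^{\prime})$ of Proposition \ref{UnE}. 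Your route avoids $u_{\nu}$ and $P_{\nu}$ altogether and is a perfectly sound alternative; the paper's explicit formula has the side benefit of being reused afterwards (e.g. $L_{\nu}(\Omega_{\omega})=\iota_{H_{\nu}}u_{\nu}$, giving $E_{\omega}(a)=L_{\nu}(\Omega_{\omega})^{\ast}\pi_{H}(a)L_{\nu}(\Omega_{\omega})$). One bookkeeping correction in your middle step: $j_{\nu}(b_{1})^{\ast}j_{\nu}(b_{2})=J_{\nu}b_{1}b_{2}^{\ast}J_{\nu}=j_{\nu}(b_{2}b_{1}^{\ast})$, so the commutant element appearing in $\omega(a_{1}^{\ast}a_{2}\otimes\cdot)$ and in $\delta_{\nu}(E_{\omega}(a_{1}^{\ast}a_{2})\otimes\cdot)$ should be $j_{\nu}(b_{2}b_{1}^{\ast})$, not $j_{\nu}(b_{1}b_{2}^{\ast})$; you make the compensating slip again when converting $\left\langle \Lambda_{\nu},E_{\omega}(a_{1}^{\ast}a_{2})\,j_{\nu}(\cdot)\Lambda_{\nu}\right\rangle$ back into $\left\langle \eta_{\nu}^{\prime}(b_{1}),E_{\omega}(a_{1}^{\ast}a_{2})\eta_{\nu}^{\prime}(b_{2})\right\rangle$, so the two errors cancel and the conclusion stands, but the intermediate displays should be fixed.
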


\begin{proof}
Firstly, we obtain a formula for $L_{\nu}(x)$ for elements of the form
$x=\pi_{H}(a)\pi_{\nu}(b)\Omega_{\omega}\in D(H,\nu)$, where $a\in A$ and $b$.
For all $b_{1}\in B$ we have
\begin{align*}
L_{\nu}(x)\eta_{\nu}^{\prime}(b_{1})  &  =\pi_{H}^{\prime}(b_{1})\pi_{H}
(a)\pi_{\nu}(b)\Omega_{\omega}\\
&  =\pi_{H}(a)\pi_{\nu}(b)\pi_{\nu^{\prime}}(j_{\nu}(b_{1}))\Omega_{\omega}\\
&  =\pi_{H}(a)\pi_{\nu}(b)u_{\nu}\eta_{\nu}^{\prime}(b_{1}),
\end{align*}
by Eqs. (\ref{pi_nu'}) and (\ref{eta'}), which means that
\begin{equation}
L_{\nu}(\pi_{H}(a)\pi_{\nu}(b)\Omega_{\omega})=\pi_{H}(a)\pi_{\nu}(b)u_{\nu}.
\label{L-formule}
\end{equation}
Applying the special case $L_{\nu}(\pi_{H}(a)\Omega_{\omega})=\pi_{H}
(a)u_{\nu}$ of this formula, for $a_{1},a_{2}\in A$ we have
\begin{align*}
L_{\nu}(\pi_{H}(a_{1})\Omega_{\omega})^{\ast}L_{\nu}(\pi_{H}(a_{2}
)\Omega_{\omega})  &  =u_{\nu}^{\ast}P_{\nu}\pi_{H}(a_{1}^{\ast}a_{2})u_{\nu
}\\
&  =E_{\omega}(a_{1}^{\ast}a_{2}).
\end{align*}
by Theorem \ref{E-eienskap} and Proposition \ref{Eformule}. From Eq.
(\ref{defrelinprod}) we therefore have%
\begin{align*}
\left\langle a_{1}\Omega c_{1},a_{2}\Omega c_{2}\right\rangle  &
=\left\langle \pi_{K}^{\prime}(c_{1})\Psi_{\psi},\pi_{K}(E_{\omega}
(a_{1}^{\ast}a_{2}))\pi_{K}^{\prime}(c_{2})\Psi_{\psi}\right\rangle _{K}\\
&  =\left\langle \Psi_{\psi},\pi_{K}(E_{\omega}(a_{1}^{\ast}a_{2}))\pi
_{K}^{\prime}(c_{2}c_{1}^{\ast})\Psi_{\psi}\right\rangle _{K}\\
&  =\left\langle \Psi_{\psi},\varphi_{\psi}(E_{\omega}(a_{1}^{\ast}
a_{2})\otimes j_{\xi}(c_{2}c_{1}^{\ast}))\Psi_{\psi}\right\rangle _{K}\\
&  =\psi(E_{\omega}(a_{1}^{\ast}a_{2})\otimes j_{\xi}(c_{2}c_{1}^{\ast})).
\end{align*}
\end{proof}

Now we can confirm that Eq. (\ref{SaamstelDmvRelTens}) is indeed equivalent to
the original definition Eq. (\ref{kopKring}):

\begin{corollary}
\label{kringItvRelTens}We have
\[
\omega\diamond\psi=\omega\circ\psi
\]
in terms of the definitions Eq. (\ref{SaamstelDmvRelTens}) and Eq.
(\ref{kopKring}).
\end{corollary}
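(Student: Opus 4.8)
The plan is to show that the two states $\omega\diamond\psi$ and $\omega\circ\psi$ agree on all elementary tensors $a\otimes c'$ with $a\in A$ and $c'\in C'$, which suffices since such tensors span $A\odot C'$ and both sides are linear (indeed states). First I would unwind the left-hand side using the definition Eq.~(\ref{SaamstelDmvRelTens}) and the bimodule action of $\pi_X$: for $a\in A$ and $c'\in C'$ we have
\[
\omega\diamond\psi(a\otimes c')=\left\langle \Omega,\pi_X(a\otimes c')\Omega\right\rangle=\left\langle \Omega_\omega\otimes_\nu\Psi_\psi,\ a\,(\Omega_\omega\otimes_\nu\Psi_\psi)\,j_\xi(c')\right\rangle=\left\langle \Omega,\ a\,\Omega\,j_\xi(c')\right\rangle,
\]
so this is exactly the quantity $\left\langle a_1\Omega c_1, a_2\Omega c_2\right\rangle$ appearing in the previous proposition (Eq.~(\ref{relinprod})) with the choices $a_1=1$, $a_2=a$, $c_1=1$ and $c_2=j_\xi(c')$. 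Here I would be slightly careful: strictly, $j_\xi(c')$ need not lie in $C$, but $c'\in C'$ and $j_\xi$ maps $C'$ into $C$ (since $j_\xi=J_\xi(\cdot)^*J_\xi$ conjugates $C'$ onto $C$), so $c_2=j_\xi(c')\in C$ and Eq.~(\ref{relinprod}) applies directly. (Alternatively one re-runs the short computation of the preceding proof with a right-module element coming from $C'$ rather than $C$; either way is routine.)

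Substituting into Eq.~(\ref{relinprod}) and using $j_\xi\circ j_\xi=\operatorname{id}$ together with $c_2c_1^*=j_\xi(c')\cdot 1=j_\xi(c')$, I get
\[
\omega\diamond\psi(a\otimes c')=\psi\bigl(E_\omega(1^*\cdot a)\otimes j_\xi(j_\xi(c'))\bigr)=\psi\bigl(E_\omega(a)\otimes c'\bigr).
\]
But by Proposition~\ref{KringItvE}, specifically the first equality in Eq.~(\ref{KringItvPsiOmega}), we have $\omega\circ\psi(a\otimes c')=\psi(E_\omega(a)\otimes c')$ as well. Hence $\omega\diamond\psi(a\otimes c')=\omega\circ\psi(a\otimes c')$ for all $a\in A$ and $c'\in C'$, and since both functionals are linear on $A\odot C'$ this gives $\omega\diamond\psi=\omega\circ\psi$.

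I do not anticipate a serious obstacle here: the heavy lifting has already been done in establishing Eq.~(\ref{relinprod}), and what remains is to recognize $\omega\diamond\psi(a\otimes c')$ as a special value of that inner-product formula and then invoke Eq.~(\ref{KringItvPsiOmega}). The only points demanding a line of care are (i) confirming that $j_\xi$ carries $C'$ into $C$ so that the right action of $c'$ on the bimodule $_AX_C$ via $x\mapsto x\,j_\xi(c')$ is the restriction of the right $C$-action as used in Eq.~(\ref{relinprod}) — which is precisely how $\pi_X$ was defined — and (ii) the bookkeeping $j_\xi\circ j_\xi=\operatorname{id}$ on $B(G_\xi)$, which is immediate from $J_\xi^2=1$ and $(\cdot)^{**}=\operatorname{id}$. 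With these in hand the proof is a two-line computation, so I would present it as such rather than belaboring it.
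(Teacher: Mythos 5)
Your proposal is correct and follows essentially the same route as the paper: unwind $\omega\diamond\psi(a\otimes c')=\left\langle \Omega,a\Omega j_{\xi}(c')\right\rangle$ via the bimodule action, apply Eq.~(\ref{relinprod}) with $a_1=1$, $c_1=1$, $c_2=j_{\xi}(c')$, and conclude with Eq.~(\ref{KringItvPsiOmega}). Your extra remark that $j_{\xi}$ carries $C'$ into $C$ (so Eq.~(\ref{relinprod}) applies) is a point the paper leaves implicit, and is a fine clarification.
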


\begin{proof}
From Eq. (\ref{SaamstelDmvRelTens})
\begin{align*}
\omega\diamond\psi(a\otimes c^{\prime})  &  =\left\langle \Omega,\pi
_{X}(a\otimes c^{\prime})\Omega\right\rangle =\left\langle \Omega,a\Omega
j_{\xi}(c^{\prime})\right\rangle \\
&  =\psi(E_{\omega}(a)\otimes c^{\prime}))
\end{align*}
by Eq. (\ref{relinprod}), for all $a\in A$ and $c^{\prime}\in C^{\prime}$. By
Eq. (\ref{KringItvPsiOmega}), $\omega\diamond\psi=\omega\circ\psi$.
\end{proof}

So we have $\omega\circ\psi$ expressed in terms of the vector $\Omega\in
H\otimes_{\nu}K$. Note, however, that in general $H\otimes_{\nu}K$ is not the
GNS Hilbert space for the state $\omega\circ\psi$, although the former
contains the latter. Consider for example the simple case where $\omega
=\mu\odot\nu^{\prime}$ and $\psi=\nu\odot\xi^{\prime}$. Then, by Proposition
\ref{trivKring}, $\omega\circ\psi=\mu\odot\xi^{\prime}$, and the GNS Hilbert
space obtained from this state is $G_{\mu}\otimes G_{\xi}$, whereas
$H\otimes_{\nu}K=G_{\mu}\otimes G_{\nu}\otimes G_{\xi}$.

When $(A,\mu)=(B,\nu)$ and $\omega$ is the diagonal coupling $\delta_{\nu}$ in
Eq. (\ref{diagKop}), then by \cite[Proposition IX.3.19]{T2}, $_{A}X_{C}$ is
isomorphic to $_{B}K_{C}$, so in this case the correspondence $_{A}H_{B}$ acts
as an identity from the left. Similarly from the right when $\psi$ is the
diagonal coupling. This is the correspondence version of Proposition
\ref{idVirKring}.

Lastly, by Eq. (\ref{L-formule}) we have $L_{\nu}(\Omega_{\omega}
)=\iota_{H_{\nu}}u_{\nu}$, therefore $L_{\nu}(\Omega_{\omega})^{\ast}=u_{\nu
}^{\ast}P_{\nu}$, which by Theorem \ref{E-eienskap} means that
\[
E_{\omega}(a)=L_{\nu}(\Omega_{\omega})^{\ast}\pi_{H}(a)L_{\nu}(\Omega_{\omega
})
\]
for all $a\in A$. This is the form in which $E_{\omega}$ has appeared in the
theory of correspondences, as a special case of maps of the form $a\mapsto
L_{\nu}(x)^{\ast}\pi_{H}(a)L_{\nu}(x)$ for arbitrary $x\in\mathfrak{D}(H,\nu
)$; see for example \cite[Section 1.2]{P}.

\section{Balance, detailed balance and non-equilibrium}\label{fb}

Our main goal in this section is to suggest how balance can be used to define
conditions that generalize detailed balance. We then speculate on how this may
be of value in studying non-equilibrium steady states. In order to motivate
these generalized conditions, we present a specific instance of how detailed
balance can be expressed in terms of balance. We focus on only one form of
detailed balance, namely standard quantum detailed balance with respect to a
reversing operation, as defined in \cite[Definition 3 and Lemma 1]{FU} and
\cite[Definition 1]{FR}. This form of detailed balance has only appeared in
the literature relatively recently. The origins of quantum detailed balance,
on the other hand, can be found in the papers \cite{Ag}, \cite{Al},
\cite{CWA}, \cite{KFGV} and \cite{M}.

The basic idea of this section should also apply to properties other than
detailed balance conditions, as will be explained.

We begin by noting the following simple fact in terms of the diagonal coupling
$\delta_{\mu}$ (see Eq. (\ref{diagKop})):

\begin{proposition}
\label{id-kor}A system $\mathbf{A}$ is in balance with itself with respect to
the diagonal coupling $\delta_{\mu}$, i.e. $\delta_{\mu}(\alpha(a)\otimes
a^{\prime})=\delta_{\mu}(a\otimes\alpha^{\prime}(a^{\prime}))$ for all $a\in
A$ and $a^{\prime}\in A^{\prime}$. Conversely, if two systems $\mathbf{A}$ and
$\mathbf{B}$, with $(A,\mu)=(B,\nu)$, are in balance with respect to the
diagonal coupling $\delta_{\mu}$, then $\mathbf{A}=\mathbf{B}$, i.e.
$\alpha=\beta$.
\end{proposition}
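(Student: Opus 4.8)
The plan is to reduce everything to the map $E_\omega$ via the characterization of balance in Theorem \ref{balkar} and the dictionary between couplings and u.c.p.\ maps established in Section \ref{afdE}.

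For the first assertion, I would apply Corollary \ref{diagVsId}, which says that $E_{\delta_\mu}=\id_A$. Then the condition for $\mathbf{A}$ to be in balance with itself with respect to $\delta_\mu$ is, by Theorem \ref{balkar}, exactly $E_{\delta_\mu}\circ\alpha=\alpha\circ E_{\delta_\mu}$, i.e.\ $\id_A\circ\alpha=\alpha\circ\id_A$, which holds trivially. (Alternatively one could verify $\delta_\mu(\alpha(a)\otimes a')=\langle\Lambda_\mu,\alpha(a)a'\Lambda_\mu\rangle=\langle\Lambda_\mu,a\,\alpha'(a')\Lambda_\mu\rangle=\delta_\mu(a\otimes\alpha'(a'))$ directly from Definition \ref{duaalDef}, but invoking Theorem \ref{balkar} and Corollary \ref{diagVsId} is cleaner.)

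For the converse, suppose $\mathbf{A}$ and $\mathbf{B}$ with $(A,\mu)=(B,\nu)$ are in balance with respect to $\omega=\delta_\mu$. Again by Corollary \ref{diagVsId} we have $E_{\delta_\mu}=\id_A$, so Theorem \ref{balkar} gives $\id_A\circ\,\alpha=\beta\circ\id_A$, that is $\alpha=\beta$, and hence $\mathbf{A}=\mathbf{B}$ as systems.

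There is essentially no obstacle here: the content of the proposition has already been packaged into Corollary \ref{diagVsId} ($\omega=\delta_\nu \Leftrightarrow E_\omega=\id$) and Theorem \ref{balkar} (balance $\Leftrightarrow$ intertwining by $E_\omega$). The only thing to be careful about is bookkeeping with the identification $(A,\mu)=(B,\nu)$ so that $\id_B$ really is $\id_A$ and the diagonal couplings $\delta_\mu$ and $\delta_\nu$ coincide; once that is noted, both directions are one-line consequences of the cited results.
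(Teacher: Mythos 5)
Your proof is correct and is essentially the route the paper itself endorses: the paper proves the first part directly from Definition \ref{duaalDef} (exactly your parenthetical computation) and the converse via the uniqueness of the dual in Theorem \ref{duaalBestaan}, explicitly noting ``alternatively use Theorem \ref{balkar} and Corollary \ref{diagVsId}'', which is your main argument. Both directions go through as you state, since with $(A,\mu)=(B,\nu)$ one has $E_{\delta_\mu}=\id_A$ and Theorem \ref{balkar} turns balance with respect to $\delta_\mu$ into $\alpha=\beta$ (trivially satisfied when $\mathbf{B}=\mathbf{A}$), so no gap remains.
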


\begin{proof}
The first part is simply the definition of the dual (see Definition
\ref{duaalDef} and Theorem \ref{duaalBestaan}). The second part follows from
the uniqueness of the dual, given by Theorem \ref{duaalBestaan}; alternatively
use Theorem \ref{balkar} and Corollary \ref{diagVsId}.
\end{proof}

So, if $\mathbf{A}$ and $\mathbf{B}$ are in balance with respect to the
diagonal coupling and one of the systems has some property, then the other
system has it as well, since the systems are necessarily the same.

One avenue of investigation is therefore to define generalized versions of a
given property by demanding only that a system is in balance with another
system with the given property, with respect to a coupling (or set of
couplings) other than the diagonal coupling. In particular we then do not need
to assume that the two systems have the same algebra and state.

We demonstrate this idea below for a specific property, namely standard
quantum detailed balance with respect to a reversing operation. In order to do
so, we discuss this form of detailed balance along with $\Theta$-KMS-duals:

\begin{definition}
\label{omkeer}Consider a system $\mathbf{A}$. A \emph{reversing operation} for
$\mathbf{A}$ (or for $(A,\mu)$), is a $\ast$-antihomorphism $\Theta
:A\rightarrow A$ (i.e. $\Theta$ is linear, $\Theta(a^{\ast})=\Theta(a)^{\ast}
$, and $\Theta(a_{1}a_{2})=\Theta(a_{2})\Theta(a_{1})$) such that $\Theta
^{2}=\operatorname{id}_{A}$ and $\mu\circ\Theta=\mu$. Furthermore we define
the\emph{ }$\Theta$\emph{-KMS-dual}
\[
\alpha^{\Theta}:=\Theta\circ\alpha^{\sigma}\circ\Theta
\]
of $\alpha$ in terms of the KMS-dual $\alpha^{\sigma}=j_{\mu}\circ
\alpha^{\prime}\circ j_{\mu}$ in Eq. (\ref{KMS-duaal}).
\end{definition}

The $\Theta$-KMS-dual was introduced in \cite{BQ2} in the context of systems
on $B(\mathfrak{H})$, with $\mathfrak{H}$ a separable Hilbert space. There may
be a scarcity of examples of reversing operations for general von Neumann
algebras, but a standard example for $B(\mathfrak{H})$ is mentioned in
Subsection \ref{VbOmk}.

Using the $\Theta$-KMS-dual, we can define the above mentioned form of
detailed balance:

\begin{definition}
\label{sqdb}A system $\mathbf{A}$ satisfies \emph{standard quantum detailed
balance with respect to the reversing operation} $\Theta$ for $(A,\mu)$, or
$\Theta$\emph{-sqdb},when $\alpha^{\Theta}=\alpha$.
\end{definition}

To complete the picture, we state some straightforward properties related to
reversing operations $\Theta$ and the $\Theta$-KMS-dual:

\begin{proposition}
\label{omkEiensk}Given a reversing operation $\Theta$ for $\mathbf{A}$ as in
Definition \ref{omkeer}, we define an anti-unitary operator $\theta:G_{\mu
}\rightarrow G_{\mu}$ by extending
\[
\theta a\Lambda_{\mu}:=\Theta(a^{\ast})\Lambda_{\mu}
\]
which in particular gives $\theta^{2}=1$ and $\theta\Lambda_{\mu}=\Lambda
_{\mu}$. Then
\[
\Theta(a)=\theta a^{\ast}\theta
\]
for all $a\in A$, and consequently $\Theta$ is normal. This allows us to
define
\[
\Theta^{\prime}:A^{\prime}\rightarrow A^{\prime}:a^{\prime}\mapsto\theta
a^{\prime\ast}\theta
\]
which is the dual of $\Theta$ in the sense that
\[
\left\langle \Lambda_{\mu},a\Theta^{\prime}(a^{\prime})\Lambda_{\mu
}\right\rangle =\left\langle \Lambda_{\mu},\Theta(a)a^{\prime}\Lambda_{\mu
}\right\rangle
\]
for all $a\in A$ and $a^{\prime}\in A^{\prime}$. We also have
\[
\theta J_{\mu}=J_{\mu}\theta
\]
from which
\[
\alpha^{\Theta}=(\Theta\circ\alpha\circ\Theta)^{\sigma}
\]
and
\[
(\alpha^{\Theta})^{\Theta}=\alpha
\]
follow.
\end{proposition}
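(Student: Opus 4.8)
The plan is to establish each of the claimed identities in Proposition \ref{omkEiensk} in sequence, starting from the definition of $\theta$ and working purely at the Hilbert space level where convenient.

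First I would verify that $\theta$ is well-defined and anti-unitary: since $\mu\circ\Theta=\mu$ and $\Theta$ is a $\ast$-antihomomorphism with $\Theta^2=\id_A$, one computes $\langle \theta a\Lambda_\mu,\theta b\Lambda_\mu\rangle=\langle\Theta(a^\ast)\Lambda_\mu,\Theta(b^\ast)\Lambda_\mu\rangle=\mu(\Theta(b^\ast)^\ast\Theta(a^\ast))=\mu(\Theta(b)\Theta(a^\ast))=\mu(\Theta(a^\ast b))=\mu(a^\ast b)=\langle b\Lambda_\mu,a\Lambda_\mu\rangle$, so $\theta$ extends to an anti-unitary on $G_\mu$ (using that $A\Lambda_\mu$ is dense). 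Then $\theta^2 a\Lambda_\mu=\theta\Theta(a^\ast)\Lambda_\mu=\Theta(\Theta(a^\ast)^\ast)\Lambda_\mu=\Theta(\Theta(a))\Lambda_\mu=a\Lambda_\mu$, so $\theta^2=1$, and $\theta\Lambda_\mu=\Theta(1)\Lambda_\mu=\Lambda_\mu$. For the formula $\Theta(a)=\theta a^\ast\theta$: apply both sides to $b\Lambda_\mu$, getting $\theta a^\ast\theta b\Lambda_\mu=\theta a^\ast\Theta(b^\ast)\Lambda_\mu=\Theta((a^\ast\Theta(b^\ast))^\ast)\Lambda_\mu=\Theta(\Theta(b)a)\Lambda_\mu=\Theta(a)\Theta(\Theta(b))\Lambda_\mu=\Theta(a)b\Lambda_\mu$, using that $\Theta$ is an antihomomorphism. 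Normality of $\Theta$ is then immediate since conjugation by an anti-unitary is $\sigma$-weakly continuous.

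Next I would treat $\Theta'$. Since $\theta a'^\ast\theta$ commutes with every $\theta a^\ast\theta=\Theta(a)\in A$ precisely when $a'$ commutes with $a$, conjugation by $\theta$ composed with the involution maps $A'$ to $A'$, so $\Theta'$ is well-defined; that it is a $\ast$-antihomomorphism with $(\Theta')^2=\id$ follows as above. The duality relation $\langle\Lambda_\mu,a\Theta'(a')\Lambda_\mu\rangle=\langle\Lambda_\mu,\Theta(a)a'\Lambda_\mu\rangle$ is a direct computation: the left side is $\langle\Lambda_\mu,a\theta a'^\ast\theta\Lambda_\mu\rangle=\langle\Lambda_\mu,a\theta a'^\ast\Lambda_\mu\rangle=\langle\theta\Lambda_\mu,\theta a\theta a'^\ast\Lambda_\mu\rangle$; using $\theta a\theta=\theta a^{\ast\ast}\theta=\Theta(a^\ast)$ and anti-linearity of $\theta$ this becomes $\langle\Lambda_\mu,\Theta(a^\ast)^\ast a'\Lambda_\mu\rangle=\langle\Lambda_\mu,\Theta(a)a'\Lambda_\mu\rangle$ (being slightly careful that the final expression is a genuine inner product, i.e. both entries end up where they should; one rechecks the conjugation bookkeeping). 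Then $\theta J_\mu=J_\mu\theta$ can be obtained from uniqueness: both $\theta J_\mu$ and $J_\mu\theta$ are anti-linear (their composition of two conjugate-linear maps is linear) — rather, both $\theta$ and $J_\mu$ fix $\Lambda_\mu$ and implement $j_\mu$-type symmetries; concretely, $\theta J_\mu \theta$ is an anti-unitary fixing $\Lambda_\mu$ with $(\theta J_\mu\theta) a(\theta J_\mu\theta)^\ast = \theta J_\mu \Theta(a^\ast) J_\mu\theta$, and since $J_\mu A J_\mu=A'$ and $\theta A'\theta=A'$, one checks this equals $J_\mu a^\ast J_\mu$ on the appropriate set, so by uniqueness of the modular conjugation $\theta J_\mu\theta=J_\mu$, i.e. $\theta J_\mu=J_\mu\theta$.

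Finally, the two relations involving $\alpha^\Theta$. For $\alpha^\Theta=(\Theta\circ\alpha\circ\Theta)^\sigma$: unwinding definitions, $\alpha^\Theta=\Theta\circ\alpha^\sigma\circ\Theta=\Theta\circ j_\mu\circ\alpha'\circ j_\mu\circ\Theta$, while $(\Theta\circ\alpha\circ\Theta)^\sigma=j_\mu\circ(\Theta\circ\alpha\circ\Theta)'\circ j_\mu$; one shows $(\Theta\circ\alpha\circ\Theta)'=\Theta'\circ\alpha'\circ\Theta'$ from the defining duality property (this is the analogue of the composition rule for duals used implicitly in the Corollary after Proposition \ref{KMS-dstelsel}), and then uses $\Theta=\theta(\cdot)^\ast\theta$, $\Theta'=\theta(\cdot)^\ast\theta$, $j_\mu=J_\mu(\cdot)^\ast J_\mu$ together with $\theta J_\mu=J_\mu\theta$ to commute the conjugations past one another and collapse to the same expression. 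I expect this identity to be the main obstacle — not because any single step is deep, but because it requires carefully tracking how the involution, $\theta$, and $J_\mu$ interact through the dual operation, and getting the order of compositions (and the antihomomorphism property of $\Theta$) exactly right. The last relation $(\alpha^\Theta)^\Theta=\alpha$ then follows easily: $(\alpha^\Theta)^\Theta=\Theta\circ(\alpha^\Theta)^\sigma\circ\Theta=\Theta\circ(\Theta\circ\alpha\circ\Theta)^{\sigma\sigma}\circ\Theta=\Theta\circ\Theta\circ\alpha\circ\Theta\circ\Theta=\alpha$, using $(\cdot)^{\sigma\sigma}=\id$ from Eq. (\ref{2deKMS-d}) and $\Theta^2=\id_A$.
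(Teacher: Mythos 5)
Most of your argument coincides with the paper's: the verification that $\theta$ is a well-defined anti-unitary with $\theta^{2}=1$ and $\theta\Lambda_{\mu}=\Lambda_{\mu}$, the computation $\theta a^{\ast}\theta b\Lambda_{\mu}=\Theta(\Theta(b)a)\Lambda_{\mu}=\Theta(a)b\Lambda_{\mu}$, normality, the argument that conjugation by $\theta$ preserves $A^{\prime}$, the duality relation for $\Theta^{\prime}$, and the final chain $\alpha^{\Theta}=\Theta\circ j_{\mu}\circ\alpha^{\prime}\circ j_{\mu}\circ\Theta=j_{\mu}\circ\Theta^{\prime}\circ\alpha^{\prime}\circ\Theta^{\prime}\circ j_{\mu}=(\Theta\circ\alpha\circ\Theta)^{\sigma}$ together with $(\alpha^{\Theta})^{\Theta}=\alpha$ via Eq. (\ref{2deKMS-d}) are all essentially the paper's proof.

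The genuine gap is your treatment of $\theta J_{\mu}=J_{\mu}\theta$. First, the identity you propose to check, $(\theta J_{\mu}\theta)a(\theta J_{\mu}\theta)^{\ast}=J_{\mu}a^{\ast}J_{\mu}$, cannot hold: the left-hand side is conjugate-linear in $a$ (conjugation by an anti-unitary), while the right-hand side is linear in $a$, so you are not in a position to ``check'' it on any total set; the identity that is actually true (and only once the commutation is already known) is $\theta J_{\mu}\theta\,a\,\theta J_{\mu}\theta=J_{\mu}aJ_{\mu}$, which makes the intended verification circular. Second, and more fundamentally, the uniqueness principle you invoke is not a theorem: an anti-unitary involution fixing $\Lambda_{\mu}$ and carrying $A$ onto $A^{\prime}$ need not be the modular conjugation. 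Already for abelian $A=L^{\infty}(X)$ acting on $L^{2}(X)$ with $\Lambda_{\mu}=1$, composing complex conjugation with any measure-preserving involution of $X$ gives another such anti-unitary. The modular conjugation is only singled out by more, e.g.\ by the polar decomposition of the Tomita operator or by positivity relative to the natural cone, neither of which you verify. The correct (and short) argument, which is the one the paper uses, is to work with $S_{\mu}=J_{\mu}\Delta_{\mu}^{1/2}$, the closure of $a\Lambda_{\mu}\mapsto a^{\ast}\Lambda_{\mu}$: a direct computation with $\theta a\Lambda_{\mu}=\Theta(a^{\ast})\Lambda_{\mu}$ gives $\theta S_{\mu}\theta=S_{\mu}$ on $A\Lambda_{\mu}$, hence on the closure, and since $\theta J_{\mu}\theta$ is anti-unitary and $\theta\Delta_{\mu}^{1/2}\theta$ is positive self-adjoint, uniqueness of the polar decomposition yields $\theta J_{\mu}\theta=J_{\mu}$, i.e.\ $\theta J_{\mu}=J_{\mu}\theta$. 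Since your proof of $\alpha^{\Theta}=(\Theta\circ\alpha\circ\Theta)^{\sigma}$ (via $\Theta\circ j_{\mu}=j_{\mu}\circ\Theta^{\prime}$) relies on this commutation, repairing this step is essential; with it in place, the rest of your outline goes through.
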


\begin{proof}
The first sentence is simple. From the definition of $\theta$ and the
properties of $\Theta$, $\theta\Lambda_{\mu}=\Lambda_{\mu}$ it follows that
\[
\theta a^{\ast}\theta b\Lambda_{\mu}=\Theta((a^{\ast}\Theta(b^{\ast}))^{\ast
})\Lambda_{\mu}=\Theta(a)b\Lambda_{\mu}
\]
for all $a,b\in A$, so $\Theta(a)=\theta a^{\ast}\theta$. Normality (i.e.
$\sigma$-weak continuity) follows from this and the definition of the $\sigma
$-weak topology. For $a\in A$ and $a^{\prime}\in A^{\prime}$ we now have
$a\theta a^{\prime}\theta=\theta\Theta(a^{\ast})a^{\prime}\theta=\theta
a^{\prime}\Theta(a^{\ast})\theta=\theta a^{\prime}\theta a$, hence $\theta
a^{\prime}\theta\in A^{\prime}$. So $\Theta^{\prime}$ is well-defined, and
that it is the dual of $\Theta$ follows easily.

Denoting the closure of the operator
\[
A\Lambda_{\mu}\rightarrow A\Lambda_{\mu}:a\Lambda_{\mu}\mapsto a^{\ast}
\Lambda_{\mu}
\]
by $S_{\mu}=J_{\mu}\Delta_{\mu}^{1/2}$, as usual in Tomita-Takesaki theory, we
obtain $S_{\mu}=\theta S_{\mu}\theta=\theta J_{\mu}\theta\theta\Delta_{\mu
}^{1/2}\theta$, hence $\theta J_{\mu}\theta=J_{\mu}$ by the uniqueness of
polar decomposition, proving $\theta J_{\mu}=J_{\mu}\theta$.

Then by definition
\begin{align*}
\alpha^{\Theta}  &  =\Theta\circ j_{\mu}\circ\alpha^{\prime}\circ j_{\mu}
\circ\Theta=j_{\mu}\circ\Theta^{\prime}\circ\alpha^{\prime}\circ\Theta
^{\prime}\circ j_{\mu}=j_{\mu}\circ(\Theta\circ\alpha\circ\Theta)^{\prime
}\circ j_{\mu}\\
&  =(\Theta\circ\alpha\circ\Theta)^{\sigma}
\end{align*}
follows. So $(\alpha^{\Theta})^{\Theta}=\Theta\circ\Theta\circ\alpha
\circ\Theta\circ\Theta=\alpha$ by Eq. (\ref{2deKMS-d}).
\end{proof}

Returning now to the main goal of this section, it will be convenient for us
to express the $\Theta$-KMS dual as a system:

\begin{proposition}
\label{thKMS-dstelsel}For a reversing operation $\Theta$ as in Definition
\ref{omkeer},
\[
\mathbf{A}^{\Theta}:=(A,\alpha^{\Theta},\mu)
\]
is a system, called the $\Theta$\emph{-KMS-dual} of $\mathbf{A}$.
\end{proposition}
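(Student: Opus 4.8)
The plan is to verify directly that $\mathbf{A}^{\Theta}$ meets the three requirements in Definition \ref{stelsel}: that $\alpha^{\Theta}$ is a u.c.p. map $A\to A$, that $\mu$ is a faithful normal state on the ($\sigma$-finite) von Neumann algebra $A$, and that $\mu\circ\alpha^{\Theta}=\mu$. The second of these is immediate since $\mathbf{A}=(A,\alpha,\mu)$ is already a system, so only the statements about $\alpha^{\Theta}$ need attention.

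First I would observe that $\alpha^{\Theta}=\Theta\circ\alpha^{\sigma}\circ\Theta$ by definition, where $\alpha^{\sigma}$ is the KMS-dual of $\alpha$. By Proposition \ref{KMS-dstelsel}, $\alpha^{\sigma}$ is u.c.p., and by Proposition \ref{omkEiensk} (or directly from Definition \ref{omkeer}), $\Theta$ is a normal $\ast$-antihomomorphism with $\Theta^{2}=\operatorname{id}_{A}$. A composition $\Theta\circ\varrho\circ\Theta$ with $\varrho$ u.c.p. and $\Theta$ a normal unital $\ast$-antihomomorphism is again normal and unital; for complete positivity one uses that conjugating by a $\ast$-antihomomorphism amounts to applying the transpose on matrix amplifications, and the transpose of a positive matrix over a von Neumann algebra sandwiched between two antihomomorphism conjugations remains positive. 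More concretely, using the anti-unitary $\theta$ from Proposition \ref{omkEiensk} one has $\Theta(a)=\theta a^{\ast}\theta$, so $\alpha^{\Theta}(a)=\theta\,\alpha^{\sigma}(\theta a^{\ast}\theta)^{\ast}\,\theta$, from which the u.c.p. properties can be read off by the same kind of argument used for $E_{\omega}^{\sigma}$ and $\alpha^{\sigma}$ earlier in the paper (namely, $j_{\mu}^{\ast}$-type maps are completely positive in the appropriate matricial sense). Alternatively, one can simply invoke Proposition \ref{omkEiensk}, which gives $\alpha^{\Theta}=(\Theta\circ\alpha\circ\Theta)^{\sigma}$, and note that $\Theta\circ\alpha\circ\Theta$ is u.c.p. (being a composition of the u.c.p. map $\alpha$ with the conjugation by the $\ast$-antihomomorphism $\Theta$), so its KMS-dual is u.c.p. by Proposition \ref{KMS-dstelsel} applied to that map in place of $\alpha$.

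Next I would check invariance of $\mu$. Using the identity $\alpha^{\Theta}=(\Theta\circ\alpha\circ\Theta)^{\sigma}$ from Proposition \ref{omkEiensk} together with the fact established in the proof of Proposition \ref{KMS-dstelsel} that $\mu\circ\varrho^{\sigma}=\mu$ whenever $\mu\circ\varrho=\mu$ (this is the computation $\mu\circ\varrho^{\sigma}=\mu'\circ\varrho'\circ j_{\mu}=\mu'\circ j_{\mu}=\mu$), it suffices to note that $\mu\circ(\Theta\circ\alpha\circ\Theta)=\mu\circ\alpha\circ\Theta=\mu\circ\Theta=\mu$, where we used $\mu\circ\Theta=\mu$ (twice) from Definition \ref{omkeer} and $\mu\circ\alpha=\mu$ since $\mathbf{A}$ is a system. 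Hence $\mu\circ\alpha^{\Theta}=\mu$.

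I do not anticipate a serious obstacle here; the only point requiring a little care is complete positivity under conjugation by a $\ast$-antihomomorphism, since antihomomorphisms do not preserve positivity of matrix amplifications in the naive way — one must use that $\Theta$ applied entrywise to $M_{n}(A)$ composed with the transpose on $M_{n}$ sends positive elements to positive elements, exactly as the paper already handles $j_{\mu}^{\ast}$ being ``completely positive in the appropriate sense'' in the discussion preceding Proposition \ref{KMS-dstelsel}. Routing the argument through $\alpha^{\Theta}=(\Theta\circ\alpha\circ\Theta)^{\sigma}$ and Proposition \ref{KMS-dstelsel} sidesteps even this, so the cleanest write-up is the two-line argument: $\Theta\circ\alpha\circ\Theta$ is u.c.p. with $\mu$-invariance, hence so is its KMS-dual $\alpha^{\Theta}$.
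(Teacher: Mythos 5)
Your proof is correct and follows essentially the same route as the paper's: the paper likewise writes $\alpha^{\Theta}=\Theta^{\ast}\circ\alpha^{\sigma}\circ\Theta^{\ast}$ (with $\Theta^{\ast}(a):=\Theta(a^{\ast})$), gets complete positivity by the same argument used for $E_{\omega}^{\sigma}$ and $j_{\mu}^{\ast}$, and deduces $\mu\circ\alpha^{\Theta}=\mu$ from $\mu\circ\Theta=\mu$ together with $\mu\circ\alpha^{\sigma}=\mu$ from Proposition \ref{KMS-dstelsel}. One minor caveat: your alternative route via $\alpha^{\Theta}=(\Theta\circ\alpha\circ\Theta)^{\sigma}$ does not actually sidestep the antihomomorphism-conjugation issue, since proving that $\Theta\circ\alpha\circ\Theta$ is completely positive requires exactly the same $\Theta^{\ast}$-type matricial argument you describe in your first route.
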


\begin{proof}
Recall from Proposition \ref{KMS-dstelsel} that $\mathbf{A}^{\sigma}$ is a
system. Since $\alpha^{\sigma}$ is u.c.p., it can be checked as in Proposition
\ref{KMS-dstelsel} from $\alpha^{\Theta}=\Theta^{\ast}\circ\alpha^{\sigma
}\circ\Theta^{\ast}$, where $\Theta^{\ast}(a):=\Theta(a^{\ast})$ for all $a\in
A$, that $\alpha^{\Theta}$ is u.c.p. as well. From $\mu\circ\Theta=\mu$, we
obtain $\mu\circ\alpha^{\Theta}=\mu$.
\end{proof}

\begin{remark}
\label{ktydTh-KMSd}Similar to before, for a QMS $(\alpha_{t})_{t\geq0}$ with
the $\sigma$-weak continuity property as in Remark \ref{kontTyd}, we have that
this continuity property also holds for $(\alpha_{t}^{\Theta})_{t\geq0}$,
where $\alpha_{t}^{\Theta}:=(\alpha_{t})^{\Theta}$ for every $t$. This follows
from the continuity of $(\alpha_{t}^{\sigma})_{t\geq0}$ in Remark
\ref{ktydKMSd}, and the fact that $\Theta$ is normal (Proposition
\ref{omkEiensk}).
\end{remark}

As a simple corollary of Proposition \ref{id-kor} we have:

\begin{corollary}
\label{fb&balans}Let $\mathbf{A}$ be a system and let $\Theta$ be a reversing 
operation for $\mathbf{A}$. Then the following are equivalent:

(a) $\mathbf{A}$ satisfies $\Theta$-sqdb.

(b) $\mathbf{A}$ and $\mathbf{A}^{\Theta}$ are in balance with respect to
$\delta_{\mu}$.

(c) $\mathbf{A}^{\Theta}$ and $\mathbf{A}$ are in balance with respect to
$\delta_{\mu}$.
\end{corollary}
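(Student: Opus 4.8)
The plan is to reduce everything to Proposition~\ref{id-kor} applied to the pair of systems $\mathbf{A}$ and $\mathbf{A}^{\Theta}$, which have the same algebra $A$ and the same state $\mu$, so that balance with respect to $\delta_\mu$ is equivalent to equality of the systems. First I would recall that, by Proposition~\ref{id-kor}, for two systems $\mathbf{A}$ and $\mathbf{B}$ with $(A,\mu)=(B,\nu)$, balance with respect to the diagonal coupling $\delta_\mu$ holds if and only if $\alpha=\beta$. Applying this with $\mathbf{B}=\mathbf{A}^{\Theta}=(A,\alpha^{\Theta},\mu)$ (which is a legitimate system by Proposition~\ref{thKMS-dstelsel}), we get that (b) is equivalent to $\alpha^{\Theta}=\alpha$, which is exactly the definition of $\Theta$-sqdb, i.e.\ (a). So the equivalence (a)$\Leftrightarrow$(b) is immediate.

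Next I would handle (a)$\Leftrightarrow$(c) in the same way: balance of $\mathbf{A}^{\Theta}$ and $\mathbf{A}$ with respect to $\delta_\mu$ is, again by Proposition~\ref{id-kor}, equivalent to $\alpha^{\Theta}=\alpha$, hence to (a). Alternatively, one can observe that (b)$\Leftrightarrow$(c) directly from the fact that $\delta_\mu$ is, up to the natural flip, symmetric, together with the involution property $(\alpha^{\Theta})^{\Theta}=\alpha$ from Proposition~\ref{omkEiensk}: if $\mathbf{A}$ and $\mathbf{A}^{\Theta}$ are in balance with respect to $\delta_\mu$, then replacing $\mathbf{A}$ by $\mathbf{A}^{\Theta}$ and using $(\alpha^{\Theta})^{\Theta}=\alpha$ swaps the roles, giving balance of $\mathbf{A}^{\Theta}$ and $(\mathbf{A}^{\Theta})^{\Theta}=\mathbf{A}$. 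Either route works; I would probably just invoke Proposition~\ref{id-kor} twice for brevity.

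There is essentially no obstacle here: the corollary is a direct unwinding of definitions once Proposition~\ref{id-kor} and Proposition~\ref{thKMS-dstelsel} are in hand. The only point that warrants a word of care is that Proposition~\ref{id-kor} is stated for a pair of systems sharing the same $(A,\mu)$, and one should note explicitly that $\mathbf{A}$ and $\mathbf{A}^{\Theta}$ do share the same algebra and state, so the proposition applies verbatim. Thus the proof is a two- or three-line argument: cite Proposition~\ref{thKMS-dstelsel} to know $\mathbf{A}^{\Theta}$ is a system, then apply Proposition~\ref{id-kor} to obtain (a)$\Leftrightarrow$(b) and (a)$\Leftrightarrow$(c).
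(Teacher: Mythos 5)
Your proposal is correct and matches the paper's (implicit) argument: the corollary is presented there precisely as an immediate consequence of Proposition~\ref{id-kor} applied to the pair $\mathbf{A}$ and $\mathbf{A}^{\Theta}$, with Proposition~\ref{thKMS-dstelsel} guaranteeing that $\mathbf{A}^{\Theta}$ is a system on the same $(A,\mu)$. Invoking Proposition~\ref{id-kor} twice, as you do, is exactly the intended two-line proof; your alternative flip argument for (b)$\Leftrightarrow$(c) is unnecessary and slightly loose, but you rightly do not rely on it.
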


When two systems are in balance, we expect the one system to partially inherit
properties of the other. We saw an example of this in Proposition \ref{neig}.
As mentioned there, this suggests that for any given property that a system
may have, we can in principle consider generalized forms of the property via
balance. In particular for $\Theta$-sqdb:

\begin{itemize}
\item We can consider systems $\mathbf{A}$ and $\mathbf{B}$ which are in
balance with respect to a coupling $\omega$ (or a set of couplings) other than
$\mu\odot\nu^{\prime}$, but not necessarily with respect to $\delta_{\mu}$.
Assuming that either $\mathbf{A}$ or $\mathbf{B}$ satisfies $\Theta$-sqdb, for
some reversing operation $\Theta$ for $\mathbf{A}$ or $\mathbf{B}$
respectively, the other system can then be viewed as satisfying a generalized
version of $\Theta$-sqdb.
\end{itemize}

A second possible way of obtaining conditions generalizing $\Theta$-sqdb for a
system $\mathbf{A}$, is simply to adapt Corollary \ref{fb&balans} more directly:

\begin{itemize}
\item We can require $\mathbf{A}$ and $\mathbf{A}^{\Theta}$ to be in balance
with respect to some coupling $\omega$ (or a set of couplings) other than
$\mu\odot\mu^{\prime}$, but not necessarily with respect to $\delta_{\mu}$. Or
$\mathbf{A}^{\Theta}$ and $\mathbf{A}$ to be in balance with respect to some
coupling $\omega$ (or a set of couplings) other than $\mu\odot\mu^{\prime}$,
but not necessarily with respect to $\delta_{\mu}$.
\end{itemize}

Under KMS-symmetry (see Eq. (\ref{KMS-sim})), the two options in the second
condition, namely $\mathbf{A}$ and $\mathbf{A}^{\Theta}$ in balance, versus
$\mathbf{A}^{\Theta}$ and $\mathbf{A}$ in balance, are equivalent:

\begin{proposition}
If the system $\mathbf{A}$ is KMS-symmetric, then $\mathbf{A}\omega
\mathbf{A}^{\Theta}$ if and only if $\mathbf{A}^{\Theta}\omega_{E}\mathbf{A}$,
where $E:=\Theta\circ E_{\omega}\circ\Theta$. (See Proposition \ref{kopUitE}
for $\omega_{E}$.)
\end{proposition}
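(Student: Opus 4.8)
The strategy is to reduce the statement to the characterization of balance in terms of the u.c.p.\ maps $E_\omega$ (Theorem \ref{balkar}) and then to unwind the definitions of $\alpha^\Theta$ and of $\omega_E$. By Theorem \ref{balkar}, $\mathbf{A}\omega\mathbf{A}^\Theta$ holds precisely when $E_\omega\circ\alpha=\alpha^\Theta\circ E_\omega$, and $\mathbf{A}^\Theta\omega_E\mathbf{A}$ holds precisely when $E_{\omega_E}\circ\alpha^\Theta=\alpha\circ E_{\omega_E}$. By Proposition \ref{kopUitE} we have $E_{\omega_E}=E=\Theta\circ E_\omega\circ\Theta$ (once we check that $E$ is u.c.p.\ with $\nu'\circ E=\mu$ appropriately — but note here the two von Neumann algebras are both $A$ with state $\mu$, and $\Theta$ is a normal $\ast$-antihomomorphism fixing $\mu$, so $E$ is indeed u.c.p.\ and $\mu$-preserving, using Proposition \ref{omkEiensk}). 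So the whole statement comes down to the equivalence
\[
E_\omega\circ\alpha=\alpha^\Theta\circ E_\omega
\quad\Longleftrightarrow\quad
(\Theta\circ E_\omega\circ\Theta)\circ\alpha^\Theta=\alpha\circ(\Theta\circ E_\omega\circ\Theta).
\]

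**Carrying out the reduction.**
First I would conjugate the right-hand equation by $\Theta$ on both sides (legitimate since $\Theta^2=\id_A$), turning it into
\[
E_\omega\circ\Theta\circ\alpha^\Theta\circ\Theta=\Theta\circ\alpha\circ\Theta\circ E_\omega.
\]
Now I use the identity $\alpha^\Theta=\Theta\circ\alpha^\sigma\circ\Theta$ (Definition \ref{omkeer}), which gives $\Theta\circ\alpha^\Theta\circ\Theta=\alpha^\sigma$; and I use KMS-symmetry of $\mathbf{A}$, i.e.\ $\alpha^\sigma=\alpha$ (Eq.\ (\ref{KMS-sim})), so $\Theta\circ\alpha^\Theta\circ\Theta=\alpha$. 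Similarly, $\Theta\circ\alpha\circ\Theta$: since $\alpha$ is KMS-symmetric, $\Theta\circ\alpha\circ\Theta$ is exactly $\alpha^\Theta$ reinterpreted — indeed from $\alpha^\Theta=\Theta\circ\alpha^\sigma\circ\Theta=\Theta\circ\alpha\circ\Theta$ under KMS-symmetry. Substituting, the right-hand equation becomes $E_\omega\circ\alpha=\alpha^\Theta\circ E_\omega$, which is exactly the left-hand equation. This closes the equivalence.

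**The main obstacle.**
The only genuinely substantive point — beyond bookkeeping with $\Theta$ and $\sigma$ — is confirming that $\omega_E$ really is a coupling of $\mu$ with itself and that $E_{\omega_E}=E$, so that Theorem \ref{balkar} applies to $\mathbf{A}^\Theta\omega_E\mathbf{A}$. For this I need $E=\Theta\circ E_\omega\circ\Theta$ to be completely positive, unital, and to satisfy $\mu\circ E=\mu$ (here $A=B$, $\mu=\nu$, so $\nu'\circ E$ becomes the relevant $\mu$-type condition in the hypotheses of Proposition \ref{kopUitE}). Complete positivity of $E$ follows because $\Theta$ is a $\ast$-antihomomorphism: conjugating a completely positive map by a $\ast$-antihomomorphism is again completely positive (apply $\Theta$ to matrix entries and transpose the matrix, which preserves positivity), just as in the argument given for $E_\omega^\sigma$ in Section \ref{afdKar}. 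Unitality is immediate from $\Theta(1)=1$, and $\mu\circ E=\mu\circ\Theta\circ E_\omega\circ\Theta=\mu\circ E_\omega\circ\Theta=\mu\circ\Theta=\mu$ using $\mu\circ\Theta=\mu$ and $\mu\circ E_\omega=\mu$ from Theorem \ref{E-eienskap}. With these checks in place, the algebraic manipulation above is routine, so I would expect the write-up to be short.
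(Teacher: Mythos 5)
Your proof is correct and follows essentially the same route as the paper: verify via Proposition \ref{kopUitE} that $E=\Theta\circ E_{\omega}\circ\Theta$ is u.c.p.\ with $\mu\circ E=\mu$ so that $\omega_{E}$ is a coupling with $E_{\omega_{E}}=E$, then apply Theorem \ref{balkar} and use KMS-symmetry together with $\Theta^{2}=\id_{A}$ to pass between the two intertwining relations. The only quibble is the passing reference to ``$\nu'\circ E$''---the hypothesis in Proposition \ref{kopUitE} is $\nu\circ E=\mu$, which here is exactly the condition $\mu\circ E=\mu$ that you in fact verify---so nothing is missing.
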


\begin{proof}
By KMS-symmetry $\alpha^{\Theta}=\Theta\circ\alpha\circ\Theta$. Note that for
any coupling $\omega$ we have that $E=\Theta^{\ast}\circ E_{\omega}\circ
\Theta^{\ast}$ is u.c.p. like $\alpha^{\Theta}$ in the proof of Proposition
\ref{thKMS-dstelsel}, and $\mu\circ E=\mu$ by Theorem \ref{E-eienskap} and
$\mu\circ\Theta=\mu$. Then $\omega_{E}$ is a coupling by Proposition
\ref{kopUitE}. From Theorem \ref{balkar} we have
\[
\mathbf{A}\omega\mathbf{A}^{\Theta}\Leftrightarrow E_{\omega}\circ
\alpha=\Theta\circ\alpha\circ\Theta\circ E_{\omega}\Leftrightarrow
E\circ\alpha^{\Theta}=\alpha\circ E\Leftrightarrow\mathbf{A}^{\Theta}
\omega_{E}\mathbf{A.}
\]
\end{proof}

The two types of conditions suggested above will be illustrated by a simple
example in the next section, where the conditions obtained will in fact be
weaker than $\Theta$-sqdb.

A basic question we now have is the following: can generalized conditions like
these be applied to characterize certain non-equilibrium steady states $\mu$
which have enough structure that one can successfully analyse them
mathematically, while also having physical relevance? This seems plausible,
given that these conditions are structurally so closely related to detailed
balance itself. We briefly return to this in Section \ref{afdVw}.

\section{An example}\label{Vb}

In this section we use a very simple example based on the examples in
\cite[Section 6]{AFQ}, \cite{BQ}, \cite[Section 5]{FR10} and \cite[Subsection
7.1]{FR} to illustrate some of the ideas discussed in this paper. Our main
reason for considering this example is that it is comparatively easy to
manipulate mathematically. We leave a more in depth study of relevant examples
for future work.

Let $\mathfrak{H}$ be a separable Hilbert space with total orthonormal set
$e_{1},e_{2},e_{3},...$. We are going to consider systems on the von Neumann
algebra $B(\mathfrak{H})$. These systems will all have the same faithful
normal state $\zeta$ on $B(\mathfrak{H})$ given by the diagonal (in the
mentioned basis) density matrix
\[
\rho=\left[
\begin{array}
[c]{ccc}
\rho_{1} &  & \\
& \rho_{2} & \\
&  & \ddots
\end{array}
\right]
\]
where $\rho_{1},\rho_{2},\rho_{3},...>0$ satisfy $\sum_{n=1}^{\infty}\rho
_{n}=1$. I.e.
\[
\zeta(a)=\operatorname{Tr}(\rho a)
\]
for all $a\in B(\mathfrak{H})$.

We now briefly explain what the cyclic representation and modular conjugation
look like for the state $\zeta$:

The (faithful) cyclic representation of $(B(\mathfrak{H}),\zeta)$ can be
written as $(H,\pi,\Omega)$ where $H=\mathfrak{H}\otimes\mathfrak{H}$,
\[
\pi(a)=a\otimes1
\]
for all $a\in B(\mathfrak{H})$, and the maximally entangled state (reducing to
$\rho$)
\[
\Omega=\sum_{n=1}^{\infty}\sqrt{\rho_{n}}e_{n}\otimes e_{n}
\]
is the cyclic vector. Our von Neumann algebra is therefore represented as
\[
A=\pi(B(\mathfrak{H})),
\]
and the state $\zeta$ is represented by the state $\mu$ on $A$ given by
\[
\mu(\pi(a))=\zeta(a)
\]
for all $a\in A$. However, we also consider a second representation
$\pi^{\prime}$ given by
\[
\pi^{\prime}(a)=1\otimes a
\]
for all $a\in B(\mathfrak{H})$, so $A^{\prime}=\pi^{\prime}(B(\mathfrak{H}))$.
The state $\mu^{\prime}$ on $A^{\prime}$ is then given by
\[
\mu^{\prime}(\pi^{\prime}(a))=\left\langle \Omega,\pi^{\prime}(a)\Omega
\right\rangle =\zeta(a)
\]
for all $a\in A$.

The modular conjugation $J$ associated to $\mu$ (and to $\zeta$) is then
obtained as the conjugate linear operator $J:H\rightarrow H$ given by
\[
J(e_{p}\otimes e_{q})=e_{q}\otimes e_{p}
\]
for all $p,q=1,2,3,...$. Furthermore,
\begin{equation}
j(\pi(a)):=J\pi(a)^{\ast}J=\pi^{\prime}(a^{T}) \label{jB(h)}
\end{equation}
for all $a\in B(\mathfrak{H})$, where $a^{T}$ denotes the transpose of $a$ in
the basis $e_{1},e_{2},e_{3},...$.

This allows us to apply the general notions from the earlier sections
explicitly to this specific case.

Regarding notation: Instead of the notation $\left|  x\right\rangle
\left\langle y\right|  $ for $x,y\in\mathfrak{H}$, we use $x\Join y$, i.e.
\[
(x\Join y)z:=x\left\langle y,z\right\rangle
\]
for all $z\in\mathfrak{H}$.

\subsection{The couplings}\label{VbKop}

We consider couplings of $\zeta$ with itself. A coupling of $\zeta$ with
itself corresponds to a coupling of $\mu$ with itself in the cyclic
representation, which is a state $\omega$ on $A\odot A^{\prime}=\pi
(B(\mathfrak{H}))\odot\pi^{\prime}(B(\mathfrak{H}))\cong B(\mathfrak{H})\odot
B(\mathfrak{H})$ such that
\[
\omega(\pi(a)\otimes1)=\mu(\pi(a))\text{ \ \ and \ \ }\omega(1\otimes
\pi^{\prime}(a))=\mu^{\prime}(\pi^{\prime}(a))
\]
for all $a\in B(\mathfrak{H})$. However, in this concrete example it is
clearly equivalent, and simpler in terms of notation, to view $\omega$
directly as a state on $B(\mathfrak{H})\odot B(\mathfrak{H})$ such that
\begin{equation}
\omega(a\otimes1)=\zeta(a)\text{ \ \ and \ \ }\omega(1\otimes a)=\zeta(a)
\label{vbKop1}
\end{equation}
for all $a\in B(\mathfrak{H})$, rather than to work via the cyclic representation.

Consider any disjoint subsets $Y_{1},Y_{2},Y_{3},...$ of 
$\mathbb{N}_{+}:=\{1,2,3,4,...\}$ such that $\cup_{n=1}^{\infty}Y_{n}=\mathbb{N}_{+}$. 
We construct a coupling $\omega$ which is given by a density matrix $\kappa\in
B(\mathfrak{H}\otimes\mathfrak{H})$, i.e.
\[
\omega(c)=\Tr(\kappa c)
\]
for all $c\in B(\mathfrak{H})\odot B(\mathfrak{H})$. Therefore we may as well
allow $c\in B(\mathfrak{H}\otimes\mathfrak{H})$, and define $\omega$ on the
latter algebra, even though our theory only needs it to be defined on the
algebraic tensor product $B(\mathfrak{H})\odot B(\mathfrak{H})$.

We begin by obtaining a positive trace-class operator $\kappa_{n}$
corresponding to the set $Y_{n}$ for every $n$. Each $\kappa_{n}$ will be one
of three types, namely a (maximally) entangled type, a mixed type, or a
product type, each of which we now discuss in turn for any $n$.

First, the \emph{entangled type} (corresponding to an entangled pure state):
We set
\[
\Omega_{n}=\sum_{q\in Y_{n}}\sqrt{\rho_{q}}e_{q}\otimes e_{q}
\]
and
\[
\kappa_{n}=\Omega_{n}\Join\Omega_{n}=\sum_{p\in Y_{n}}\sum_{q\in Y_{n}}
\sqrt{\rho_{p}\rho_{q}}(e_{p}\Join e_{q})\otimes(e_{p}\Join e_{q})
\]
for all $n$. It is straightforward to verify that
\begin{equation}
\Tr(\kappa_{n})=\sum_{q\in Y_{n}}\rho_{q} \label{kappanSp}
\end{equation}
and
\begin{equation}
\omega_{n}(a\otimes1)=\omega_{n}(1\otimes a)=\sum_{q\in Y_{n}}\rho
_{q}\left\langle e_{q},ae_{q}\right\rangle \label{omeganBep}
\end{equation}
for all $a\in B(\mathfrak{H})$.

Secondly, the \emph{mixed type} (corresponding to a mixture of pure states):
Setting
\[
\kappa_{n}=\sum_{q\in Y_{n}}\rho_{q}(e_{q}\otimes e_{q})\Join(e_{q}\otimes
e_{q})=\sum_{q\in Y_{n}}\rho_{q}(e_{q}\Join e_{q})\otimes(e_{q}\Join e_{q})
\]
we again obtain Eqs. (\ref{kappanSp}) and (\ref{omeganBep}).

Thirdly, the \emph{product type}: Setting
\[
\kappa_{n}=d_{n}\otimes d_{n}
\]
where
\[
d_{n}:=\left(  \sum_{p\in Y_{n}}\rho_{p}\right)  ^{-1/2}\sum_{q\in Y_{n}}
\rho_{q}(e_{q}\Join e_{q})
\]
we yet again obtain Eqs. (\ref{kappanSp}) and (\ref{omeganBep}).

For each type we take
\[
\kappa_{n}=0
\]
if $Y_{n}$ is empty (this allows for a partition of $\mathbb{N}_{+}$ into a
finite number of non-empty subsets).

For each $n$, let $\kappa_{n}$ be any of the three types above. Then
$\kappa_{n}$ is indeed trace-class and positive, so setting
\begin{equation}
\omega_{n}(c)=\Tr(\kappa_{n}c) \label{omegan}
\end{equation}
for all $c\in B(\mathfrak{H}\otimes\mathfrak{H})$, we obtain a well-defined
positive linear functional $\omega_{n}$ on $B(\mathfrak{H}\otimes
\mathfrak{H})$. Then
\[
\omega:=\sum_{n=1}^{\infty}\omega_{n}
\]
converges in the norm of $B(\mathfrak{H}\otimes\mathfrak{H})^{\ast}$, since
$\left\|  \omega_{n}\right\|  =\omega_{n}(1)=\Tr(\kappa_{n})$,
so $\sum_{n=1}^{\infty}\left\|  \omega_{n}\right\|  =1$. Correspondingly,
\begin{equation}
\kappa:=\sum_{n=1}^{\infty}\kappa_{n} \label{kappaReeks}
\end{equation}
converges in the trace-class norm $\left\|  \cdot\right\|  _{1}$, since
$\sum_{n=1}^{\infty}\left\|  \kappa_{n}\right\|  _{1}=\sum_{n=1}^{\infty
}\Tr(\kappa_{n})=1$. Then it indeed follows that
\[
\omega(c)=\sum_{n=1}^{\infty}\Tr(\kappa_{n}c)=\Tr
(\kappa c),
\]
since $|\sum_{n=1}^{m}\Tr(\kappa_{n}c)-\Tr(\kappa
c)|\leq\left\|  \sum_{n=1}^{m}\kappa_{n}-\kappa\right\|  _{1}\left\|
c\right\|  $.

Furthermore $\omega(1)=\sum_{n=1}^{\infty}\omega_{n}(1)=\sum_{n=1}^{\infty
}\rho_{n}=1$, and from Eq. (\ref{omeganBep}) it follows that the conditions in
Eq. (\ref{vbKop1}) hold. So $\omega$ is a coupling of $\zeta$ with itself as required.

For $Y_{1}=$ $\mathbb{N}_{+}$, i.e. $\kappa=\kappa_{1}$, we can get two
extremes, namely the diagonal coupling $\omega$ if $\kappa_{1}$ is of the
entangled type, and the product state $\omega=\zeta\otimes\zeta$ on
$B(\mathfrak{H}\otimes\mathfrak{H})$ when $\kappa_{1}$ is of the product type.
But the construction above gives many cases other than these two extremes.
Then balance with respect to $\omega$ is non-trivial, but does not necessarily
force two systems $\mathbf{A}$ and $\mathbf{B}$ on the same algebra $A$ to
have the same dynamics as in Proposition \ref{id-kor}.

\subsection{The dynamics}\label{VbDin}

We now construct dynamics in order to obtain examples of systems on the von
Neumann algebra $B(\mathfrak{H})$. Let $r_{j}\in\{3,4,5,...\}$ and $0<k_{j}<1$
for $j=1,2,3,...$, and write $k=(k_{1},k_{2},k_{3},...)$. In terms of the
$n\times n$ matrix
\[
O_{n}=\left[
\begin{array}
[c]{cccc}
0 & \cdots & 0 & 1\\
1 &  &  & 0\\
& \ddots &  & \vdots\\
&  & 1 & 0
\end{array}
\right]  ,
\]
with the blank spaces all being zero, we then define $R_{k}\in B(\mathfrak{H})$ 
by the infinite matrix
\[
R_{k}=\left[
\begin{array}
[c]{ccc}
k_{1}^{1/2}O_{r_{1}} &  & \\
& k_{2}^{1/2}O_{r_{2}} & \\
&  & \ddots
\end{array}
\right]
\]
in the basis $e_{1},e_{2},e_{3},...$, where again the blank spaces are zero.
In other words, $R_{k}e_{1}=k_{1}^{1/2}e_{2}$ etc. So $R_{k}$ consists of a
infinite direct sum of finite cycles, each cycle including its own factor
$k_{n}^{1/2}$. Replacing $k$ by $1-k:=(1-k_{1},1-k_{2},1-k_{3},...)$, we
similarly obtain $R_{1-k}$. In the same basis we consider a self-adjoint
operator $g\in B(\mathfrak{H})$ defined by the diagonal matrix
\[
g=\left[
\begin{array}
[c]{ccc}
g_{1} &  & \\
& g_{2} & \\
&  & \ddots
\end{array}
\right] ,
\]
with $g_{1},g_{2},g_{3},...$ a bounded sequence in $\mathbb{R}$. Note that
$R_{k}^{\ast}R_{k}+R_{1-k}R_{1-k}^{\ast}=1$. So we can define the generator
$\mathcal{K}$ of a uniformly continuous semigroup $\mathcal{S}=(\mathcal{S}
_{t})_{t\geq0}$ in $B(\mathfrak{H})$ by%
\[
\mathcal{K}(a)=R_{k}^{\ast}aR_{k}+R_{1-k}aR_{1-k}^{\ast}-a+i[g,a]
\]
for all $a\in B(\mathfrak{H})$. See for example \cite[Corollary 30.13]{Par};
the original papers on generators for uniformly continuous semigroups are
\cite{GKS} and \cite{L}.

In the same way and still using the same basis, for $l=(l_{1},l_{2}
,l_{3},...)$ with $0<l_{j}<1$ we define the generator $\mathcal{L}$ of a
second uniformly continuous semigroup $\mathcal{T}=(\mathcal{T}_{t})_{t\geq0}$
in $\mathfrak{H}$ by
\[
\mathcal{L}(b)=R_{l}^{\ast}bR_{l}+R_{1-l}bR_{1-l}^{\ast}-b+i[h,b]
\]
for all $b\in B(\mathfrak{H})$, where the diagonal matrix
\[
h=\left[
\begin{array}
[c]{ccc}
h_{1} &  & \\
& h_{2} & \\
&  & \ddots
\end{array}
\right]  ,
\]
with $h_{1},h_{2},h_{3},...$ a bounded sequence in $\mathbb{R}$, defines a
self-adjoint operator $h\in B(\mathfrak{H})$.

In the rest of Section \ref{Vb}, we assume the following:
\begin{align*}
\rho_{1}  &  =...=\rho_{r_{1}}\\
\rho_{r_{1}+1}  &  =...=\rho_{r_{1}+r_{2}}\\
\rho_{r_{1}+r_{2}+1}  &  =...=\rho_{r_{1}+r_{2}+r_{3}}\\
&  \vdots
\end{align*}
Then the state $\zeta$ is seen to be invariant under both $\mathcal{S}$ and
$\mathcal{T}$ by checking that $\zeta\circ\mathcal{K}=0$ and $\zeta
\circ\mathcal{L}=0$.

It is going to be simpler (but equivalent) to work directly in terms of
$B(\mathfrak{H})$, rather than its cyclic representation. Nevertheless, since
much of the theory of this paper is expressed in the cyclic representation, it
is worth expressing the various objects in this representation as well. In
particular we can then see how to obtain duals directly in terms of
$B(\mathfrak{H})$.

Our two systems $\mathbf{A}$ and $\mathbf{B}$, viewed in the cyclic
representation, are in terms of $A=B=\pi(B(\mathfrak{H}))$, with the dynamics
given by
\[
\alpha_{t}(\pi(a))=\pi(\mathcal{S}_{t}(a))
\]
and
\[
\beta_{t}(\pi(b))=\pi(\mathcal{T}_{t}(b))
\]
and the states $\mu$ and $\nu$ both given by
\[
\mu(\pi(a))=\nu(\pi(a))=\zeta(a)=\Tr(\rho a)
\]
for all $a,b\in B(\mathfrak{H})$. The diagonal coupling for $\mu$
\[
\delta_{\mu}:\pi(B(\mathfrak{H}))\odot\pi^{\prime}(B(\mathfrak{H}))\rightarrow\mathbb{C}
\]
is given by
\begin{align*}
\delta_{\mu}(\pi(a)\odot\pi^{\prime}(b))  &  =\left\langle \Omega,\pi
(a)\pi^{\prime}(b)\Omega\right\rangle =\left\langle \Omega,(a\otimes
b)\Omega\right\rangle \\
&  =\sum_{p=1}^{\infty}\sum_{q=1}^{\infty}\left\langle e_{p},\rho^{1/2}
ae_{q}\right\rangle \left\langle e_{q},\rho^{1/2}b^{T}e_{p}\right\rangle \\
&  =\operatorname{Tr}(\rho^{1/2}a\rho^{1/2}b^{T})
\end{align*}
where $b^{T}\in B(\mathfrak{H})$ is obtained as the transpose of the matrix
representation of $b$ in terms of the basis $e_{1},e_{2},e_{3},...$. In effect
$\delta_{\mu}$ is the maximally entangled state $\left\langle \Omega
,(\cdot)\Omega\right\rangle $ on $B(\mathfrak{H})\odot B(\mathfrak{H})$,
reducing to $\operatorname{Tr}(\rho(\cdot))$ on $B(\mathfrak{H})$.

The dual $\beta_{t}^{\prime}:$ $\pi^{\prime}(B(\mathfrak{H}))\rightarrow
\pi^{\prime}(B(\mathfrak{H}))$ of $\beta_{t}$ is given by
\[
\left\langle \Omega,\pi(b)\beta_{t}^{\prime}(\pi^{\prime}(b^{\prime}
))\Omega_{\zeta}\right\rangle =\left\langle \Omega,\beta_{t}(\pi
(b))\pi^{\prime}(b^{\prime})\Omega\right\rangle
\]
for all $b,b^{\prime}\in B(\mathfrak{H})$.

We therefore define the dual $\mathcal{L}^{\prime}$ of $\mathcal{L}$ via the
representations by requiring
\[
\left\langle \Omega,\pi(b)\pi^{\prime}(\mathcal{L}^{\prime}(b^{\prime}
))\Omega\right\rangle =\left\langle \Omega,\pi(\mathcal{L}(b))\pi^{\prime
}(b^{\prime})\Omega\right\rangle
\]
for all $b,b^{\prime}\in B(\mathfrak{H})$, i.e.
\[
\Tr(\rho^{1/2}a\rho^{1/2}(\mathcal{L}^{\prime}(b))^{T})=\Tr(\rho^{1/2}
\mathcal{L}(a)\rho^{1/2}b^{T})
\]
for all $a,b\in B(\mathfrak{H})$. Note that $\mathcal{L}^{\prime}$ is indeed 
the dual (with respect to $\zeta$) of $\mathcal{L}$ in the sense of Theorem 
\ref{duaalBestaan}, but represented on $\mathfrak{H}$ instead of on the GNS Hilbert 
space. It is then straightforward to verify that
\begin{equation}
\mathcal{L}^{\prime}(b)=R_{1-l}^{\ast}bR_{1-l}+R_{l}bR_{l}^{\ast}-b+i[h,b]
\label{L'}
\end{equation}
for all $b\in B(\mathfrak{H})$. From this one can see that $\mathcal{L}
^{\prime}$ is also the generator of a uniformly continuous semigroup
$\mathcal{T}^{\prime}=(\mathcal{T}_{t}^{\prime})_{t\geq0}$ in $\mathfrak{H}$,
which in addition satisfies
\[
\left\langle \Omega,\pi(b)\pi^{\prime}(\mathcal{T}_{t}^{\prime}(b^{\prime
}))\Omega\right\rangle =\left\langle \Omega,\pi(\mathcal{T}_{t}(b))\pi
^{\prime}(b^{\prime})\Omega\right\rangle
\]
and therefore
\[
\pi^{\prime}(\mathcal{T}_{t}^{\prime}(b^{\prime}))=\beta_{t}^{\prime}
(\pi^{\prime}(b^{\prime}))
\]
for all $b,b^{\prime}\in B(\mathfrak{H})$. As with $\mathcal{L}^{\prime}$ above, 
$\mathcal{T}_{t}^{\prime}$ is the dual of $\mathcal{T}_{t}$ in the sense of 
Definition \ref{duaalDef}, but represented on $\mathfrak{H}$. So we 
correspondingly call the semigroup $\mathcal{T}^{\prime}$ the \emph{dual} 
of the semigroup $\mathcal{T}$. 

We now have a complete description of the systems, as well as their duals.

\subsection{Balance}\label{VbBal}

We now show examples of balance between
\[
\mathbf{A:}=(B(\mathfrak{H}),\mathcal{S},\zeta)\text{ \ \ and \ \ }
\mathbf{B:}=(B(\mathfrak{H}),\mathcal{T},\zeta)
\]
and illustrate a number of points made in this paper. Remember that since we
now have a continuous time parameter $t\geq0$, the balance condition in
Definition \ref{balans} is required to hold at every $t$. However, it then
follows that $\mathbf{A}$ and $\mathbf{B}$ are in balance with respect to
$\omega$ if and only if
\[
\Tr(\kappa(\mathcal{K}(a)\otimes b))=
\Tr(\kappa(a\otimes\mathcal{L}^{\prime}(b))
\]
for all $a,b\in B(\mathfrak{H})$. From this one can easily check that
$\mathbf{A}$ and $\mathbf{B}$ are in balance with respect to $\omega$ if and
only if
\begin{align*}
&  (R_{k}\otimes1)\kappa(R_{k}\otimes1)^{\ast}+(R_{1-k}\otimes1)^{\ast}
\kappa(R_{1-k}\otimes1)-i[g\otimes1,\kappa]\\
&  =(1\otimes R_{1-l})\kappa(1\otimes R_{1-l})^{\ast}+(1\otimes R_{l})^{\ast
}\kappa(1\otimes R_{l})-i[1\otimes h,\kappa]
\end{align*}
holds. However, equating the real and imaginary parts respectively (keeping in
mind that $\kappa$ as given in Subsection \ref{VbKop} is a real infinite
matrix in the basis $e_{p}\otimes e_{q}$), we see that this is equivalent to
\begin{align}
&  (R_{k}\otimes1)\kappa(R_{k}\otimes1)^{\ast}+(R_{1-k}\otimes1)^{\ast}
\kappa(R_{1-k}\otimes1)\nonumber\\
&  =(1\otimes R_{1-l})\kappa(1\otimes R_{1-l})^{\ast}+(1\otimes R_{l})^{\ast
}\kappa(1\otimes R_{l}) \label{kappaBalans}
\end{align}
and
\begin{equation}
\lbrack g\otimes1,\kappa]=[1\otimes h,\kappa] \label{kappaBalansKom}
\end{equation}
both being true.

To proceed, we refine the construction of $\kappa$ in Subsection \ref{VbKop},
by only allowing
\[
Y_{n}=\bigcup_{p\in I_{n}}Z_{p}
\]
where $Z_{1}=\{1,2,...,r_{1}\}$, $Z_{2}=\{r_{1}+1,r_{1}+2,...,r_{1}+r_{2}\}$,
etc., and where $I_{1},I_{2},I_{3},...$ is any sequence of disjoint subsets of
$\mathbb{N}_{+}$ such that $\cup_{n\in\mathbb{N}_{+}}I_{n}=\mathbb{N}_{+}$.
Note that an $I_{n}$ is allowed to be empty (then $Y_{n}$ is empty), and it is
also allowed to be infinite.

It then follows that $\mathbf{A}$ and $\mathbf{B}$ are in balance with respect
to $\omega$ if and only if
\begin{align}
&  (R_{k}\otimes1)\kappa_{n}(R_{k}\otimes1)^{\ast}+(R_{1-k}\otimes1)^{\ast
}\kappa_{n}(R_{1-k}\otimes1)\nonumber\\
&  =(1\otimes R_{1-l})\kappa_{n}(1\otimes R_{1-l})^{\ast}+(1\otimes
R_{l})^{\ast}\kappa_{n}(1\otimes R_{l}) \label{kappanBalans}
\end{align}
and
\begin{equation}
\lbrack g\otimes1,\kappa_{n}]=[1\otimes h,\kappa_{n}] \label{kappanBalansKom}
\end{equation}
both hold for every $n$. To see that Eq. (\ref{kappanBalans}) and Eq.
(\ref{kappanBalansKom}) follow from Eq. (\ref{kappaBalans}) and Eq.
(\ref{kappaBalansKom}) respectively, place the latter into $\left\langle
e_{p}\otimes e_{q},(\cdot)e_{p^{\prime}}\otimes e_{q^{\prime}}\right\rangle $
for $p,q,p^{\prime},q^{\prime}\in Y_{n}$. The converse holds, since Eq.
(\ref{kappaReeks}) is convergent in the trace-class norm.

To evaluate these conditions in detail is somewhat tedious, so we just
describe it in outline below.

Note that, roughly speaking, in a term like 
$(R_{k}\otimes1)\kappa_{n}(R_{k}\otimes1)^{\ast}$, for $\kappa_{n}$ 
of the entangled or mixed type, the first slot in the tensor product structure 
of $\kappa_{n}$ is advanced by one step in each cycle appearing in $R_{k}$. 
In a term like $(1\otimes R_{l})^{\ast}\kappa_{n}(1\otimes R_{l})$, on the other 
hand, the second slot is rolled back by one step in each cycle, which is 
equivalent to the first slot being advanced by one step. So, if $\kappa_{n}$ 
is of the entangled or mixed type, and
\begin{equation}
k_{p}=l_{p} \label{k=l}
\end{equation}
for each $p\in I_{n}$, then Eq. (\ref{kappanBalans}) holds.

Conversely, for $p\in I_{n}$, note from the definitions of the entangled and
mixed type $\kappa_{n}$ that since $r_{p}>2$, the terms $(R_{k}\otimes
1)\kappa_{n}(R_{k}\otimes1)^{\ast}$ and $(1\otimes R_{l})^{\ast}\kappa
_{n}(1\otimes R_{l})$ have to be equal (hence $k_{p}=l_{p}$), for Eq.
(\ref{kappanBalans}) to hold; the terms $(R_{1-k}\otimes1)^{\ast}\kappa
_{n}(R_{1-k}\otimes1)$ and $(1\otimes R_{1-l})\kappa_{n}(1\otimes
R_{1-l})^{\ast}$ involve other basis elements of $\mathfrak{H}\otimes
\mathfrak{H}$ and therefore can not ensure Eq. (\ref{kappanBalans}) when
$(R_{k}\otimes1)\kappa_{n}(R_{k}\otimes1)^{\ast}\neq(1\otimes R_{l})^{\ast
}\kappa_{n}(1\otimes R_{l})$.

For the product type $\kappa_{n}$, Eq. (\ref{kappanBalans}) always holds,
since $\kappa_{n}$ then commutes with $R_{k}\otimes1$ and $1\otimes R_{l}$.

When $\kappa_{n}$ is of the entangled type, one can verify by direct
calculation that Eq. (\ref{kappanBalansKom}) holds if and only if
\begin{equation}
g_{p}-g_{q}=h_{p}-h_{q} \label{g=h}
\end{equation}
for all $p,q\in Y_{n}$. For the other two types of $\kappa_{n}$, Eq.
(\ref{kappanBalansKom}) always holds, since then $\kappa_{n}$, $g\otimes1$ and
$1\otimes h$ are diagonal, so the commutators are zero.

We conclude that $\mathbf{A}$ and $\mathbf{B}$ are in balance with respect to
$\omega$ if and only if the following is true: Eq. (\ref{k=l}) holds for all
$p\in I_{n}$ for every $n$ for which $\kappa_{n}$ is either of the entangled
or mixed type, and Eq. (\ref{g=h}) holds for all $p\in I_{n}$ for every $n$
for which $\kappa_{n}$ is of the entangled type.

We now also have an example where the transitivity in Theorem \ref{trans} is
trivial, meaning that $\omega\circ\psi=\mu\odot\xi^{\prime}$ despite having
$\omega\neq\mu\odot\nu^{\prime}$ and $\psi\neq\nu\odot\xi^{\prime}$. To see
this, let $\mathbf{C}$ be a system constructed in the same way as $\mathbf{A}$
and $\mathbf{B}$ above, so it has the same von Neumann algebra and state, but
the generator giving its dynamics can use different choices in place of $k,g$
and $l,h$. As above, construct two couplings $\omega$ and $\psi$ (giving
balance of $\mathbf{A}$ and $\mathbf{B}$ with respect to $\omega$, and of
$\mathbf{B}$ and $\mathbf{C}$ with respect to $\psi$), but with entangled and
mixed types not in overlapping parts of the two couplings respectively (i.e.
the respective $Y_{n}$ sets corresponding to these two types in the respective
couplings should be disjoint), while the rest of each coupling is a
$\kappa_{n}$ of the product type. Then it can be verified using Proposition
\ref{trivtranskar} that we indeed obtain $\omega\circ\psi=\mu\odot\xi^{\prime
}$, despite having $\omega\neq\mu\odot\nu^{\prime}$ and $\psi\neq\nu\odot
\xi^{\prime}$. This illustrates that to have $\omega\circ\psi\neq\mu\odot
\xi^{\prime}$, we need sufficient ``overlap'' between $\omega$ and $\psi$,
where this overlap condition has been made precise in Hilbert space terms (in
the cyclic representations) by Proposition \ref{trivtranskar}.

\subsection{A reversing operation}\label{VbOmk}

Here we consider $\Theta$-sqdb in Definition \ref{sqdb} and Corollary
\ref{fb&balans}, as well as the two generalized detailed balance conditions
suggested at the end of Section \ref{fb}. Take $\Theta$ to be transposition in
the basis $e_{1},e_{2},e_{3},...$, i.e.%
\[
\Theta(a):=a^{T}
\]
for all $a\in B(\mathfrak{H})$. This is the standard choice of a reversing
operation for $(B(\mathfrak{H}),\zeta)$, used for example in \cite[Section
2]{FR}. In the cyclic representation, $\Theta$ would be given by
$\pi(a)\mapsto\pi(a^{T})$. It is readily confirmed from Eq. (\ref{jB(h)}) that
in this case the $\Theta$-KMS dual of $\mathbf{B}$ is $\mathbf{B}^{\Theta
}=(B(\mathfrak{H}),\mathcal{T}^{\prime},\zeta)$, i.e. in the cyclic
representation we would have $\alpha_{t}^{\Theta}=\alpha_{t}^{\prime}$ for all
$t$.

For the diagonal coupling $\delta$, obtained when $\kappa_{1}$ is of the
entangled type with $Y_{1}=\mathbb{N}_{+}$, then from Eqs. (\ref{k=l}) and
(\ref{L'}) we see that $\mathbf{B}$ and $\mathbf{B}^{\Theta}$ are in balance
with respect to $\delta$, i.e. $\mathbf{B}$ satisfies $\Theta$-sqdb (Corollary
\ref{fb&balans}), if and only if $l_{p}=1-l_{p}$, i.e. $l_{p}=1/2$, for all
$p$.

More generally, consider the situation where $\mathbf{B}$ satisfies $\Theta
$-sqdb, and $\mathbf{A}$ and $\mathbf{B}$ are in balance with respect to
$\omega$. It then follows from Eq. (\ref{k=l}) that $k_{p}=1/2$ for all $p$ in
every $I_{n}$ such that $\kappa_{n}$ is of the entangled or mixed type, but we
need not have $k_{p}=1/2$ for other values of $p$. This is therefore a
strictly weaker condition on $\mathbf{A}$ than $\Theta$-sqdb, as long as not
all the $\kappa_{n}$ are of the entangled or mixed type.

Next consider the situation where $\mathbf{A}$ and $\mathbf{A}^{\Theta}$ are
in balance with respect to $\omega$, where again not all the $\kappa_{n}$ are
of the entangled or mixed type. Then in a similar way we again see that
$k_{p}=1/2$ for all $p$ in every $I_{n}$ such that $\kappa_{n}$ is of the
entangled or mixed type, but we need not have $k_{p}=1/2$ for other values of
$p$. So again this is a strictly weaker condition than $\Theta$-sqdb.

This illustrates the two conditions suggested at the end of Section \ref{fb},
albeit in a very simple situation. Here the two conditions are essentially
equivalent when applied to $\mathbf{A}$, but we expect this not to be the case
in general.

\section{Further work}\label{afdVw}

From Subsection \ref{VbOmk} we see, in a specific example, that a system 
$\mathbf{A}$ in balance with its $\Theta$-KMS dual $\mathbf{A}^{\Theta}$, 
where $\Theta$ is a reversing operation, can possibly be heuristically 
interpreted as satisfying $\Theta$-sqdb in some respects, since we had 
$k_{p}=1/2$ for some values of $p$, but not necessarily all. However, 
this special case does not give a physical interpretation of balance 
in general.

Theorem \ref{balkar} gives a hint toward a general interpretation, namely that 
if $\mathbf{A}$ and $\mathbf{B}$ are in balance with respect to $\omega$, then 
the dynamics of system $\mathbf{A}$ is partially carried over to system $\mathbf{B}$.
However, a physical interpretation of balance in general can possibly
be made more precise.

Now, as seen in particular from Theorem \ref{balkar}, balance seems to indicate 
some common structure in the two involved systems.
However, this is a subtle issue. Already in the classical case, in the context
of joinings, it has been shown that (translating into our context) two systems
can be nontrivially in balance (i.e. the coupling is not the product state),
while the two systems have no ``factor'' (roughly speaking a subsystem) in
common. This was a difficult problem in classical ergodic theory posed by
Furstenberg in \cite{F67} in 1967, and was only solved a decade later by
Rudolph in \cite{Rud}. Therefore we suspect that balance between two systems
is more general than the existence of some form of common system inside the
two systems. This issue has not been pursued in this paper, but appears worth 
investigating.

It also seems natural to study joinings directly for systems as defined in
Definition \ref{stelsel}. The idea would be to replace the balance conditions
in Definition \ref{balans}, by the joining conditions (possibly adapted
slightly) described in Remark \ref{bind}.

In principle we can view $E_{\omega}$ as a quantum channel. It could be of
interest to see what the physical significance of this map is, considering the
well-known correspondence between completely positive maps and bipartite
states in finite dimensions (see \cite{Ch}, but also \cite{dP} and \cite{J}
for earlier related work) which is of some importance in quantum information
theory. See for example \cite{VV}, \cite{AP} and \cite{JLF}. Some related work
has appeared in infinite dimensions for $B(H)$ and $B(H_{1},H_{2})$ as well
\cite{BQ2}, \cite{GKM}. Also see \cite[Section 1]{BCM} for further remarks.

Transitivity, via $E_{\psi}\circ E_{\omega}$, appears to be a basic ingredient
of the theory of balance, but we have not explored its consequences in this
paper. What are the physical implications or applications of transitivity?

In Section \ref{fb} we only considered standard quantum detailed balance with
respect to a reversing operation. It certainly seems relevant to investigate
if balance can be successfully used to give generalized forms of other conditions.

Furthermore, if balance can indeed be used to formulate certain types of
non-equilibrium steady states, as asked in Section \ref{fb}, then it seems
natural to connect this to entanglement and correlated states more generally.
Can results on entangled states be applied to a coupling $\omega$ of $\mu$ and
$\nu$ to study or classify certain classes of non-equilibrium steady states
$\mu$ (or $\nu$) of quantum systems? Note that the two extremes are the
product state $\omega=\mu\odot\nu^{\prime}$, which is the bipartite state with
no correlations, and the diagonal coupling $\delta_{\mu}$ of $\mu$ with
itself, which can be viewed as the bipartite state which is maximally
entangled while having $\mu$ and $\mu^{\prime}$ as its reduced states, at
least in the situation in Section \ref{Vb}.

We have only studied one example in this paper (in Section \ref{Vb}). To gain
a better understanding of balance, it is important to explore further
examples, especially physical examples, in particular in relation to non-equilibrium.

Lastly we mention the dynamical, weighted and generalized detailed balance
conditions studied in \cite{AI}, \cite{AFQ} and \cite{AFQ2} respectively,
along with a local KMS-condition, which was explored further in \cite{AFid}
and \cite{FidV}. We suspect that it would be of interest to explore if there
are any connections between these, and balance as studied in this paper.

\section*{Acknowledgments}

We thank V. Crismale, F. Fidaleo, J. M. Lindsay, W. A. Majewski and A. G.
Skalski for discussions and remarks related to Sections \ref{afdKar} and
\ref{fb} of the paper. We also thank the anonymous referees for suggestions to
improve certain points in the exposition. This work was supported by the National 
Research Foundation of South Africa.


\begin{thebibliography}{9}

\bibitem {AC}L. Accardi and C. Cecchini, \emph{Conditional expectations in von
Neumann algebras and a theorem of Takesaki}, J. Funct. Anal. \textbf{45}
(1982), 245--273.

\bibitem {AFQ}L. Accardi, F. Fagnola and R. Quezada, \emph{Weighted detailed
balance and local KMS condition for non-equilibrium stationary states}, in:
Perspect. Nonequilibrium Statist. Phys., Busseikenkyu \textbf{97} (2011), 318--356.

\bibitem {AFQ2}L. Accardi, F. Fagnola and R. Quezada, \emph{On three new
principles in non-equilibrium statistical mechanics and Markov semigroups of
weak coupling limit type}, Infin. Dimens. Anal. Quantum Probab. Relat. Top.
\textbf{19} (2016), 1650009.

\bibitem {AFid}L. Accardi and F. Fidaleo, \emph{Bose-Einstein condensation and
condensation of q-particles in equilibrium and nonequilibrium thermodynamics},
Rep. Math. Phys. \textbf{77} (2016), 153--182.

\bibitem {AI}L. Accardi and K. Imafuku, \emph{Dynamical detailed balance and
local KMS condition for non-equilibrium states}, Internat. J. Modern Phys. B
\textbf{18} (2004), 435--467.

\bibitem {Ag}G. S. Agarwal, \emph{Open quantum Markovian systems and the
microreversibility}, Z. Physik \textbf{258} (1973), 409--422.

\bibitem {Al}R. Alicki, \emph{On the detailed balance condition for
non-Hamiltonian systems}, Rep. Math. Phys. \textbf{10} (1976), 249--258.

\bibitem {AH}S. Albeverio and R. H\o egh-Krohn, \emph{Frobenius theory for
positive maps of von Neumann algebras}, Commun. Math. Phys. \textbf{64}
(1978), 83--94.

\bibitem {AP}P. Arrighi and C. Patricot, \emph{On quantum operations as
quantum states}, Ann. Physics \textbf{311} (2004), 26--52.

\bibitem {BCM16}J. P. Bannon, J. Cameron and K. Mukherjee, 
\emph{The modular symmetry of Markov maps},
J. Math. Anal. Appl. \textbf{439} (2016), 701--708.

\bibitem {BCM}J. P. Bannon, J. Cameron and K. Mukherjee, 
\emph{On noncommutative joinings}, 
Int. Math. Res. Not., https://doi.org/10.1093/imrn/rnx024

\bibitem {Bl}B. Blackadar, \emph{Operator Algebras: Theory of C*-Algebras and
von Neumann Algebras}, Encyclopaedia of Mathematical Sciences Volume 122,
Springer-Verlag, Berlin, Heidelberg, 2006.

\bibitem {BQ}J. R. Bola\~{n}os-Servin and R. Quezada, \emph{A cycle
decomposition and entropy production for circulant quantum Markov semigroups},
Infin. Dimens. Anal. Quantum Probab. Relat. Top. \textbf{16} (2013), 1350016.

\bibitem {BQ2}J. R. Bola\~{n}os-Servin and R. Quezada, \emph{The }$\Theta
$\emph{-KMS adjoint and time reversed quantum Markov semigroups}, Infin.
Dimens. Anal. Quantum Probab. Relat. Top. \textbf{18} (2015), 1550016.

\bibitem {CWA}H. J. Carmichael and D. F. Walls, \emph{Detailed balance in open
quantum Markoffian systems}, Z. Physik B \textbf{23} (1976), 299--306.

\bibitem {Ch}M. D. Choi, \emph{Completely positive linear maps on complex
matrices}, Linear Algebra and Appl. \textbf{10} (1975), 285--290.

\bibitem {C}F. Cipriani, \emph{Dirichlet forms and Markovian semigroups on
standard forms of von Neumann algebras}, J. Funct. Anal. \textbf{147} (1997), 259--300.

\bibitem {Con}A. Connes, \emph{Noncommutative geometry}, Academic Press, Inc.,
San Diego, CA, 1994.

\bibitem {dP}J. de Pillis, \emph{Linear transformations which preserve
hermitian and positive semidefinite operators}, Pacific J. Math. \textbf{23}
(1967), 129--137.

\bibitem {DF}J. Derezynski and R. Fruboes, \emph{Fermi golden rule and open
quantum systems}, In: S. Attal et al. (eds.) Open Quantum Systems III, Lecture
Notes in Mathematics 1882, Berlin-Heidelberg-New York: Springer, 2006, pp. 67--116.

\bibitem {Di}J. Dixmier, \emph{Von Neumann algebras}, North-Holland,
Amsterdam, 1981.

\bibitem {D}R. Duvenhage, \emph{Joinings of W*-dynamical systems}, J. Math.
Anal. Appl. \textbf{343} (2008), 175--181.

\bibitem {D2}R. Duvenhage, \emph{Ergodicity and mixing of W*-dynamical systems
in terms of joinings}, Illinois J. Math. \textbf{54} (2010), 543--566.

\bibitem {D3}R. Duvenhage, \emph{Relatively independent joinings and
subsystems of W*-dynamical systems}, Studia Math. \textbf{209} (2012), 21--41.

\bibitem {DS}R. Duvenhage and M. Snyman, \emph{Detailed balance and
entanglement}, J. Phys. A \textbf{48} (2015), 155303.

\bibitem {DSt}R. Duvenhage and A. Str\"oh, \emph{Recurrence and ergodicity in unital $*$-algebras}, 
J. Math. Anal. Appl. \textbf{287} (2003), 430--443.

\bibitem {FR08}F. Fagnola and R. Rebolledo, \emph{Algebraic conditions for
convergence of a quantum Markov semigroup to a steady state}, Infin. Dimens.
Anal. Quantum Probab. Relat. Top. \textbf{11} (2008), 467--474.

\bibitem {FR10}F. Fagnola and R. Rebolledo, \emph{From classical to quantum
entropy production}, in: Proceedings of the 29th Conference on Quantum
Probability and Related Topics, edited by H. Ouerdiane and A. Barhoumi, p 245,
QP--PQ: Quantum Probability and White Noise Analysis, 25, World Scientific,
Hackensack, NJ, 2010.

\bibitem {FR}F. Fagnola and R. Rebolledo, \emph{Entropy production for quantum
Markov semigroups}, Comm. Math. Phys. \textbf{335} (2015), 547--570.

\bibitem {FU}F. Fagnola and V. Umanit\`{a}, \emph{Generators of KMS symmetric
Markov semigroups on B(h) symmetry and quantum detailed balance}, Comm. Math.
Phys. \textbf{298} (2010), 523--547.

\bibitem {Fal}T. Falcone, $L^{2}$\emph{-von Neumann modules, their relative
tensor products and the spatial derivative}, Illinois J. Math. \textbf{44}
(2000), 407--437.

\bibitem {Fel}D. Fellah, \emph{Return to thermal equilibrium}, Lett. Math.
Phys. \textbf{80} (2007), 101--113.

\bibitem {Fid}F. Fidaleo, \emph{An ergodic theorem for quantum diagonal
measures}, Infin. Dimens. Anal. Quantum Probab. Relat. Top. \textbf{12}
(2009), 307--320.

\bibitem {FidV}F. Fidaleo and S. Viaggiu, \emph{A proposal for the
thermodynamics of certain open systems}, Phys. A \textbf{468} (2017), 677--690.

\bibitem {Fri}A. Frigerio, \emph{Stationary states of quantum dynamical
semigroups}, Comm. Math. Phys. 63 (1978), 269--276.

\bibitem {F67}H. Furstenberg, \emph{Disjointness in ergodic theory, minimal
sets, and a problem in Diophantine approximation}, Math. Systems Theory
\textbf{1} (1967), 1--49.

\bibitem {G}E. Glasner, \emph{Ergodic theory via joinings}, Mathematical
Surveys and Monographs 101, American Mathematical Society, Providence, RI, 2003.

\bibitem {GL93}S. Goldstein and J. M. Lindsay, \emph{Beurling-Deny conditions
for KMS-symmetric dynamical semigroups}, C. R. Acad. Sci. Paris S\'{e}r. I
Math. \textbf{317} (1993), 1053--1057.

\bibitem {GL}S. Goldstein and J. M. Lindsay, \emph{KMS-symmetric Markov
semigroups}, Math. Z. \textbf{219} (1995), 591--608.

\bibitem {GKS}V. Gorini, A. Kossakowski and E. C. G. Sudarshan,
\emph{Completely positive dynamical semigroups of N -level systems}, J.
Mathematical Phys. \textbf{17} (1976), 821--825.

\bibitem {GKM}J. Grabowski, M. Ku\'{s} and G. Marmo, \emph{On the relation
between states and maps in infinite dimensions}, Open Syst. Inf. Dyn.
\textbf{14} (2007), 355--370.

\bibitem {J}A. Jamio\l kowski, \emph{Linear transformations which preserve
trace and positive semidefiniteness of operators}, Rep. Mathematical Phys.
\textbf{3} (1972), 275--278.

\bibitem {JLF}M. Jiang, S. Luo and S. Fu, \emph{Channel-state duality}, Phys.
Rev. A \textbf{87} (2013), 022310.

\bibitem {KLP}D. Kerr, H. Li and M. Pichot, \emph{Turbulence, representations
and trace-preserving actions}, Proceedings of the London Mathematical Society
(3) \textbf{100} (2010), 459--484.

\bibitem {KFGV}A. Kossakowski, A. Frigerio, V. Gorini and M. Verri,
\emph{Quantum detailed balance and KMS condition}, Commun. Math. Phys.
\textbf{57} (1977), 97--110.

\bibitem {KS}B. K\"{u}mmerer and K. Schwieger, \emph{Diagonal couplings of
quantum Markov chains}, Infin. Dimens. Anal. Quantum Probab. Relat. Top.
\textbf{19} (2016), 1650012.

\bibitem {L}G. Lindblad, \emph{On the generators of quantum dynamical
semigroups}, Comm. Math. Phys. \textbf{48} (1976), 119--130.

\bibitem {M}W. A. Majewski, \emph{The detailed balance condition in quantum
statistical mechanics}, J. Math. Phys. \textbf{25} (1984), 614--616.

\bibitem {M83}W. A. Majewski, \emph{Return to equilibrium and stability of
dynamics (semigroup dynamics case)}, J. Statist. Phys. \textbf{55} (1989), 417--433.

\bibitem {MS}W. A. Majewski and R. F. Streater, \emph{Detailed balance and
quantum dynamical maps}, J. Phys. A \textbf{31} (1998), 7981--7995.

\bibitem {M01}W. A. Majewski, \emph{Quantum dynamical maps and return to
equilibrium}, Acta Phys. Polon. B \textbf{32} (2001), 1467--1474.

\bibitem {NSZ}C. P. Niculescu, A. Str\"oh and L. Zsid\'{o}, 
\emph{Noncommutative extensions of classical and multiple recurrence theorems}, 
J. Operator Theory \textbf{50} (2003), 3--52.

\bibitem {OPet}M. Ohya and D.Petz, \emph{Quantum entropy and its use}, Texts
and Monographs in Physics. Springer-Verlag, Berlin, 1993.

\bibitem {Pand}C. Pandiscia, \emph{Ergodic dilation of a quantum dynamical system}, 
Confluentes Math. \textbf{6} (2014), 77--91.

\bibitem {Par}K. R. Parthasarathy, \emph{An introduction to quantum stochastic
calculus}, Monographs in Mathematics, Vol. 85, Birkh\"{a}user Verlag, Basel, 1992.

\bibitem {Pet}D. Petz, \emph{A dual in von Neumann algebras with weights},
Quart. J. Math. Oxford Ser. (2) \textbf{35} (1984), 475--483.

\bibitem {P}S. Popa, \emph{Correspondences (preliminary version)}, unpublished manuscript.
http://www.math.ucla.edu/∼popa/popa-correspondences.pdf

\bibitem {Rud}D. J. Rudolph, \emph{An example of a measure preserving map with
minimal self-joinings, and applications}, J. Analyse Math. \textbf{35} (1979), 97--122.

\bibitem {Sa}J.-L. Sauvageot, \emph{Sur le produit tensoriel relatif d'espaces
de Hilbert}, J. Operator Theory \textbf{9} (1983), 237--252.

\bibitem {SaT}J.-L. Sauvageot and J.-P. Thouvenot, \emph{Une nouvelle d\'{e}
finition de l'entropie dynamique des syst\`{e}mes non commutatifs}, Comm.
Math. Phys. \textbf{145} (1992), 411--423.

\bibitem {Sp}H. Spohn, \emph{Approach to equilibrium for completely positive
dynamical semigroups of N -level systems}, Rep. Math. Phys. \textbf{10}
(1976), 189--194.

\bibitem {Sti}W. F. Stinespring, \emph{Positive functions on C*-algebras},
Proc. Amer.Math. Soc., \textbf{6} (1955), 211--216.

\bibitem {T2}M. Takesaki, \emph{Theory of operator algebras. II},
Encyclopaedia of Mathematical Sciences, 125. Operator Algebras and
Non-commutative Geometry, 6. Springer-Verlag, Berlin, 2003.

\bibitem {VV}F. Verstraete and H. Verschelde, \emph{On quantum channels}, arXiv:quant-ph/0202124v2.
\end{thebibliography}
\end{document}